\newcommand{\bep}{\mbox{\boldmath$\varepsilon$}}
\newcommand{\bmu}{\mbox{\boldmath$\mu$}}
\newcommand{\bnu}{\mbox{\boldmath$\nu$}}
\newcommand{\blambda}{\mbox{\boldmath$\lambda$}}
\newcommand{\tphi}{\tilde{\phi}}
\newcommand{\bPsi}{\mathbf{\Psi}}
\newcommand{\bx}{\mathbf{x}}
\newcommand{\bz}{\mathbf{z}}
\newcommand{\bX}{\mathbf{X}}
\newcommand{\bU}{\mathbf{U}}
\newcommand{\bF}{\mathbf{F}}
\newcommand{\bZ}{\mathbf{Z}}
\newcommand{\bH}{\mathbf{H}}
\newcommand{\bA}{\mathbf{A}}
\newcommand{\bB}{\mathbf{B}}
\newcommand{\bC}{\mathbf{C}}
\newcommand{\bS}{\mathbf{S}}
\newcommand{\bc}{\mathbf{c}}
\newcommand{\bi}{\mathbf{1}}
\newcommand{\bI}{\mathbf{I}}
\newcommand{\bV}{\mathbf{V}}
\newcommand{\E}{\mathds{E}}
\newcommand{\R}{\mathds{R}}
\newtheorem{theorem}{Theorem}
\newtheorem{lemma}{Lemma}
\newtheorem{corollary}{Corollary}
\title{Objective Bayesian meta-analysis based on generalized multivariate random effects model}
\author[1]{Olha Bodnar}
\author[2]{Taras Bodnar}
\affil[1]{Unit of Statistics, School of Business, \"Orebro University, SE-70182 \"Orebro, Sweden}
\affil[2]{Department of Mathematics, Stockholm University, SE-10691 Stockholm, Sweden}
\providecommand{\keywords}[1]
{
\small	
\textbf{\textit{Keywords}:} #1
}
\begin{document}

%
\maketitle
\begin{abstract}
Objective Bayesian inference procedures are derived for the parameters of the multivariate random effects model generalized to elliptically contoured distributions. The posterior for the overall mean vector and the between-study covariance matrix is deduced by assigning two noninformative priors to the model parameter, namely the Berger and Bernardo reference prior and the Jeffreys prior, whose analytical expressions are obtained under weak distributional assumptions. It is shown that the only condition needed for the posterior to be proper is that the sample size is larger than the dimension of the data-generating model, independently of the class of elliptically contoured distributions used in the definition of the generalized multivariate random effects model. The theoretical findings of the paper are applied to real data consisting of ten studies about the effectiveness of hypertension treatment for reducing blood pressure where the treatment effects on both the systolic blood pressure and diastolic blood pressure are investigated.
\end{abstract}

\vspace{1cm}
\keywords{Multivariate random-effects model; Jeffreys prior; reference prior; propriety; elliptically contoured distribution; multivariate meta-analysis}

\newpage
\section{Introduction}
Random effects model is a well established quantitative tool when the results of several studies are combined in a single values as it is usually done in meta-analysis and interlaboratory comparison studies which are widely spread in medicine, physics, chemistry, and in many other fields of science (see, e.g.,  \citet{brockwell2001, AdesHiggins2005, viechtbauer2005, Viechtbauer2007,SuttonHiggins2008, riley2010meta, StrawdermanRukhin2010, Cornell2014, novianti2014estimation, ChristianRoeverR2016, bodnar2017bayesian, rukhin2017estimation, rukhin2017research, wynants2018random, michael2019exact, veroniki2019methods}). In most of applications considered in the literature, the aim is to infer the common mean of the measurement results on a single variable, while the inference procedures for the hetorogeneity parameter have recently been derived by \citet{Rukhin2013,langan2017comparative,ma2018performance,bodnar2019} among others. Both methods of the frequentist and Bayesian statistics have been established to deal with the problem and applied in practice (see, \citet{PauleMandel1982, DersimonianLaird1986,lambert2005vague, Guolo2012, Turner2015, bodnar2017bayesian}).

Although statistical theory to analyse the univariate random effects model has been developed and successfully implemented in many applications, new challenges arise when several features are measured simultaneously and have to be combined into a single (multivariate) result. One possibility is based on the application of the univariate random effects to each feature separately. However, important information about the dependence structure present in the joint distribution of the features might be lost in this case. Another approach is to generalize the existent univariate methods to the multivariate case by deriving new statistical procedures which can capture the dependencies present between several features and efficiently combine the (multivariate) results of several studies. Moreover, the assumption of normality, which is commonly imposed in meta-analysis or in interlaboratory comparison studies, is not obviously fulfilled (see, \citet{baker2008new,lee2008flexible, BodnarLinkElster2015, jackson2018should, wang2020evaluation}) and more sophisticated statistical models which take the heavy-tailed behaviour into account should be considered in many applications. This makes an additional difficult in the practical implementation of the random effects model, since only a few observations are present in most cases and the advanced asymptotic methods cannot be longer used. For instance, \citet{Davey2011} pointed out that 75\,\% of meta-analyses reported in the Cochrane Database of Systematic Reviews (CDSR) contained five or fewer studies.

Multivariate random effects model has increased its popularity in the literature recently (see, \citet{gasparrini2012multivariate}, \citet{wei2013bayesian}, \citet{jackson2014refined}, \citet{liu2015multivariate}, \citet{noma2019efficient}, \cite{negeri2020robust}, \citet{jackson2020multivariate}). Statistical inferences for the model parameters, which are the common mean vector and the heterogeneity matrix, were initially derived from the viewpoint of the frequentist statistics. \citet{jackson2010extending} extended the DerSimonian and Laird approach to the multivariate data, while \citet{chen2012method} presented the method based on the restricted maximum likelihood approach. These two procedures from frequentist statistics constitute the commonly used methods in multivariate meta-analysis (see, e.g., \citet{jackson2013matrix}, \citet{schwarzer2015meta}, \citet{jackson2020multivariate}). \citet{paul2010bayesian} derived Bayesian inferences procedures for the parameters of the two-dimensional random effects model based on the Laplace approximation, while \citet{nam2003multivariate} provided results in a multivariate case. Both the papers discussed Bayesian inference obtained when informative priors are employed.

Following  Bernstein-von Mises theorem (see, \citet{BernardoSmith2000}), a prior has a minor impact on the posterior when the sample size is large. When a sample of a small size is available, which is a common situation in practice (cf., \citet{Davey2011}), the application of an incorrectly chosen informative prior can be very influential on the resulting Bayesian inference procedures for the model parameters. This challenge becomes even more pronounced in case of Bayesian inference for parameters of a multivariate model.

The contribution of the paper to the existent literature on multivariate random effects model and multivariate meta-analysis is done in several directions. First, we develop objective Bayesian inference procedures for the parameters of the multivariate random effects model. In particular, we derive the analytical expression of the Fisher information matrix and the two noninformative priors: Berger and Bernardo reference prior and Jeffreys prior. Employing these two priors, the expressions of the corresponding posterior distributions are obtained and the conditions for their propriety are established. Second, we weaken the assumption of multivariate normal distribution and replace it by a general class of multivariate distributions, the so-called elliptically contoured distributions (see, \citet{GuptaVargaBodnar}).

The rest of the paper is structured as follows. In Section \ref{sec:main}, the generalized multivariate random effects model is introduced and two noninformative priors, Berger and Bernardo reference prior and Jeffreys prior, are derived. The posterior distribution for model parameters are obtained in Section \ref{sec:posterior}, while the conditions for posterior propriety are stated in Section \ref{sec:propriety}. In Section \ref{sec:sampling} numerical procedures are developed to draw samples from the derived posterior distributions. Results for two special families of elliptically contoured distributions are provided in Section \ref{sec:special_cases}, while an empirical illustration is presented in Section \ref{sec:emp}. Final remarks are given in Section \ref{sec:sum}. The proofs of technical results are moved to the appendix (Section \ref{sec:app}).

\section{Model and noninformative priors}\label{sec:main}

We consider an extension of the (normal) multivariate random effects model with density function given by
\begin{equation}\label{mult-rem}
p(\bX | \bmu, \bPsi) = \frac{1}{\sqrt{\text{det}(\bPsi \otimes \bI+\bU)}} f\left(\text{vec}(\bX - \bmu \bi^\top)^\top (\bPsi \otimes \bI+\bU)^{-1}\text{vec}(\bX - \bmu \bi) \right),
\end{equation}
where $\bX$ is a $(p \times n)$ matrix, $\bmu$ is a $p$-dimensional vector, $\bPsi$ is a $(p \times p)$ matrix, $\bi$ is a vector of ones, $\bI$ is the identity matrix of an appropriate order, and $\bU$ is a $(pn \times pn)$ deterministic matrix. The symbol $\otimes$ denotes the Kronecker product, while $\text{vec}$ stands for the $vec$ operator. The model \eqref{mult-rem} extends the univariate approach suggested in \citet{BodnarLinkElster2015} to the multivariate case and can also be used when several correlated features obtained from different studies should be combined together.

In a special case of $f(z)=\exp\left(-z/2\right)/(2 \pi)^{pn/2}$ and $\bU=\text{diag}(\bU_1,...,\bU_n)$ with $\bU_i$: $p \times p$ for $i=1,...,n$, the model \eqref{mult-rem} can be written as
\begin{equation}\label{mult-rem-nor}
\bx_i=\bmu+ \blambda_i + \bep_i \quad \text{with} \quad \blambda  \sim \mathcal{N}_p(\mathbf{0},\bPsi) \quad \text{and} \quad \bep \sim \mathcal{N}_p(\mathbf{0},\bU_i) ,
\end{equation}
where $\{\blambda_i\}_{i=1,...,n}$ and $\{\bep_i\}_{i=1,...,n}$ are mutually independent. The presentation \eqref{mult-rem-nor} defines the normal multivariate random effects model. Motivated by the normal multivariate random effects model, it is assumed that $\bU=\text{diag}(\bU_1,...,\bU_n)$ holds in \eqref{mult-rem}.

In many applications in medicine, physics, and chemistry the aim is to infer $\bmu$ given observation matrix $\bX=(\bx_1,\ldots,\bx_n)$. In the applications of these fields the information about the scale matrix $\bU$ is usually provided by the participating organizations (see, \citet{lambert2005vague}, \citet{Turner2015}, \citet{bodnar2014adjustment}, \citet{jackson2020multivariate}). As a result, it is assumed to be a known symmetric positive definite matrix. On the other side, the matrix $\bPsi$ is treated as an unknown quantity with the aim to capture the additional variability in data when several observations taken at different places and times are pooled together. The matrix $\bPsi$ is usually treated as an additional nuisance parameter of the model.

By \eqref{mult-rem}, the conditional distribution of $\bX$ given $\bmu$ and $\bPsi$ belongs the class of the matrix-variate elliptical contoured distributions (see, e.g., \citet{GuptaVargaBodnar} for the definition and properties of this matrix-variate family of distributions). This assertion will be denoted by $\bX| \bmu, \bPsi \sim E_{p,n}(\bmu\bi^\top, \bPsi \otimes \bI+\bU, f)$ ($p\times n$-dimensional matrix-variate elliptically contoured distribution with location matrix $\bmu \bi^\top$, dispersion matrix $(\bPsi \otimes \bI+\bU)$, and density generator $f(.)$. Following the definition of  matrix-variate elliptically contoured distributions (see, \citet[Theorem 2.7]{GuptaVargaBodnar}), the function $f(.)$ should be a non-negative Lebesgue measurable function on $[0,\infty)$ such that
\[\int_0^\infty t^{pn-1} f(t^2)dt <\infty.\]

\subsection{Noninformative priors: Berger and Bernardo reference prior and Jeffreys prior}\label{sec:obj_priors}

In many practical applications no information or only vague information is available about the model parameters. In such cases, especially when additionally the sample size is small or the model dimension is large in comparison to the sample size, the usage of an informative prior can be questionable. As a possible solution to this problem, noninformative priors were developed and employed in the derivation of Bayesian inference. Historically, the first noninformative prior was suggested by \citet{Laplace1812} who proposed to assign a constant prior to the parameters of the model. This prior is also known in the literature as the constant prior or the uniform prior. Although the uniform prior works well when Bayesian inference is determined for location parameters of a statistical model, its application does not obviously lead to good results for other types of model parameters. One of the most crucial critiques of the uniform prior is that it is invariant under transformations of parameters.

As a solution, \citet{Jeffreys1946} proposed to compute a non-informative prior as the square root of the determinant of the Fisher information matrix. Although this approach leads to a prior which is invariant under transformations of model parameters, some difficulties arise in the case of multi-parameter statistical models (see, \citet{held2014applied}). The approach of Jeffreys was further extended in \cite{BergerBernardo1992c} who suggested the so-called reference prior (see, also \citet{BergerBernardoSun2009} for the properties of the reference prior). The idea used in the derivation of the reference prior is based on the sequential maximization of the Shannon mutual information (see, \citet{bodnar2014analytical}) which determines the distance between the prior and posterior.

In Theorem \ref{th1} the analytical expression of the Fisher information matrix is provided, which is then used in the derivation of both the Berger and Bernardo reference prior and the Jeffreys prior for the parameters of the generalized multivariate random effects model \eqref{mult-rem}.

\begin{theorem}\label{th1}
The Fisher information matrix for model (\ref{mult-rem}) with $\bU=\text{diag}(\bU_1,...,\bU_n)$ is given by
\begin{eqnarray}\label{FIM1_rem}
\bF =
	\left(
     \begin{array}{cc}
       \bF_{11} & \mathbf{O} \\
       \mathbf{O} & \bF_{22} \\
     \end{array}
   \right)
\end{eqnarray}
where
\begin{eqnarray}\label{F11}
\bF_{11}&=&\frac{4J_1}{pn}\sum_{i=1}^{n}(\bPsi + \bU_i)^{-1},\\
\bF_{22}
&=&\mathbf{G}_p^\top\Bigg[ \left( \frac{J_2}{2pn+p^2n^2} -\frac{1}{4}\right) \text{vec}\left(\sum_{i=1}^{n}(\bPsi + \bU_i)^{-1}\right)\text{vec}\left(\sum_{j=1}^{n}(\bPsi + \bU_j)^{-1}\right)^\top \nonumber\\
&+& \frac{2J_2}{2pn+p^2n^2}\sum_{i=1}^{n}\left((\bPsi + \bU_i)^{-1} \otimes (\bPsi + \bU_i)^{-1}\right)
\Bigg]\mathbf{G}_p\label{F22}
\end{eqnarray}
with
\begin{eqnarray}\label{Ji}
J_i&=&\E\left((R^2)^i\left(\frac{f^\prime\left(R^2\right)}
{f\left(R^2\right)} \right)^2 \right),
\end{eqnarray}
where $R^2=\text{vec}(\bZ)^\top\text{vec}(\bZ)$ with $\bZ \sim E_{p,n}(\mathbf{O}_{p,n},\bI_{p\times n},f)$ standard matrix-variate elliptically contoured distribution with density generator $f(.)$ and $\mathbf{G}_p$ stands for the duplication matrix.
\end{theorem}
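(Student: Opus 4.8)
The plan is to work directly from the log-likelihood of the matrix-variate elliptical density \eqref{mult-rem} and to exploit two structural facts: that the dispersion matrix $\bSigma:=\bPsi\otimes\bI+\bU$ is block diagonal with $i$-th block $\bPsi+\bU_i$, and that $\bX\mid\bmu,\bPsi$ admits the stochastic representation $\text{vec}(\bX-\bmu\bi^\top)\stackrel{d}{=}R\,\bSigma^{1/2}\mathbf{u}$, where $\mathbf{u}$ is uniform on the unit sphere in $\R^{pn}$, $R\ge 0$ is independent of $\mathbf{u}$ with $R^2\stackrel{d}{=}\text{vec}(\bZ)^\top\text{vec}(\bZ)$ and $\bZ\sim E_{p,n}(\mathbf{O},\bI,f)$ (see \citet{GuptaVargaBodnar}). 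Writing $g:=f'(R^2)/f(R^2)$, $\mathbf{e}:=\text{vec}(\bX-\bmu\bi^\top)$, and $\bH:=\bi\otimes\bI$ so that $\partial\text{vec}(\bmu\bi^\top)/\partial\bmu^\top=\bH$ and $\bH^\top\bSigma^{-1}\bH=\sum_i(\bPsi+\bU_i)^{-1}$, I would first record the two scores: differentiating $\ell=-\tfrac12\log\det\bSigma+\log f(\mathbf{e}^\top\bSigma^{-1}\mathbf{e})$ gives $\partial\ell/\partial\bmu=-2g\,\bH^\top\bSigma^{-1}\mathbf{e}$, which is odd in $\mathbf{e}$, and $\partial\ell/\partial\eta_a=-\tfrac12\text{tr}(\bSigma^{-1}\bSigma_a)-g\,\mathbf{e}^\top\bSigma^{-1}\bSigma_a\bSigma^{-1}\mathbf{e}$ for $\eta=\text{vech}(\bPsi)$ and $\bSigma_a=\partial\bSigma/\partial\eta_a$, which is even in $\mathbf{e}$.

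The vanishing of the off-diagonal block is then immediate: the product of an odd and an even function of $\mathbf{e}$ (equivalently of $\mathbf{u}$, since $R$ is unaffected by $\mathbf{u}\mapsto-\mathbf{u}$) has zero expectation because $\mathbf{u}\stackrel{d}{=}-\mathbf{u}$, giving $\bF_{12}=\bF_{21}^\top=\mathbf{O}$. For $\bF_{11}$ I would substitute the representation into $\E[(\partial\ell/\partial\bmu)(\partial\ell/\partial\bmu)^\top]$ and use independence of $R$ and $\mathbf{u}$ with $\E[\mathbf{u}\mathbf{u}^\top]=(pn)^{-1}\bI$, so that $\E[g^2\mathbf{e}\mathbf{e}^\top]=(pn)^{-1}\E[R^2g^2]\,\bSigma=(pn)^{-1}J_1\bSigma$; the outer $\bSigma^{-1}$ factors cancel one $\bSigma$, and $\bH^\top\bSigma^{-1}\bH=\sum_i(\bPsi+\bU_i)^{-1}$ yields \eqref{F11}.

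The computation of $\bF_{22}=\E[(\partial\ell/\partial\eta_a)(\partial\ell/\partial\eta_b)]$ is where the real work lies. Setting $\mathbf{M}_a:=\bSigma^{-1/2}\bSigma_a\bSigma^{-1/2}$, the score becomes $-\tfrac12\text{tr}(\mathbf{M}_a)-gR^2\,\mathbf{u}^\top\mathbf{M}_a\mathbf{u}$, and expanding the product produces four terms that require three moment identities: the radial identity $\E[R^2g]=-pn/2$, obtained by integration by parts in the radial density $\propto r^{pn-1}f(r^2)$ under the hypothesis $\int_0^\infty t^{pn-1}f(t^2)\,dt<\infty$; the definition $\E[R^4g^2]=J_2$; and the fourth-order sphere moment $\E[(\mathbf{u}^\top\bA\mathbf{u})(\mathbf{u}^\top\bB\mathbf{u})]=\{pn(pn+2)\}^{-1}[\text{tr}(\bA)\text{tr}(\bB)+2\text{tr}(\bA\bB)]$ for symmetric $\bA,\bB$. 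Combining these collapses the four terms into $(\tfrac{J_2}{2pn+p^2n^2}-\tfrac14)\text{tr}(\mathbf{M}_a)\text{tr}(\mathbf{M}_b)+\tfrac{2J_2}{2pn+p^2n^2}\text{tr}(\mathbf{M}_a\mathbf{M}_b)$. The final step is to re-express these traces in matrix form: by block-diagonality, $\text{tr}(\bSigma^{-1}\bSigma_a)=\text{tr}((\sum_i(\bPsi+\bU_i)^{-1})\bPsi_a)$ and $\text{tr}(\bSigma^{-1}\bSigma_a\bSigma^{-1}\bSigma_b)=\sum_i\text{tr}((\bPsi+\bU_i)^{-1}\bPsi_a(\bPsi+\bU_i)^{-1}\bPsi_b)$ with $\bPsi_a=\partial\bPsi/\partial\eta_a$, and then applying the vec identities $\text{tr}(\bA\bPsi_a)=\text{vec}(\bA)^\top\mathbf{G}_p\mathbf{e}_a$ and $\text{tr}(\bB\bPsi_a\bB\bPsi_b)=\mathbf{e}_a^\top\mathbf{G}_p^\top(\bB\otimes\bB)\mathbf{G}_p\mathbf{e}_b$ (valid since $\bPsi$ is symmetric, $\text{vec}(\bPsi)=\mathbf{G}_p\text{vech}(\bPsi)$) assembles the two bracketed terms of \eqref{F22}. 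I expect the main obstacle to be the bookkeeping in this last reduction, namely correctly threading the duplication matrix through the Kronecker structure and verifying the radial and sphere moment identities that pin down the constants $-1/4$, $J_2/(2pn+p^2n^2)$, and $2J_2/(2pn+p^2n^2)$.
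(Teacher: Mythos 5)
Your proposal is correct and follows essentially the same route as the paper's proof: both differentiate the elliptical log-likelihood, kill the off-diagonal block by an odd/even symmetry argument, and reduce $\bF_{11}$ and $\bF_{22}$ to the radial moments $J_1$, $J_2$ and $\E\left[R^2 f'(R^2)/f(R^2)\right]=-pn/2$ (via integration by parts) using the independence of the radius and the direction. The only difference is bookkeeping: the paper standardizes column-by-column and invokes the Kronecker-form fourth moment $\E\left[(\bz\otimes\bz)(\bz^\top\otimes\bz^\top)\right]=\bI_{p^2}+\mathbf{K}_p+\text{vec}(\bI_p)\text{vec}(\bI_p)^\top$, whereas you use the equivalent scalar sphere moment $\E\left[(\mathbf{u}^\top\bA\mathbf{u})(\mathbf{u}^\top\bB\mathbf{u})\right]$ on the full block-diagonal dispersion matrix and pass to the $\mathbf{G}_p$ form only at the end.
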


The results of Theorem \ref{th1} show that the Fisher information matrix depends on the type of elliptical distribution only over the two univariate constants $J_1$ and $J_2$ which are fully determined by density generator $f(.)$. Moreover, the Fisher information matrix $\bF$ is finite if $J_1<\infty$ and $J_2<\infty$. Thus, it is assumed throughout the paper that the density generator $f(.)$ is chosen such that these two conditions are fulfilled. Although the expectations in the definition of $J_1$ and $J_2$ cannot always be analytically computed, they can easily be approximated via simulations by drawing samples from the corresponding standard elliptically contoured distribution. Finally, $J_1$ is present in $\bF_{11}$ as a multiplicative constant and, thus, both the Berger and Bernardo reference prior and the Jeffreys prior depend on $J_2$ only as shown below.

Since $\bF$ is block-diagonal and it does not depend on $\bmu$, the Jeffreys prior for $\bmu$ and $\bPsi$ depends on $\bPsi$ only and it is given by
\begin{equation}\label{prior_J}
\pi_J(\bmu,\bPsi)=\pi_J(\bPsi)\propto \sqrt{\text{det}(\bF)}=\sqrt{\text{det}(\bF_{11})}\sqrt{\text{det}(\bF_{22})},
\end{equation}
where $\bF_{11}$ and $\bF_{22}$ are given in \eqref{F11} and \eqref{F22}, respectively.

Moreover, using the block-diagonal structure of $\bF$ and the fact that $\bF$ does not depend on $\bmu$, we immediately obtain the Berger and Bernardo reference prior $\pi_R(\bmu, \bPsi)$ for the generalized multivariate random effects model \eqref{mult-rem} from the corollary to Proposition 5.29 in \citet{BernardoSmith2000}. This result is summarized in Theorem \ref{th2}.

\begin{theorem}\label{th2}
For model (\ref{mult-rem}) with $\bU=\text{diag}(\bU_1,...,\bU_n)$ and grouping $\{\bmu,\bPsi\}$ (i.e. with $\bPsi$ as the nuisance parameter), the Berger and Bernardo reference prior is given by
\begin{equation}\label{prior_R}
\pi_R(\bmu,\bPsi)=\pi_R(\bPsi)\propto \sqrt{\text{det}(\bF_{22})},
\end{equation}
where $\bF_{22}$ is given in \eqref{F22}.
\end{theorem}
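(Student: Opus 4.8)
The plan is to verify that the generalized multivariate random effects model satisfies the hypotheses of the corollary to Proposition 5.29 in \citet{BernardoSmith2000} and then read off the reference prior directly. Recall that for the ordered grouping $\{\bmu,\bPsi\}$, with $\bmu$ the parameter of interest and $\bPsi$ the nuisance parameter, the Berger and Bernardo algorithm builds the prior in two stages: first the conditional reference prior $\pi_R(\bPsi\mid\bmu)$ for the nuisance parameter with $\bmu$ held fixed, and then the marginal reference prior $\pi_R(\bmu)$ obtained from the model in which $\bPsi$ has been integrated out against $\pi_R(\bPsi\mid\bmu)$. The joint reference prior is the product $\pi_R(\bmu,\bPsi)=\pi_R(\bmu)\,\pi_R(\bPsi\mid\bmu)$.

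First I would exploit the block-diagonal structure established in Theorem \ref{th1}. Because the off-diagonal block of $\bF$ vanishes, the conditional (Jeffreys-type) information governing $\bPsi$ for fixed $\bmu$ is exactly $\bF_{22}$, so $\pi_R(\bPsi\mid\bmu)\propto\sqrt{\text{det}(\bF_{22})}$. For the same reason the effective information for $\bmu$, namely the Schur complement $\bF_{11}-\mathbf{O}\,\bF_{22}^{-1}\mathbf{O}=\bF_{11}$ that appears as the inverse of the leading block of $\bH=\bF^{-1}$, collapses to $\bF_{11}$ alone; equivalently $\bH$ is itself block diagonal. Thus the two stages of the algorithm decouple cleanly, and the whole construction is driven by the two diagonal blocks separately.

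Next I would invoke the second hypothesis, that $\bF$ does not depend on $\bmu$. Since $\bF_{22}$ is a function of $\bPsi$ only, the conditional prior $\pi_R(\bPsi\mid\bmu)\propto\sqrt{\text{det}(\bF_{22})}$ is independent of the conditioning value $\bmu$, so after normalising it over a fixed ($\bmu$-independent) approximating set it reduces to a genuine $\pi_R(\bPsi)$. Feeding this into the marginal step, the reference prior for $\bmu$ is proportional to $\exp\{\tfrac12\int \pi_R(\bPsi)\log\text{det}(\bF_{11})\,d\bPsi\}$; as both $\bF_{11}$ and $\pi_R(\bPsi)$ are free of $\bmu$, the integral is a constant in $\bmu$ and hence $\pi_R(\bmu)\propto 1$. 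Multiplying the two factors yields $\pi_R(\bmu,\bPsi)\propto\sqrt{\text{det}(\bF_{22})}$, exactly as claimed in \eqref{prior_R}.

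The one point requiring care, and the step I expect to be the main obstacle, is that $\sqrt{\text{det}(\bF_{22})}$ need not be integrable over the cone of positive definite matrices $\bPsi$, so the conditional reference prior is in general improper. I would therefore run the construction through an increasing sequence of compact sets $\Theta^{(1)}\subset\Theta^{(2)}\subset\cdots$ exhausting the parameter space, normalise $\pi_R(\bPsi\mid\bmu)$ on each $\Theta^{(k)}$, and pass to the limit of the ratio of priors at a fixed reference point, as prescribed in \citet{BergerBernardo1992c}. Because every quantity entering the construction is independent of $\bmu$, the limiting normalising constants cancel in $\bmu$ and the limit is well defined, leaving the stated form intact; propriety of the resulting posterior is then treated separately in Section \ref{sec:propriety}.
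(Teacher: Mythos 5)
Your proposal is correct and follows essentially the same route as the paper, which derives Theorem \ref{th2} directly from the corollary to Proposition 5.29 in \citet{BernardoSmith2000} using precisely the two facts you identify: the block-diagonality of $\bF$ and its independence of $\bmu$. You simply unpack the two-stage construction (conditional prior for $\bPsi$, constant marginal for $\bmu$, compact-set regularisation) that the cited corollary encapsulates.
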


Under additional restrictions imposed on matrix $\bU$ and density generator $f(.)$, several simplifications of the expressions of both the Jeffreys prior and the reference prior are obtained and are presented in Corollary \ref{cor1} and Corollary \ref{cor2}. For example, when the normal multivariate random effects model \eqref{mult-rem-nor} is assumed, then we get

\begin{corollary}\label{cor1}
For model (\ref{mult-rem-nor}) and grouping $\{\bmu,\bPsi\}$ (i.e. with $\bPsi$ as the nuisance parameter), the following results hold:
\begin{enumerate}[(i)]
\item the Berger and Bernardo reference prior is given by
\begin{equation}\label{prior_R_cor1}
\pi_R(\bmu,\bPsi)=\pi_R(\bPsi)\propto \sqrt{\text{det}\left(\mathbf{G}_p^\top\Bigg[\sum_{i=1}^{n}\left((\bPsi + \bU_i)^{-1} \otimes (\bPsi + \bU_i)^{-1}\right)\Bigg]\mathbf{G}_p\right)} ,
\end{equation}
\item the Jeffreys prior is given by
\begin{equation}\label{prior_J_cor1}
\pi_J(\bmu,\bPsi)=\pi_J(\bPsi)\propto \pi_R(\bPsi) \sqrt{\text{det}\left(\sum_{i=1}^{n}(\bPsi + \bU_i)^{-1}\right)} .
\end{equation}
\end{enumerate}
\end{corollary}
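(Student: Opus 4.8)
The plan is to specialize the general expressions for $\bF_{11}$ and $\bF_{22}$ from Theorem \ref{th1} to the normal density generator $f(z)=\exp(-z/2)/(2\pi)^{pn/2}$ by evaluating the two constants $J_1$ and $J_2$ defined in \eqref{Ji}, and then to read off the two priors directly from \eqref{prior_J} and \eqref{prior_R}. The entire argument reduces to a single clean computation, because the logarithmic derivative of the Gaussian generator is constant.

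First I would compute $f^\prime(z)/f(z)=-1/2$, so that $\left(f^\prime(R^2)/f(R^2)\right)^2=1/4$ regardless of the value of $R^2$. Consequently $J_1=\tfrac14\,\E(R^2)$ and $J_2=\tfrac14\,\E\!\left((R^2)^2\right)$. Under the normal generator the standard matrix-variate distribution $E_{p,n}(\mathbf{O}_{p,n},\bI_{p\times n},f)$ is the standard matrix normal, so $\text{vec}(\bZ)$ is $pn$-dimensional standard Gaussian and $R^2=\text{vec}(\bZ)^\top\text{vec}(\bZ)\sim\chi^2_{pn}$. Using the chi-square moments $\E(R^2)=pn$ and $\E\!\left((R^2)^2\right)=2pn+(pn)^2$, I obtain $J_1=pn/4$ and $J_2=(2pn+p^2n^2)/4$.

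Substituting $J_1=pn/4$ into \eqref{F11} gives $\bF_{11}=\sum_{i=1}^{n}(\bPsi+\bU_i)^{-1}$. The decisive observation is that, for $J_2=(2pn+p^2n^2)/4$, the scalar multiplying the rank-one term in \eqref{F22} becomes $\tfrac{J_2}{2pn+p^2n^2}-\tfrac14=\tfrac14-\tfrac14=0$, so that term disappears entirely, while the coefficient of the Kronecker sum reduces to $\tfrac{2J_2}{2pn+p^2n^2}=\tfrac12$. Hence $\bF_{22}=\tfrac12\,\mathbf{G}_p^\top\big[\sum_{i=1}^{n}\left((\bPsi+\bU_i)^{-1}\otimes(\bPsi+\bU_i)^{-1}\right)\big]\mathbf{G}_p$. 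This exact cancellation of the rank-one contribution is the only nontrivial point, and it is specific to the Gaussian case; for other generators the first term in \eqref{F22} generally survives.

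Finally, part (i) follows from Theorem \ref{th2}: since the constant factor $\tfrac12$ does not affect proportionality, $\pi_R(\bPsi)\propto\sqrt{\text{det}(\bF_{22})}$ is precisely \eqref{prior_R_cor1}. Part (ii) follows from the factorization $\sqrt{\text{det}(\bF)}=\sqrt{\text{det}(\bF_{11})}\,\sqrt{\text{det}(\bF_{22})}$ recorded in \eqref{prior_J}: since $\sqrt{\text{det}(\bF_{11})}=\sqrt{\text{det}\left(\sum_{i=1}^{n}(\bPsi+\bU_i)^{-1}\right)}$ and $\sqrt{\text{det}(\bF_{22})}\propto\pi_R(\bPsi)$, I recover \eqref{prior_J_cor1}. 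I expect no genuine obstacle here beyond bookkeeping; the substantive content is entirely in the observation that the Gaussian value of $J_2$ annihilates the rank-one term of the Fisher information.
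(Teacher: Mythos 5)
Your proposal is correct and follows essentially the same route as the paper's proof: compute $f'(u)/f(u)=-1/2$, deduce $J_2=(2pn+p^2n^2)/4$ from the $\chi^2_{pn}$ second moment so that the rank-one term in $\bF_{22}$ vanishes, and then read off the reference prior from Theorem \ref{th2} and the Jeffreys prior from the block-diagonal factorization of $\det(\bF)$. The only (harmless) addition is your explicit evaluation $J_1=pn/4$, which the paper omits since $J_1$ enters $\bF_{11}$ only as a multiplicative constant.
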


\begin{proof}[Proof of Corollary \ref{cor1}:]
\begin{enumerate}[(i)]
\item Using that $f(u)=\exp(-u/2)/(2\pi)^{pn/2}$, we get that $f^\prime(u)=-\frac{1}{2}f(u)$ and, consequently,
\[J_2=\frac{1}{4}\E\left((\text{vec}(\bZ)^\top\text{vec}(\bZ))^2\right)=\frac{2pn+p^2n^2}{4}\]
Hence, under model \eqref{mult-rem-nor} we obtain
\begin{eqnarray*}
\bF_{22}&=&\mathbf{G}_p^\top\Bigg[\frac{2J_2}{2pn+p^2n^2}\sum_{i=1}^{n}\left((\bPsi + \bU_i)^{-1} \otimes (\bPsi + \bU_i)^{-1}\right)\Bigg]\mathbf{G}_p,
\end{eqnarray*}
which leads to the expression presented in the statement of the corollary.

\item The result follows from the part (i) and the block-diagonality of $\bF$.
\end{enumerate}
\end{proof}

If the generalized multivariate random effects model is assumed to be homoscedastic, that is the equality $\bU_1=...=\bU_n=\bV$ holds, then the Berger and Bernardo reference prior and the Jeffreys prior are given by

\begin{corollary}\label{cor2}
Under the assumption of Theorem \ref{th2}, assume that $\bU_1=...=\bU_n=\bV$. Then
\begin{enumerate}[(i)]
\item the Berger {\rm \&} Bernardo reference prior is given by
\begin{equation}\label{prior_R_cor2}
\pi_R(\bmu,\bPsi)=\pi_R(\bPsi)\propto \text{det}\left(\bPsi + \bV\right)^{-(p+1)/2},
\end{equation}
\item the Jeffreys prior is given by
\begin{equation}\label{prior_J_cor2}
\pi_J(\bmu,\bPsi)=\pi_J(\bPsi)\propto \text{det}\left(\bPsi + \bV\right)^{-(p+2)/2}.
\end{equation}
\end{enumerate}
\end{corollary}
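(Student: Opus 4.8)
The plan is to reduce everything to the single matrix $\bA=(\bPsi+\bV)^{-1}$ and to separate the dependence on $\bPsi$ from the distributional constants. First I would substitute $\bU_1=\dots=\bU_n=\bV$ into \eqref{F22}, so that $\sum_{i=1}^{n}(\bPsi+\bU_i)^{-1}=n\bA$ and $\sum_{i=1}^{n}\big((\bPsi+\bU_i)^{-1}\otimes(\bPsi+\bU_i)^{-1}\big)=n(\bA\otimes\bA)$. Setting $\mathbf{v}=\mathbf{G}_p^\top\text{vec}(\bA)$ and $\bB=\mathbf{G}_p^\top(\bA\otimes\bA)\mathbf{G}_p$, this writes $\bF_{22}$ in the form $\bF_{22}=c_1\mathbf{v}\mathbf{v}^\top+c_2\bB$, where $c_1,c_2$ are scalars depending only on $J_2,n,p$ but not on $\bPsi$.

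For part (i) I would apply the matrix determinant lemma to get $\text{det}(\bF_{22})=c_2^{p(p+1)/2}\,\text{det}(\bB)\,\big(1+(c_1/c_2)\,\mathbf{v}^\top\bB^{-1}\mathbf{v}\big)$. The crucial observation is that $\mathbf{v}^\top\bB^{-1}\mathbf{v}$ is constant in $\bPsi$: using the standard duplication-matrix identity $\bB^{-1}=\mathbf{G}_p^+(\bA^{-1}\otimes\bA^{-1})(\mathbf{G}_p^+)^\top$, the fact that $\mathbf{G}_p\mathbf{G}_p^+$ is the orthogonal projection onto the space of vectorised symmetric matrices (so it fixes both $\text{vec}(\bA)$ and $\text{vec}(\bA^{-1})$), and the identity $(\bA^{-1}\otimes\bA^{-1})\text{vec}(\bA)=\text{vec}(\bA^{-1})$, the quadratic form collapses to $\text{vec}(\bA)^\top\text{vec}(\bA^{-1})=\text{tr}(\bA\bA^{-1})=p$. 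Hence the whole bracket is a $\bPsi$-free constant, and in particular the dependence on $J_2$ drops out. It then remains to invoke the determinant identity $\text{det}\big(\mathbf{G}_p^\top(\bA\otimes\bA)\mathbf{G}_p\big)=2^{p(p-1)/2}\,\text{det}(\bA)^{p+1}$, giving $\text{det}(\bF_{22})\propto\text{det}(\bA)^{p+1}=\text{det}(\bPsi+\bV)^{-(p+1)}$; taking the square root in \eqref{prior_R} yields \eqref{prior_R_cor2}.

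For part (ii) I would use \eqref{prior_J}, namely $\pi_J(\bPsi)\propto\sqrt{\text{det}(\bF_{11})}\,\sqrt{\text{det}(\bF_{22})}=\sqrt{\text{det}(\bF_{11})}\,\pi_R(\bPsi)$. Under homoscedasticity \eqref{F11} reduces to $\bF_{11}=(4J_1/p)\bA$, so $\text{det}(\bF_{11})=(4J_1/p)^p\,\text{det}(\bPsi+\bV)^{-1}$ and $\sqrt{\text{det}(\bF_{11})}\propto\text{det}(\bPsi+\bV)^{-1/2}$. Multiplying by the expression for $\pi_R(\bPsi)$ from part (i) produces $\text{det}(\bPsi+\bV)^{-(p+2)/2}$, which is \eqref{prior_J_cor2}.

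The main obstacle is the evaluation of $\mathbf{v}^\top\bB^{-1}\mathbf{v}$ and of $\text{det}(\bB)$; both rest on properties of the duplication matrix $\mathbf{G}_p$ (its Moore--Penrose inverse, the induced projection onto vectorised symmetric matrices, and the determinant formula $\text{det}(\mathbf{G}_p^\top(\bA\otimes\bA)\mathbf{G}_p)=2^{p(p-1)/2}\text{det}(\bA)^{p+1}$). Once these are available, the homoscedastic structure makes the remaining algebra immediate, and the essential point is simply that the rank-one correction and all the distributional constants contribute only $\bPsi$-independent factors that are absorbed into the proportionality.
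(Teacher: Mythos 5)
Your proposal is correct and follows essentially the same route as the paper: substitute homoscedasticity to reduce $\bF_{22}$ to a constant multiple of $\mathbf{G}_p^\top(\bA\otimes\bA)\mathbf{G}_p$ plus a rank-one term, observe that the rank-one correction and all $J_1,J_2$-dependent factors contribute only $\bPsi$-free constants, and invoke the duplication-matrix determinant identity to obtain $\det(\bPsi+\bV)^{-(p+1)}$, with part (ii) following from $\det(\bF_{11})=(4J_1/p)^p\det(\bPsi+\bV)^{-1}$ exactly as in the paper. The only difference is presentational: you make explicit (via the matrix determinant lemma and the evaluation $\mathbf{v}^\top\bB^{-1}\mathbf{v}=p$ through $\mathbf{G}_p^+$) the step that the paper compresses into a single cited determinant formula.
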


\begin{proof}[Proof of Corollary \ref{cor2}:]
\begin{enumerate}[(i)]
\item Under the condition $\bU_1=...=\bU_n=\bV$, we get
\begin{eqnarray*}
\bF_{22}&=&\mathbf{G}_p^\top\Bigg[ \left( \frac{J_2}{2pn+p^2n^2} -\frac{1}{4}\right) \text{vec}\left(n(\bPsi + \bV)^{-1}\right)\text{vec}\left(n(\bPsi + \bV)^{-1}\right)^\top \\
&+& \frac{2nJ_2}{2pn+p^2n^2}\left((\bPsi + \bV)^{-1} \otimes (\bPsi + \bV)^{-1}\right)\Bigg]\mathbf{G}_p.
\end{eqnarray*}

The application of the properties of determinants involving the duplication matrix $\mathbf{G}_p$ (see, \citet[Section 4.2.3]{lutkepohl1996handbook}), we obtain
\begin{eqnarray*}
&&\text{det}\Bigg\{(\mathbf{G}_p^\top\mathbf{G}_p)^{-1}\mathbf{G}_p^\top\Bigg[ \left( \frac{J_2}{2pn+p^2n^2} -\frac{1}{4}\right) \text{vec}\left(n(\bPsi + \bV)^{-1}\right)\text{vec}\left(n(\bPsi + \bV)^{-1}\right)^\top \\
&+& \frac{2nJ_2}{2pn+p^2n^2}\left((\bPsi + \bV)^{-1} \otimes (\bPsi + \bV)^{-1}\right)\Bigg]\mathbf{G}_p \Bigg\}\\
&=&\left(\frac{2nJ_2}{2pn+p^2n^2}\right)^{\frac{p(p+1)}{2}}\left(1+\left( \frac{1}{2} -\frac{2pn+p^2n^2}{8J_2}\right)pn\right)\text{det}\left(\bPsi + \bV\right)^{-(p+1)},
\end{eqnarray*}
from which the expression of the Berger and Bernardo reference prior follows.

\item The result for the Jeffreys prior follows from part (i) and the equality
\begin{eqnarray*}
\text{det}(\bF_{11})=\left(\frac{4J_1}{p}\right)^p\text{det}(\bPsi + \bV)^{-1}.
\end{eqnarray*}
\end{enumerate}
\end{proof}

It is remarkable that both the Berger and Bernardo reference prior and the Jeffreys prior under the assumption of homoscedasticity do not depend on the type of elliptically contoured distribution. In particular, the formulas from Corollary \ref{cor2} can be used for the normal multivariate random effects model \eqref{mult-rem-nor}.

\section{Posterior}\label{sec:posterior}

In the derivation of the posterior we consider a prior for $\bmu$ and $\bPsi$ which is a function of $\bPsi$ only, that is $\pi(\bPsi)$. Such a prior is an extension of both the Berger and Bernardo reference prior and the Jeffreys prior and, consequently, the derived posterior can be used to deduce the posteriors obtained when the Berger and Bernardo reference prior and the Jeffreys prior are employed as important special cases.

Under such a general prior the joint posterior for $\bmu$ and $\bPsi$ is obtained from \eqref{mult-rem} and it is given by
\begin{eqnarray}\label{post_joint}
\pi(\bmu,\bPsi|\bX) &\propto& \pi(\bPsi)p(\bX | \bmu, \bPsi) \nonumber\\
& = &\frac{\pi(\bPsi)}{\sqrt{\text{det}(\bPsi \otimes \bI+\bU)}} f\left(\text{vec}(\bX - \bmu \bi^\top)^\top (\bPsi \otimes \bI+\bU)^{-1}\text{vec}(\bX - \bmu \bi)\right),
\end{eqnarray}
with $\bU=\text{diag}(\bU_1,...,\bU_n)$. In Theorem \ref{th3} it shown that the conditional reference posterior for $\bmu$ belongs to the family of elliptically contoured distributions.

\begin{theorem}\label{th3}
Under the generalized multivariate random effects model (\ref{mult-rem}) with $\bU=\text{diag}(\bU_1,...,\bU_n)$, the conditional posterior $\pi(\bmu|\bPsi, \bx)$ is given by
\begin{eqnarray}\label{con_posterior_mu}
\pi(\bmu|\bPsi, \bX) &\propto& f_{\bPsi,\bX}\left((\bmu-\tilde{\bx}(\bPsi))^\top\left(\sum_{i=1}^{n}(\bPsi+ \bU_i)^{-1}\right)(\bmu-\tilde{\bx}(\bPsi))\right)\,,
\end{eqnarray}
where
\begin{equation}\label{g_sig-lam_bx}
f_{\bPsi,\bX}\left(u\right)=f\left(\sum_{i=1}^{n} (\bx_i-\tilde{\bx}(\bPsi))^\top (\bPsi+ \bU_i)^{-1}(\bx_i-\tilde{\bx}(\bPsi))+u\right) \qquad u \ge 0\,,
\end{equation}
with
\begin{equation}\label{tilde_x}
\tilde{\bx}(\bPsi)=\left(\sum_{i=1}^{n}(\bPsi+ \bU_i)^{-1}\right)^{-1}\sum_{i=1}^{n}(\bPsi+ \bU_i)^{-1}\bx_i.
\end{equation}
\end{theorem}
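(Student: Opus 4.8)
The plan is to compute the conditional posterior directly from the joint posterior \eqref{post_joint} by discarding every factor that does not involve $\bmu$, and then to reorganize the argument of the density generator $f$ through a matrix completion of the square. Fixing $\bPsi$, the factor $\pi(\bPsi)/\sqrt{\text{det}(\bPsi\otimes\bI+\bU)}$ is a constant in $\bmu$, so
\begin{equation*}
\pi(\bmu|\bPsi,\bX)\propto f\left(\text{vec}(\bX-\bmu\bi^\top)^\top(\bPsi\otimes\bI+\bU)^{-1}\text{vec}(\bX-\bmu\bi^\top)\right).
\end{equation*}
First I would exploit the assumption $\bU=\text{diag}(\bU_1,\ldots,\bU_n)$: the dispersion matrix $\bPsi\otimes\bI+\bU$ is then block diagonal with $i$-th block $\bPsi+\bU_i$, so its inverse is block diagonal with blocks $(\bPsi+\bU_i)^{-1}$. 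Because $\text{vec}(\bX-\bmu\bi^\top)$ stacks the columns $\bx_i-\bmu$, the quadratic form collapses to the single sum $\sum_{i=1}^n(\bx_i-\bmu)^\top(\bPsi+\bU_i)^{-1}(\bx_i-\bmu)$.

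The core of the argument is a precision-weighted completion of the square. Writing $\bA_i=(\bPsi+\bU_i)^{-1}$ and $\bA=\sum_{i=1}^n\bA_i$, and noting that $\bA$ is symmetric positive definite as a sum of inverses of positive definite matrices, I would verify the identity
\begin{equation*}
\sum_{i=1}^n(\bx_i-\bmu)^\top\bA_i(\bx_i-\bmu)=(\bmu-\tilde{\bx}(\bPsi))^\top\bA(\bmu-\tilde{\bx}(\bPsi))+\sum_{i=1}^n(\bx_i-\tilde{\bx}(\bPsi))^\top\bA_i(\bx_i-\tilde{\bx}(\bPsi)),
\end{equation*}
where $\tilde{\bx}(\bPsi)=\bA^{-1}\sum_{i=1}^n\bA_i\bx_i$ is exactly \eqref{tilde_x}. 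This is checked by expanding both sides: the linear term $-2\bmu^\top\sum_i\bA_i\bx_i=-2\bmu^\top\bA\tilde{\bx}(\bPsi)$ forces the minimizer to be $\tilde{\bx}(\bPsi)$, and the two remainders that are constant in $\bmu$ agree because $\sum_i\bA_i\bx_i=\bA\tilde{\bx}(\bPsi)$ yields $\sum_i\bx_i^\top\bA_i\bx_i-\tilde{\bx}(\bPsi)^\top\bA\tilde{\bx}(\bPsi)=\sum_i(\bx_i-\tilde{\bx}(\bPsi))^\top\bA_i(\bx_i-\tilde{\bx}(\bPsi))$.

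Finally I would absorb the $\bmu$-free remainder into the density generator. Defining $f_{\bPsi,\bX}$ as in \eqref{g_sig-lam_bx}, namely as $f$ shifted by the constant $\sum_{i=1}^n(\bx_i-\tilde{\bx}(\bPsi))^\top(\bPsi+\bU_i)^{-1}(\bx_i-\tilde{\bx}(\bPsi))$, and substituting $u=(\bmu-\tilde{\bx}(\bPsi))^\top\bA(\bmu-\tilde{\bx}(\bPsi))$, the right-hand side of the completed-square identity turns the conditional posterior into
\begin{equation*}
\pi(\bmu|\bPsi,\bX)\propto f_{\bPsi,\bX}\left((\bmu-\tilde{\bx}(\bPsi))^\top\Big(\sum_{i=1}^n(\bPsi+\bU_i)^{-1}\Big)(\bmu-\tilde{\bx}(\bPsi))\right),
\end{equation*}
which is precisely \eqref{con_posterior_mu}. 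Since this has the form $g\big((\bmu-\tilde{\bx}(\bPsi))^\top\bA(\bmu-\tilde{\bx}(\bPsi))\big)$ with $\bA=\sum_{i=1}^n(\bPsi+\bU_i)^{-1}$ positive definite, it is an elliptically contoured density in $\bmu$ with location $\tilde{\bx}(\bPsi)$ and dispersion $\bA^{-1}$. I expect the only delicate points to be bookkeeping ones: confirming that the Kronecker-plus-diagonal dispersion inverts block-wise under the column-stacking $\text{vec}$ convention, and checking that the constant remainder of the completion of the square is exactly the shift appearing in \eqref{g_sig-lam_bx}; the algebra itself is routine once that identity is in place.
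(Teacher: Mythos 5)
Your proposal is correct and follows essentially the same route as the paper's own proof: reduce the joint posterior to the $\bmu$-dependent factor, use the block-diagonal structure of $\bPsi\otimes\bI+\bU$ to write the quadratic form as $\sum_{i=1}^n(\bx_i-\bmu)^\top(\bPsi+\bU_i)^{-1}(\bx_i-\bmu)$, and complete the square around the precision-weighted mean $\tilde{\bx}(\bPsi)$, absorbing the $\bmu$-free remainder into the density generator. Your explicit verification of the completed-square identity is slightly more detailed than the paper's, which simply asserts it, but the argument is the same.
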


\begin{proof}[Proof of Theorem \ref{th3}:]
It holds that
\begin{eqnarray*}
&&\text{vec}(\bX - \bmu \bi^\top)^\top (\bPsi \otimes \bI+\bU)^{-1}\text{vec}(\bX - \bmu \bi)=
\sum_{i=1}^{n}(\bx_i - \bmu)^\top(\bPsi+ \bU_i)^{-1}(\bx_i - \bmu)\\
&=&(\bmu-\tilde{\bx}(\bPsi))^\top\left(\sum_{i=1}^{n}(\bPsi+ \bU_i)^{-1}\right)(\bmu-\tilde{\bx}(\bPsi))
+ \sum_{i=1}^{n} (\bx_i-\tilde{\bx}(\bPsi))^\top (\bPsi+ \bU_i)^{-1}(\bx_i-\tilde{\bx}(\bPsi))
\end{eqnarray*}
with $\tilde{\bx}(\bPsi)$ as in \eqref{tilde_x}.

Hence,
\begin{eqnarray*}
&&\pi(\bmu|\bPsi, \bX)\propto \frac{\pi(\bPsi)}{\sqrt{\text{det}(\bPsi \otimes \bI+\bU)}}\\
&\times& f\left((\bmu-\tilde{\bx}(\bPsi))^\top\left(\sum_{i=1}^{n}(\bPsi+ \bU_i)^{-1}\right)(\bmu-\tilde{\bx}(\bPsi))
+ \sum_{i=1}^{n} (\bx_i-\tilde{\bx}(\bPsi))^\top (\bPsi+ \bU_i)^{-1}(\bx_i-\tilde{\bx}(\bPsi))\right)\\
&\propto&
f_{\bPsi,\bX}\left((\bmu-\tilde{\bx}(\bPsi))^\top\left(\sum_{i=1}^{n}(\bPsi+ \bU_i)^{-1}\right)(\bmu-\tilde{\bx}(\bPsi))\right)\,,
\end{eqnarray*}
where $f_{\bPsi,\bX}(.)$ is given in (\ref{g_sig-lam_bx}).
\end{proof}

As a straightforward consequence of the result in Theorem \ref{th3} by substituting $\pi(\bPsi)$ with $\pi_R(\bPsi)$ and $\pi_J(\bPsi)$ from \eqref{prior_R} and \eqref{prior_J} respectively, we get the conditional reference posterior $\pi(\bmu|\bPsi, \bx)$ for the generalized multivariate random effects model (\ref{mult-rem}) and the conditional posterior when the Jeffreys prior is used.

Moreover, from the proof of Theorem \ref{th3} we also get the marginal posterior for $\bPsi$ as given by

\begin{corollary}\label{cor3}
Under the generalized multivariate random effects model (\ref{mult-rem}) with $\bU=\text{diag}(\bU_1,...,\bU_n)$, the conditional posterior $\pi(\bPsi| \bX)$ is given by
\begin{eqnarray}\label{marg_posterior_Psi}
\pi(\bPsi| \bX) &\propto& \frac{\pi(\bPsi)}{\sqrt{\text{det}(\sum_{i=1}^{n}(\bPsi + \bU_i)^{-1})}\prod_{i=1}^{n}\sqrt{\text{det}(\bPsi + \bU_i)}}\nonumber\\
&\times&\int_{0}^{\infty} u^{p-1} f\left(u^2+\sum_{i=1}^{n} (\bx_i-\tilde{\bx}(\bPsi))^\top (\bPsi+ \bU_i)^{-1}(\bx_i-\tilde{\bx}(\bPsi))\right) \mathbf{d}u\,,
\end{eqnarray}
\end{corollary}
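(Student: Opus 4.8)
The plan is to obtain the marginal posterior $\pi(\bPsi|\bX)$ by integrating the joint posterior \eqref{post_joint} over $\bmu$, reusing the quadratic-form decomposition already established in the proof of Theorem \ref{th3}. First I would exploit the block-diagonal structure $\bU=\text{diag}(\bU_1,\ldots,\bU_n)$, which makes $\bPsi\otimes\bI+\bU$ block-diagonal with blocks $\bPsi+\bU_i$ and hence yields $\text{det}(\bPsi\otimes\bI+\bU)=\prod_{i=1}^{n}\text{det}(\bPsi+\bU_i)$. This separates the second determinantal factor in \eqref{marg_posterior_Psi} from the $\bmu$-integral at once.

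Writing $\bS=\sum_{i=1}^{n}(\bPsi+\bU_i)^{-1}$ and $c(\bPsi)=\sum_{i=1}^{n}(\bx_i-\tilde{\bx}(\bPsi))^\top(\bPsi+\bU_i)^{-1}(\bx_i-\tilde{\bx}(\bPsi))$, the decomposition from the proof of Theorem \ref{th3} rewrites the argument of $f$ as $(\bmu-\tilde{\bx}(\bPsi))^\top\bS(\bmu-\tilde{\bx}(\bPsi))+c(\bPsi)$, so that $\pi(\bPsi|\bX)\propto \pi(\bPsi)\,\prod_{i=1}^{n}\text{det}(\bPsi+\bU_i)^{-1/2}\int_{\R^p} f\big((\bmu-\tilde{\bx}(\bPsi))^\top\bS(\bmu-\tilde{\bx}(\bPsi))+c(\bPsi)\big)\,\mathbf{d}\bmu$. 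The core of the argument is this $p$-dimensional integral over $\bmu$. I would whiten the quadratic form through the substitution $\bz=\bS^{1/2}(\bmu-\tilde{\bx}(\bPsi))$, whose Jacobian contributes the factor $\text{det}(\bS)^{-1/2}=\text{det}\big(\sum_{i=1}^{n}(\bPsi+\bU_i)^{-1}\big)^{-1/2}$, precisely the first factor in the denominator of \eqref{marg_posterior_Psi}, and reduces the integrand to $f(\bz^\top\bz+c(\bPsi))$.

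Finally I would pass to polar coordinates in $\R^p$, writing $u=(\bz^\top\bz)^{1/2}\ge 0$, so that $\mathbf{d}\bz=u^{p-1}\,\mathbf{d}u\,\mathbf{d}\bw$ with $\bw$ ranging over the unit sphere. Since the integrand depends on $\bz$ only through $\bz^\top\bz=u^2$, the angular integration factors out as the surface area of the unit sphere in $\R^p$, namely $2\pi^{p/2}/\Gamma(p/2)$, a constant independent of $\bPsi$ that is absorbed into the proportionality symbol; what remains is the radial integral $\int_0^\infty u^{p-1} f(u^2+c(\bPsi))\,\mathbf{d}u$ appearing in the statement. Collecting the three factors then reproduces \eqref{marg_posterior_Psi}. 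There is no deep obstacle in this computation; the only point demanding care is checking that every constant discarded in the whitening step and the polar reduction truly does not depend on $\bPsi$ --- which is the case, since the sphere's surface area depends on $p$ alone while the $\bPsi$-dependent Jacobian $\text{det}(\bS)^{-1/2}$ has been retained explicitly.
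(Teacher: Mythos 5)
Your proposal is correct and follows essentially the same route as the paper: the paper also whitens the quadratic form via a linear change of variables with Jacobian $\text{det}\big(\sum_{i=1}^{n}(\bPsi+\bU_i)^{-1}\big)^{-1/2}$ and then reduces the resulting spherical integral to the radial one, citing $\int_{\R^p} g(\mathbf{t}^\top\mathbf{t})\,\mathbf{d}\mathbf{t}=\frac{2\pi^{p/2}}{\Gamma(p/2)}\int_0^\infty u^{p-1}g(u^2)\,\mathbf{d}u$ where you carry out the polar-coordinate computation explicitly. The only difference is cosmetic: you make the sphere-area constant explicit rather than invoking the lemma from \citet{GuptaVargaBodnar}.
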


\begin{proof}[Proof of Corollary \ref{cor3}:]
Using the transformation $\bnu=\left(\sum_{i=1}^{n}\bPsi +\bU_i\right)^{-1}(\bmu-\tilde{\bx}(\bPsi))$ with the Jacobian $1/\sqrt{\text{det}(\sum_{i=1}^{n}(\bPsi + \bU_i)^{-1})}$, the marginal posterior for $\bPsi$ is expressed as
\begin{eqnarray*}
\pi(\bPsi| \bX) &\propto& \int_{\R^p}  \frac{\pi(\bPsi)}{\prod_{i=1}^{n}\sqrt{\text{det}(\bPsi + \bU_i)}} f\left(\text{vec}(\bX - \bmu \bi^\top)^\top (\bPsi \otimes \bI+\bU)^{-1}\text{vec}(\bX - \bmu \bi)\right)\mathbf{d} \bmu\\
&=&\frac{\pi(\bPsi)}{\sqrt{\text{det}(\sum_{i=1}^{n}(\bPsi + \bU_i)^{-1})}\prod_{i=1}^{n}\sqrt{\text{det}(\bPsi +\bU_i)}}\\
&\times& \int_{\R^p}f\left(\bnu^\top \bnu+\sum_{i=1}^{n} (\bx_i-\tilde{\bx}(\bPsi))^\top (\bPsi+ \bU_i)^{-1}(\bx_i-\tilde{\bx}(\bPsi))\right)\mathbf{d} \bnu.
\end{eqnarray*}

The rest of the proof follows by noting that (cf., \citet[Lemma 2.1]{GuptaVargaBodnar}):
\[\int_{\R^p} g(\mathbf{t}^\top \mathbf{t}) \mathbf{d} \mathbf{t}= \frac{2\pi^{p/2}}{\Gamma(p/2)} \int_{0}^{\infty} u^{p-1} g(u^2)du,\]
where $\Gamma(.)$ stands for the gamma function.
\end{proof}

The posterior mean vector and the posterior covariance matrix of $\bmu$ are derived from Theorem \ref{th3} by using the rule of iterated expectations. They are given by
\begin{eqnarray}\label{mu_post_mean}
\E\left(\bmu|\bX\right)&=&\E(\E(\bmu|\bPsi,\bX)|\bX)=\E(\tilde{\bx}(\bPsi)|\bX) \nonumber\\
&=&\E\left(\left(\sum_{i=1}^{n}(\bPsi+ \bU_i)^{-1}\right)^{-1}\sum_{i=1}^{n}(\bPsi+ \bU_i)^{-1}\bx_i\Bigg|\bX\right)
\end{eqnarray}
and
\begin{eqnarray}\label{mu_post_var}
\mathds{V}ar\left(\bmu|\bX\right)&=&\E(\mathds{V}ar(\bmu|\bPsi,\bX)|\bX)+\mathds{V}ar(\E(\bmu|\bPsi,\bX)|\bX) \nonumber\\
&=&\E\left(C(\bPsi)\left(\sum_{i=1}^{n}(\bPsi+ \bU_i)^{-1}\right)^{-1}\Bigg|\bX\right)\nonumber\\
&+&\mathds{V}ar\left(\left(\sum_{i=1}^{n}(\bPsi+ \bU_i)^{-1}\right)^{-1}\sum_{i=1}^{n}(\bPsi+ \bU_i)^{-1}\bx_i\Bigg|\bX\right),
\end{eqnarray}
where
\begin{equation}\label{C_bPsi}
C(\bPsi)=\E\left(R_{\bPsi,\bX}^2\right)
\quad \text{with} \quad
R_{\bPsi,\bX}^2=\mathbf{z}_{\bPsi,\bX}^\top \mathbf{z}_{\bPsi,\bX},
\end{equation}
where $\mathbf{z}_{\bPsi,\bX}\sim E_p(\mathbf{0}_p,\mathbf{I}_p,f_{\bPsi,\bX})$. 

\subsection{Propriety}\label{sec:propriety}

For the derivation of the conditions required for the propriety of the posterior $\pi(\bmu,\bPsi|\bX)$, we use the following lemma:

\begin{lemma}\label{lem1}
Let $\bA>0$ be a symmetric and positive definite matrix, and let $\bB \ge 0$ be a symmetric and positive semidefinite matrix. Then
$$\bA^{-1} \otimes \bA^{-1} -(\bA+\bB)^{-1} \otimes (\bA+\bB)^{-1}\ge 0$$
is positive semidefinite.
\end{lemma}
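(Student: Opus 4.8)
The plan is to reduce the statement to two standard monotonicity facts and one telescoping identity. First I would exploit the hypothesis on $\bB$. Since $\bB\ge 0$, we have $\bA+\bB\ge\bA>0$ in the Loewner order, and the antitonicity (order-reversing property) of matrix inversion on the cone of positive definite matrices then gives $(\bA+\bB)^{-1}\le\bA^{-1}$. Writing $\bC:=(\bA+\bB)^{-1}$ for brevity, this means $0<\bC\le\bA^{-1}$, and in particular $\bA^{-1}-\bC\ge 0$ is symmetric positive semidefinite. This step turns the problem into the general claim that $\bC\otimes\bC\le\bA^{-1}\otimes\bA^{-1}$ whenever $0<\bC\le\bA^{-1}$, so the specific form of $\bA$ and $\bB$ plays no further role.

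The key algebraic step is the telescoping identity for Kronecker products, which follows from bilinearity of $\otimes$:
\[
\bA^{-1}\otimes\bA^{-1}-\bC\otimes\bC
=\bA^{-1}\otimes(\bA^{-1}-\bC)+(\bA^{-1}-\bC)\otimes\bC.
\]
I would then argue that each summand on the right is positive semidefinite. In the first term the factors are $\bA^{-1}>0$ and $\bA^{-1}-\bC\ge 0$; in the second they are $\bA^{-1}-\bC\ge 0$ and $\bC>0$. Since the Kronecker product of two symmetric positive semidefinite matrices is again symmetric positive semidefinite (symmetry is preserved because $(\mathbf{M}\otimes\mathbf{N})^\top=\mathbf{M}^\top\otimes\mathbf{N}^\top$, and its eigenvalues are the pairwise products of the nonnegative eigenvalues of the two factors), both summands are positive semidefinite. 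As the sum of positive semidefinite matrices is positive semidefinite, the difference $\bA^{-1}\otimes\bA^{-1}-(\bA+\bB)^{-1}\otimes(\bA+\bB)^{-1}$ is positive semidefinite, which is the assertion.

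I do not anticipate a genuinely hard obstacle here; the proof rests on three well-known ingredients (antitonicity of inversion, preservation of positive semidefiniteness under $\otimes$, and closure of the cone under addition). The only mildly delicate point is the telescoping decomposition: the naive attempt to compare $\bA^{-1}\otimes\bA^{-1}$ and $\bC\otimes\bC$ directly does not reveal the sign, whereas inserting the mixed term $\bA^{-1}\otimes\bC$ and splitting symmetrically is exactly what exposes two manifestly positive semidefinite pieces. Note also that $\bB$ need not be invertible; the degenerate case $\bB=\mathbf{O}$ simply makes the difference equal to the zero matrix, consistent with the claim.
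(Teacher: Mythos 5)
Your proof is correct. It uses the same two ingredients as the paper --- antitonicity of matrix inversion on the positive definite cone and monotonicity of the Kronecker square $\mathbf{M}\mapsto\mathbf{M}\otimes\mathbf{M}$ --- but applies them in the opposite order. The paper first proves $(\bA+\bB)\otimes(\bA+\bB)-\bA\otimes\bA\ge 0$ directly, by evaluating the quadratic form at $\bc=\text{vec}(\bC)$, rewriting it as a difference of traces, and telescoping inside the trace to obtain $\text{tr}\big((\bA+\bB)^{1/2}\bC\bB\bC^\top(\bA+\bB)^{1/2}\big)+\text{tr}\big(\bA^{1/2}\bC^\top\bB\bC\bA^{1/2}\big)\ge 0$; only then does it invert the two $p^2\times p^2$ Kronecker matrices via Harville's Theorem 18.3.4, using $(\bM\otimes\bM)^{-1}=\bM^{-1}\otimes\bM^{-1}$. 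You instead invert at the $p\times p$ level first, reducing to $0<\bC\le\bA^{-1}\Rightarrow\bC\otimes\bC\le\bA^{-1}\otimes\bA^{-1}$, and prove that by the explicit Kronecker telescoping $\bA^{-1}\otimes\bA^{-1}-\bC\otimes\bC=\bA^{-1}\otimes(\bA^{-1}-\bC)+(\bA^{-1}-\bC)\otimes\bC$ together with the fact that a Kronecker product of positive semidefinite matrices is positive semidefinite. The underlying telescoping is the same trick in both arguments; yours makes the structural fact about Kronecker products of positive semidefinite factors explicit and avoids the trace manipulation (where, incidentally, the paper's displayed computation contains a typographical slip in the add-and-subtract line), while the paper's version avoids needing the eigenvalue characterization of $\mathbf{M}\otimes\mathbf{N}$ and invokes the inversion inequality only once, at the very end. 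Either route is complete and rigorous.
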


The proof of Lemma~\ref{lem1} is given in the appendix. In Theorem~\ref{th4} we formulate the conditions required for the propriety of the posterior.
\begin{theorem}\label{th4}
Consider the generalized multivariate random effects model (\ref{mult-rem}) with $\bU=\text{diag}(\bU_1,...,\bU_n)$. Let $f(u)$ be a non-increasing function in $u \ge 0$ and $\frac{J_2}{2pn+p^2n^2} -\frac{1}{4} \le 0$ where $J_2$ is defined in \eqref{Ji}.
\begin{enumerate}
\item If $n \ge p$, then the posterior $\pi(\bmu,\bPsi| \bX)$ derived under the Jeffreys prior $\pi_J(\bPsi)$ is proper.
\item If $n\ge p+1$, then the posterior $\pi(\bmu,\bPsi| \bX)$ derived under the Berger and Bernardo reference prior $\pi_R(\bPsi)$ is proper.
\end{enumerate}

\end{theorem}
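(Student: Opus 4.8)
The plan is to integrate out $\bmu$ first and reduce the whole question to the integrability of the marginal posterior $\pi(\bPsi|\bX)$ over the cone of symmetric positive definite matrices. Since the integrand is nonnegative, Tonelli's theorem gives that $\pi(\bmu,\bPsi|\bX)$ is proper if and only if $\int\pi(\bPsi|\bX)\,\mathbf{d}\bPsi<\infty$, where $\pi(\bPsi|\bX)$ is exactly the expression \eqref{marg_posterior_Psi} from Corollary~\ref{cor3}. The first step is to remove the inner $u$-integral: because $f$ is non-increasing and $\sum_{i=1}^n(\bx_i-\tilde{\bx}(\bPsi))^\top(\bPsi+\bU_i)^{-1}(\bx_i-\tilde{\bx}(\bPsi))\ge 0$, one has $f(u^2+\cdot)\le f(u^2)$, so the inner integral is bounded above, uniformly in $\bPsi$, by the finite constant $\int_0^\infty u^{p-1}f(u^2)\,\mathbf{d}u$. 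It therefore suffices to bound $\pi(\bPsi)\big/\big(\sqrt{\det(\sum_i(\bPsi+\bU_i)^{-1})}\prod_i\sqrt{\det(\bPsi+\bU_i)}\big)$ and to show it is integrable.

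The heart of the proof is an upper bound on $\sqrt{\det\bF_{22}}$ that collapses the study-specific matrices $\bU_i$ into a single matrix. Since $\frac{J_2}{2pn+p^2n^2}-\frac14\le 0$ by hypothesis, the rank-one term in \eqref{F22} is negative semidefinite; dropping it and using that a congruence by $\mathbf{G}_p$ together with Loewner-monotonicity of the determinant preserves inequalities between positive semidefinite matrices, I obtain $\det\bF_{22}\le c\,\det\big(\mathbf{G}_p^\top[\sum_i(\bPsi+\bU_i)^{-1}\otimes(\bPsi+\bU_i)^{-1}]\mathbf{G}_p\big)$. Now set $\lambda_{\min}=\min_i\lambda_{\min}(\bU_i)>0$ and apply Lemma~\ref{lem1} with $\bA=\bPsi+\lambda_{\min}\bI\succ 0$ and $\bB=\bU_i-\lambda_{\min}\bI\succeq 0$: this yields $(\bPsi+\bU_i)^{-1}\otimes(\bPsi+\bU_i)^{-1}\preceq(\bPsi+\lambda_{\min}\bI)^{-1}\otimes(\bPsi+\lambda_{\min}\bI)^{-1}$ for every $i$. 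This is exactly where Lemma~\ref{lem1} is indispensable, since the plain bound $(\bPsi+\bU_i)^{-1}\preceq(\bPsi+\lambda_{\min}\bI)^{-1}$ does not by itself imply the corresponding inequality for the Kronecker squares. Summing over $i$ and applying the duplication-matrix determinant identity used in the proof of Corollary~\ref{cor2} gives the clean bound $\sqrt{\det\bF_{22}}\le C\,\det(\bPsi+\lambda_{\min}\bI)^{-(p+1)/2}$.

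For the denominators I use the two-sided bound $\lambda_{\min}\bI\preceq\bU_i\preceq\lambda_{\max}\bI$, where $\lambda_{\max}=\max_i\lambda_{\max}(\bU_i)$. This gives $\sum_i(\bPsi+\bU_i)^{-1}\succeq n(\bPsi+\lambda_{\max}\bI)^{-1}$ and $\prod_i\det(\bPsi+\bU_i)\ge\det(\bPsi+\lambda_{\min}\bI)^n$, while the elementary inequality $\det(\bPsi+\lambda_{\max}\bI)\le(\lambda_{\max}/\lambda_{\min})^p\det(\bPsi+\lambda_{\min}\bI)$ lets me express everything through $\det(\bPsi+\lambda_{\min}\bI)$. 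Substituting $\pi_R(\bPsi)=\sqrt{\det\bF_{22}}$ produces $\pi_R(\bPsi|\bX)\le C\,\det(\bPsi+\lambda_{\min}\bI)^{-(p+n)/2}$; for the Jeffreys prior the extra factor $\sqrt{\det\bF_{11}}\propto\sqrt{\det(\sum_i(\bPsi+\bU_i)^{-1})}$ cancels the matching denominator, leaving $\pi_J(\bPsi|\bX)\le C\,\det(\bPsi+\lambda_{\min}\bI)^{-(p+n+1)/2}$.

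Finally I invoke the matrix-variate beta integral of the second kind, $\int_{\bPsi\succ 0}\det(\bPsi+c\bI)^{-\gamma}\,\mathbf{d}\bPsi<\infty$ if and only if $\gamma>p$ (after rescaling $\bPsi\mapsto c\bPsi$ this reduces to the standard $\int\det(\bI+\bPsi)^{-\gamma}\mathbf{d}\bPsi$). The exponent $(p+n)/2$ for the reference prior exceeds $p$ exactly when $n\ge p+1$, and the exponent $(p+n+1)/2$ for the Jeffreys prior exceeds $p$ exactly when $n\ge p$, which are the two stated thresholds. No difficulty arises on the boundary of the cone, where some eigenvalues of $\bPsi$ vanish, because $\bU_i\succ 0$ keeps every determinant in the bound bounded away from $0$ and $\infty$ there, so the only genuine issue is convergence at infinity, which the beta integral settles. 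I expect the main obstacle to be the second step: producing an upper bound on $\det\bF_{22}$ that is both sharp enough (the exponent $p+1$ must be exact to land on $n\ge p+1$ rather than a weaker threshold) and expressible as a single power of $\det(\bPsi+\lambda_{\min}\bI)$, which is where the sign condition on $J_2$ and Lemma~\ref{lem1} must be combined carefully.
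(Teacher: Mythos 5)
Your proof is correct and follows essentially the same route as the paper's: the same use of Lemma~\ref{lem1} to dominate each Kronecker square $(\bPsi+\bU_i)^{-1}\otimes(\bPsi+\bU_i)^{-1}$ by $(\bPsi+\tilde{\phi}\bI)^{-1}\otimes(\bPsi+\tilde{\phi}\bI)^{-1}$, the same dropping of the negative semidefinite rank-one term via the sign condition on $J_2$, the same duplication-matrix determinant identity producing the exponent $(p+1)$, and the same final reduction to a matrix-variate beta type II integral yielding the thresholds $n\ge p$ and $n\ge p+1$. The only cosmetic deviations are that you integrate out $\bmu$ first through the radial form of Corollary~\ref{cor3} and control $\text{det}\left(\sum_{i=1}^n(\bPsi+\bU_i)^{-1}\right)$ via the Loewner bound $\sum_{i=1}^n(\bPsi+\bU_i)^{-1}\succeq n(\bPsi+\lambda_{\max}\bI)^{-1}$ plus a constant-factor comparison of determinants, whereas the paper bounds the joint kernel directly and uses the arithmetic--geometric mean inequality for determinants; both give identical exponents.
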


\begin{proof}[Proof of Theorem \ref{th4}:]
First, we derive an upper bound for the determinant of two diagonal blocks of the Fisher information matrix derived in Theorem~\ref{th1}. Let $\tphi=\min_{i=1,...,n}\min_{j=1,...,p} \phi_{i,j}$ denote the minimum of the eigenvalues $\{\phi_{i,j}\}_{j=1,...,p}$ computed for the matrices $\bU_i$, $i=1,...,n$. Since $\bPsi+\tphi\bI$ and $\bU_i-\tphi\bI$ are symmetric and positive semi-definite
%
%
we obtain from Lemma~\ref{lem1}.(ii) that
\[(\bPsi+\tphi\bI)^{-1}\otimes (\bPsi+\tphi\bI)^{-1} - (\bPsi+\bU_i)^{-1}\otimes (\bPsi+\bU_i)^{-1} \ge \mathbf{0},\]
for $i=1,...,n$ and, hence,
\[n(\bPsi+\tphi\bI)^{-1}\otimes (\bPsi+\tphi\bI)^{-1} - \sum_{i=1}^n(\bPsi+\bU_i)^{-1}\otimes (\bPsi+\bU_i)^{-1} \ge \mathbf{0}.\]
The inequality ${J_2}/{(2pn+p^2n^2)} -{1}/{4}\le 0$ yields
\begin{eqnarray*}
\text{det}(\bF_{22})&\le& \text{det}\Bigg(\frac{2J_2}{2pn+p^2n^2} \mathbf{G}_p^\top\sum_{i=1}^{n}\left((\bPsi + \bU_i)^{-1} \otimes (\bPsi + \bU_i)^{-1}\right)\mathbf{G}_p
\Bigg),
\end{eqnarray*}
while the application of Theorem 18.1.6 of \citet{Harville97} implies
\begin{eqnarray}
\text{det}(\bF_{22})&\le& \text{det}\Bigg(\frac{2nJ_2}{2pn+p^2n^2} \mathbf{G}_p^\top\left((\bPsi+\tphi\bI)^{-1} \otimes (\bPsi+\tphi\bI)^{-1}\right)\mathbf{G}_p
\Bigg) \nonumber \\
&=& \left(\frac{2J_2}{2p+p^2n} \right)^{p(p+1)/2} \text{det}(\bPsi+\tphi\bI)^{-(p+1)}
\label{approx_F22}.
\end{eqnarray}

Let
$$
c_{11}(p,n)=\left(\frac{4J_1}{p}\right)^{p/2} \quad \text{and} \quad
c_{22}(p,n)=\left(\frac{2J_2}{2p+p^2n} \right)^{p(p+1)/2}.
$$
Using that
\begin{equation}\label{det_prod}
\left(\text{det}(\bPsi \otimes \bI+\bU)\right)^{-1}=\prod_{i=1}^n\text{det}\left((\bPsi +\bU_i)^{-1}\right)
\le \text{det}\left((\bPsi +\tphi\bI)^{-1}\right)^n=\text{det}(\bPsi +\tphi\bI)^{-n}
\end{equation}
and that $f(.)$ is non-increasing, we get that the kernel of the posterior $\pi(\bmu,\bPsi| \bX)$ derived under the Jeffreys prior is bounded by
{\small
\begin{eqnarray}
&&\hspace{-1cm}g_J(\bmu,\bPsi| \bX) = \frac{\pi_J(\bPsi)}{\sqrt{\text{det}(\bPsi \otimes \bI+\bU)}} f\left(\text{vec}(\bX - \bmu \bi^\top)^\top (\bPsi \otimes \bI+\bU)^{-1}\text{vec}(\bX - \bmu \bi)\right) \nonumber\\
&=& c_{11}(p,n) c_{22}(p,n)\frac{\sqrt{\text{det}(\bF_{22})}}{\sqrt{\text{det}(\bPsi \otimes \bI+\bU)}} \sqrt{\text{det}\left(\sum_{i=1}^{n}(\bPsi+ \bU_i)^{-1}\right)}
\nonumber\\
&\times& f\left((\bmu-\tilde{\bx}(\bPsi))^\top\left(\sum_{i=1}^{n}(\bPsi+ \bU_i)^{-1}\right)(\bmu-\tilde{\bx}(\bPsi))
+ \sum_{i=1}^{n} (\bx_i-\tilde{\bx}(\bPsi))^\top (\bPsi+ \bU_i)^{-1}(\bx_i-\tilde{\bx}(\bPsi))\right)\nonumber\\
&\le& c_{11}(p,n) c_{22}(p,n)\text{det}(\bPsi+\tphi\bI)^{-(n+p+1)/2}\label{g_J1}
\nonumber\\
&\times&\sqrt{\text{det}\left(\sum_{i=1}^{n}(\bPsi+ \bU_i)^{-1}\right)}f\left((\bmu-\tilde{\bx}(\bPsi))^\top\left(\sum_{i=1}^{n}(\bPsi+ \bU_i)^{-1}\right)(\bmu-\tilde{\bx}(\bPsi))
\right),\label{g_J2}
\end{eqnarray}
}
where \eqref{g_J1} is proportional to the kernel of a generalized matrix-variate beta type II distribution with parameters $(p+1)/2$, $n/2$, and $\tphi\bI$ (see, e.g., \citet{gupta2000matrix}), which is a proper density as soon as $p \le n$. The expression \eqref{g_J2} is the kernel of the $p$-dimensional elliptically countered distribution with density generator $f(.)$. Since $f(.)$ is a density generator of a $p\times n$-dimensional matrix-variate elliptically contoured distribution, the integral of \eqref{g_J2} over $\bmu \in \mathds{R}^p$ converges. Hence, the joint posterior $\pi(\bmu,\bPsi| \bX)$ derived under the Jeffreys prior is proper under the condition $p \le n$.

In the case of the Berger and Bernardo reference prior we get that the posterior for $\bmu$ and $\bPsi$ is bounded by
\begin{eqnarray*}
&&\hspace{-1cm}g_R(\bmu,\bPsi| \bX) =\frac{\pi_R(\bPsi)}{\sqrt{\text{det}(\bPsi \otimes \bI+\bU)}} f\left(\text{vec}(\bX - \bmu \bi^\top)^\top (\bPsi \otimes \bI+\bU)^{-1}\text{vec}(\bX - \bmu \bi)\right) \nonumber\\
&=& c_{22}(p,n) \frac{\sqrt{\text{det}(\bF_{22})}}{\sqrt{\text{det}(\bPsi \otimes \bI+\bU)}}
\frac{1}{\sqrt{\text{det}\left(\sum_{i=1}^{n}(\bPsi+ \bU_i)^{-1}\right)}}
\nonumber\\
&\times&\sqrt{\text{det}\left(\sum_{i=1}^{n}(\bPsi+ \bU_i)^{-1}\right)}f\left((\bmu-\tilde{\bx}(\bPsi))^\top\left(\sum_{i=1}^{n}(\bPsi+ \bU_i)^{-1}\right)(\bmu-\tilde{\bx}(\bPsi))
\right).
\end{eqnarray*}

It holds that (see, \citet[p.55]{lutkepohl1996handbook})
\begin{eqnarray*}
n^p\text{det}\left(\frac{1}{n}\sum_{i=1}^{n}(\bPsi+ \bU_i)^{-1}\right)\ge n^p\prod_{i=1}^n \text{det}\left((\bPsi+ \bU_i)^{-1}\right)^{1/n}=n^p \prod_{i=1}^n \text{det}(\bPsi+ \bU_i)^{-1/n},
\end{eqnarray*}
which together with \eqref{det_prod} implies that
\begin{eqnarray}
&&g_R(\bmu,\bPsi| \bX) \le n^{-p/2} c_{22}(p,n) \frac{\sqrt{\text{det}(\bF_{22})}}{\sqrt{\text{det}(\bPsi \otimes \bI+\bU)^{1-1/n}}}
\nonumber\\
&\times&\sqrt{\text{det}\left(\sum_{i=1}^{n}(\bPsi+ \bU_i)^{-1}\right)}f\left((\bmu-\tilde{\bx}(\bPsi))^\top\left(\sum_{i=1}^{n}(\bPsi+ \bU_i)^{-1}\right)(\bmu-\tilde{\bx}(\bPsi))
\right)\nonumber\\
&\le&n^{-p/2} c_{22}(p,n)
\text{det}(\bPsi+\tphi\bI)^{-(n+p)/2}\label{g_R1}\\
&\times&\sqrt{\text{det}\left(\sum_{i=1}^{n}(\bPsi+ \bU_i)^{-1}\right)}f\left((\bmu-\tilde{\bx}(\bPsi))^\top\left(\sum_{i=1}^{n}(\bPsi+ \bU_i)^{-1}\right)(\bmu-\tilde{\bx}(\bPsi))
\right).\label{g_R2}
\end{eqnarray}
The last line coincides with \eqref{g_J2} and it is integrable in $\mathds{R}^p$. Moreover, \eqref{g_R1} is proportional to the kernel of a generalized matrix-variate beta type II distribution with parameters $(p+1)/2$, $(n-1)/2$, and $\tphi\bI$ (see, e.g., \citet{gupta2000matrix}), which is a proper density as soon as $p+1 \le n$. Thus, the joint posterior $\pi(\bmu,\bPsi| \bX)$ derived under the Berger and Bernardo reference prior is proper under the condition $p+1 \le n$.
\end{proof}

\section{Drawing samples from the posterior distribution:\\ Metropolis-Hastings algorithm}\label{sec:sampling}

In this section we develop algorithms to draw samples $(\bmu^{(b)}, \bPsi^{(b)})$ from the posterior derived under the Berger and Bernardo reference prior and the Jeffreys prior. The idea is based on the application of the Markov chain Monte Carlo based on the Metropolis-Hastings algorithm, a popular approach is Bayesian statistics (see, e.g., \citet{givens2012computational}). Recently, \citet{hill2019stationarity} provided a comprehensive discussion of the stationarity and convergence of the algorithm, that depends on the chosen proposal from which the samples are generated. A good proposal distribution should have the support which covers the support of the target distribution, i.e., of the posterior for $\bmu$ and $\bPsi$. Also, it should ensure that that the constructed Markov chain has good mixing properties and it will not stack in a single point.

As a proposal, we suggest to use the special case of the posterior distribution derived under each of the considered prior in the case $\bU_1=...=\bU_n=\mathbf{O}$. The two proposals are then defined for all positive semi-definite matrices, thus having the same supports as the two posteriors derived under the Berger and Bernardo reference prior and the Jeffreys prior. More precisely, ignoring the normalizing constants the proposal under the Berger and Bernardo reference prior is given by
\begin{eqnarray*}
q_R(\bmu,\bPsi| \bX) &=& \text{det}(\bPsi)^{-(n+p+1)/2} f\left(\text{tr}\left(\bPsi^{-1}\sum_{i=1}^n(\bx_i - \bmu)(\bx_i - \bmu)^\top \right)\right),
\end{eqnarray*}
and it is expressed as
\begin{eqnarray*}
q_J(\bmu,\bPsi| \bX) &=& \text{det}(\bPsi)^{-(n+p+2)/2} f\left(\text{tr}\left(\bPsi^{-1}\sum_{i=1}^n(\bx_i - \bmu)(\bx_i - \bmu)^\top \right)\right)
\end{eqnarray*}
under the Jeffreys prior.

Let
\begin{equation}\label{barx_bS}
\bar{\bx}=\frac{1}{n} \sum_{i=1}^n \bx_i
\quad \text{and} \quad
\bS=\frac{1}{n-1}\sum_{i=1}^n(\bx_i - \bar{\bx})(\bx_i - \bar{\bx})^\top,
\end{equation}
In using that
\begin{eqnarray}\label{sum-th4}
&& \sum_{i=1}^n(\bx_i - \bmu)(\bx_i - \bmu)^\top = (n-1)\bS +n(\bmu-\bar{\bx})(\bmu-\bar{\bx})^\top,
\end{eqnarray}
which implies
\[
\text{det}(\sum_{i=1}^n(\bx_i - \bmu)(\bx_i - \bmu)^\top)=\text{det}((n-1)\bS)\left(1 +\frac{n}{n-1}(\bmu-\bar{\bx})^\top\bS^{-1}(\bmu-\bar{\bx}) \right)
\]
we get
\begin{eqnarray}\label{q_R}
&&q_R(\bmu,\bPsi| \bX) \propto \left(1 +\frac{1}{n-p}\frac{n(n-p)}{n-1}(\bmu-\bar{\bx})^\top\bS^{-1}(\bmu-\bar{\bx}) \right)^{-n/2} \\
&\times&\text{det}(\bPsi)^{-(n+p+1)/2} \text{det}(\sum_{i=1}^n(\bx_i - \bmu)(\bx_i - \bmu)^\top )^{n/2}
f\left(\text{tr}\left(\bPsi^{-1}\sum_{i=1}^n(\bx_i - \bmu)(\bx_i - \bmu)^\top \right)\right) \nonumber
\end{eqnarray}
and
\begin{eqnarray}\label{q_J}
&&q_J(\bmu,\bPsi| \bX) \propto \left(1 +\frac{1}{n-p+1}\frac{n(n-p+1)}{n-1}(\bmu-\bar{\bx})^\top\bS^{-1}(\bmu-\bar{\bx}) \right)^{-(n+1)/2} \\
&\times&\text{det}(\bPsi)^{-(n+p+2)/2} \text{det}(\sum_{i=1}^n(\bx_i - \bmu)(\bx_i - \bmu)^\top )^{(n+1)/2}
f\left(\text{tr}\left(\bPsi^{-1}\sum_{i=1}^n(\bx_i - \bmu)(\bx_i - \bmu)^\top \right)\right) .\nonumber
\end{eqnarray}

The expression of the proposal $q_R(\bmu,\bPsi| \bX)$ derived under the Berger and Bernardo reference prior is proportional to the joint density function of $\bmu$ and $\bPsi$ with $\bPsi|\bmu, \bX \sim GIW_{p}(n+p+1,\sum_{i=1}^n(\bx_i - \bmu)(\bx_i - \bmu)^\top,f)$ (generalized $p$-dimensional inverse Wishart distribution with $n+p+1$ degrees of freedom, scale matrix $\sum_{i=1}^n(\bx_i - \bmu)(\bx_i - \bmu)^\top$, and density generator $f$, see, e.g., \citet{sutradhar1989generalization}) and $\bmu|\bX \sim t_p  \left(n-p,\bar{\bx},\dfrac{(n-1)\bS} {n(n-p)}\right)$ ($p$-dimensional multivariate $t$-distribution with $n-p$ degrees of freedom, location vector $\bar{\bx}$, and scale matrix $\dfrac{(n-1)\bS} {n(n-p)}$). Similarly, we get that the proposal under the Jeffreys prior is proportional to the joint density function of $\bmu$ and $\bPsi$ with $\bPsi|\bmu, \bX \sim GIW_{p}(n+p+2,\sum_{i=1}^n(\bx_i - \bmu)(\bx_i - \bmu)^\top,f)$ and $\bmu|\bX \sim t_p  \left(n-p+1,\bar{\bx},\dfrac{(n-1)\bS} {n(n-p+1)}\right)$.

We finally note that both proposals \eqref{q_R} and \eqref{q_J} are proper under the conditions $n\ge p+1$ and $n \ge p$, respectively, which coincides with the conditions needed for the propriety of the posteriors derived under the Berger and Bernardo reference prior and the Jeffreys prior in Theorem~\ref{th4}. As a result, the suggested proposal possesses the similar tail behaviour as the corresponding posteriors and, thus, they are good candidates for the construction of the Markov chains.

\begin{algorithm}
\caption{Metropolis-Hastings algorithm for drawing realizations from $\pi(\bmu, \bPsi|\bX)$ as in \eqref{post_joint} under the Berger and Bernardo reference prior \eqref{prior_R}}\label{algorithm_R_ell}
\begin{enumerate}[(1)]
\item \textbf{Initialization:} Choose the initial values $\bmu^{(0)}$ and $\bPsi^{(0)}$ for $\bmu$ and $\bPsi$ and set $b =0$.
\item \textbf{Generating new values of $\bmu^{(w)}$ and $\bPsi^{(w)}$ from the proposal:}
\begin{enumerate}[(i)]
\item For given data $\bX=(\bx_1,...,\bx_n)$, generate $\bmu^{(w)}$ from $t_p\left(n-p,\bar{\bx},\dfrac{(n-1)\bS} {n(n-p)}\right)$ with $\bar{\bx}$ and $\bS$ as in \eqref{barx_bS};
\item Using data $\bX$ and the drawn in step (i) $\bmu^{(w)}$, generate $\bPsi^{(w)}$ from $\bPsi|\bmu=\bmu^{(w)}, \bX \sim GIW_{p}(n+p+1,\sum_{i=1}^n(\bx_i - \bmu^{(w)})(\bx_i - \bmu^{(w)})^\top,f)$.
\end{enumerate}
\item \textbf{Computation of the Metropolis-Hastings ratio:}
\[MH^{(b)}=\frac{\pi(\bmu^{(w)}, \bPsi^{(w)}|\bX)q_R(\bmu^{(b-1)},\bPsi^{(b-1)}| \bX)}{\pi(\bmu^{(b-1)}, \bPsi^{(b-1)}|\bX)q_R(\bmu^{(w)},\bPsi^{(w)}| \bX)}.\]
\item \textbf{Moving to the next state of the Markov chain:}
\begin{enumerate}[(i)]
\item Generate $U^{(b)}$ from the uniform distribution on $[0,1]$;
\item If $U^{b}<\min\left\{1,MH^{(b)}\right\}\pi(\bmu^{(b)}$, then set $\bmu^{(b)}=\bmu^{(w)}$ and $\bPsi^{(b)}=\bPsi^{(w)}$ (Markov chain moves to the new state). Otherwise, set $\bmu^{(b)}=\bmu^{(b-1)}$ and $\bPsi^{(b)}=\bPsi^{(b-1)}$ (Markov chain stays in the previous state).
\end{enumerate}
\item Return to step (2), increase $b$ by 1, and repeat until the sample of size $B$ is accumulated.
\end{enumerate}
\end{algorithm}

The Metropolis-Hastings algorithm for generating a draw from $\pi(\bmu, \bPsi|\bX)$ derived under the Berger and Bernardo reference prior is given in Algorithm~\ref{algorithm_R_ell}.
A similar algorithm with minor changes is constructed to draw a sample from the posterior $\pi(\bmu, \bPsi|\bX)$ derived under the Jeffreys prior. It is summarized in Algorithm~\ref{algorithm_J_ell}.

\begin{algorithm}
\caption{Metropolis-Hastings algorithm for drawing realizations from $\pi(\bmu, \bPsi|\bX)$ as in \eqref{post_joint} under the Jeffreys prior \eqref{prior_J}}\label{algorithm_J_ell}
\begin{enumerate}[(1)]
\item \textbf{Initialization:} Choose the initial values $\bmu^{(0)}$ and $\bPsi^{(0)}$ for $\bmu$ and $\bPsi$ and set $b =0$.
\item \textbf{Generating new values of $\bmu^{(w)}$ and $\bPsi^{(w)}$ from the proposal:}
\begin{enumerate}[(i)]
\item For given data $\bX=(\bx_1,...,\bx_n)$, generate $\bmu^{(w)}$ from $t_p\left(n-p+1,\bar{\bx},\dfrac{(n-1)\bS} {n(n-p+1)}\right)$ with $\bar{\bx}$ and $\bS$ as in \eqref{barx_bS};
\item Using data $\bX$ and the drawn in step (i) $\bmu^{(w)}$, generate $\bPsi^{(w)}$ from $\bPsi|\bmu=\bmu^{(w)}, \bX \sim GIW_{p}(n+p+2,\sum_{i=1}^n(\bx_i - \bmu^{(w)})(\bx_i - \bmu^{(w)})^\top,f)$.
\end{enumerate}
\item \textbf{Computation of the Metropolis-Hastings ratio:}
\[MH^{(b)}=\frac{\pi(\bmu^{(w)}, \bPsi^{(w)}|\bX)q_J(\bmu^{(b-1)},\bPsi^{(b-1)}| \bX)}{\pi(\bmu^{(b-1)}, \bPsi^{(b-1)}|\bX)q_J(\bmu^{(w)},\bPsi^{(w)}| \bX)}.\]
\item \textbf{Moving to the next state of the Markov chain:}
\begin{enumerate}[(i)]
\item Generate $U^{(b)}$ from the uniform distribution on $[0,1]$;
\item If $U^{b}<\min\left\{1,MH^{(b)}\right\}\pi(\bmu^{(b)}$, then set $\bmu^{(b)}=\bmu^{(w)}$ and $\bPsi^{(b)}=\bPsi^{(w)}$ (Markov chain moves to the new state). Otherwise, set $\bmu^{(b)}=\bmu^{(b-1)}$ and $\bPsi^{(b)}=\bPsi^{(b-1)}$ (Markov chain stays in the previous state).
\end{enumerate}
\item Return to step (2), increase $b$ by 1, and repeat until the sample of size $B$ is accumulated.
\end{enumerate}
\end{algorithm}

\section{Several families of elliptical distributions}\label{sec:special_cases}

In this section we apply the obtained theoretical results in case of two special families of elliptically contoured distribution: normal distribution and $t$-distribution.

\subsection{Normal multivariate random effects model}\label{sec:normal}
In the case of the normal multivariate random effects model \eqref{mult-rem-nor}, we have
\begin{equation}\label{f_nor}
f(u)=K_{p,n}\exp(-u/2)~~ \text{with} ~~ K_{p,n}=(2\pi)^{-pn/2},
\end{equation}
which directly yields
\begin{eqnarray*}
f_{\bPsi,\bX}\left(u\right)=\frac{1}{(2\pi)^{pn/2}} \exp\left(-\frac{u}{2}\right)\exp\left(-\frac{1}{2}\sum_{i=1}^{n} (\bx_i-\tilde{\bx}(\bPsi))^\top (\bPsi+ \bU_i)^{-1}(\bx_i-\tilde{\bx}(\bPsi))\right).
\end{eqnarray*}

The last equality leads to the conclusion that the conditional posterior for $\bmu$ given $\bPsi$ is a multivariate normal distribution expressed as
\begin{eqnarray}\label{con_posterior_mu_nor}
\bmu|\bPsi,\bX \sim \mathcal{N}\left(\left(\sum_{i=1}^{n}(\bPsi+ \bU_i)^{-1}\right)^{-1}\sum_{i=1}^{n}(\bPsi+ \bU_i)^{-1}\bx_i,\left(\sum_{i=1}^{n}(\bPsi+ \bU_i)^{-1}\right)^{-1}\right),
\end{eqnarray}
while the marginal posterior for $\bPsi$ is given by
\begin{eqnarray}\label{marg_posterior_Psi_nor}
\pi(\bPsi|\bX)&\propto& \frac{\pi(\bPsi)}{\sqrt{\text{det}(\sum_{i=1}^{n}(\bPsi + \bU_i)^{-1})}\prod_{i=1}^{n}\sqrt{\text{det}(\bPsi + \bU_i)}}\nonumber\\
&\times& \exp\left(-\frac{1}{2}\sum_{i=1}^{n} (\bx_i-\tilde{\bx}(\bPsi))^\top (\bPsi+ \bU_i)^{-1}(\bx_i-\tilde{\bx}(\bPsi))\right)\,.
\end{eqnarray}
The posterior mean vector and the posterior covariance matrix of $\bmu$ are obtained as in \eqref{mu_post_mean} and \eqref{mu_post_var} with $C(\bPsi)=1$. Finally, we note that the posterior $\pi(\bmu,\bPsi|\bX)$ is proper for $n\ge p+1$ for the Berger and Bernardo reference prior and for $n \ge p$ for the Jeffreys prior following Theorem~\ref{th4}, since $\exp(-u/2)$ is a decreasing function in $u$ and $\frac{J_2}{2pn+p^2n^2}-\frac{1}{4}=0$.

All the derived expressions for the normal multivariate random effects model, like conditional posterior for $\bmu$, posterior mean vector, etc., depend on the marginal posterior for $\bPsi$ and thus cannot be computed analytically. In the univariate case, \citet{BodnarLinkElster2015} suggested a numerical procedure for the computation of such quantities based on the evaluation of one-dimensional integral. In the multivariate case $\bPsi$ is a matrix now and since it should be positive semidefinite it imposes further complications on the numerical integration. For that reason we opt for the simulation-based approach as described in Section~\ref{sec:sampling}.

For generating samples from the posterior $\pi(\bmu,\bPsi|\bX)$ we apply Algorithm 1 under the Berger and Bernardo reference prior and Algorithm 2 under the Jeffreys prior where the inverse generalized Wishart distribution becomes the inverse Wishart distribution with $n+p+1$ and $n+p+2$ degrees of freedom, respectively. Other parts of the algorithms remain the same without changes.

Alternatively, one can modify these two algorithms using the properties of the normal distribution. Under the Berger and Bernardo reference prior, another proposal distribution can be constructed by using \eqref{q_R} with \eqref{f_nor}. Namely, from \eqref{barx_bS} and \eqref{sum-th4} we get
\begin{eqnarray*}
&&\exp\left(-\frac{1}{2}\text{tr}\left(\bPsi^{-1}\sum_{i=1}^n(\bx_i - \bmu)(\bx_i - \bmu)^\top \right)\right)\\
&=&\exp\left(-\frac{n-1}{2}\text{tr}\left(\bPsi^{-1}\bS\right)\right) \exp\left(-\frac{n}{2}(\bmu-\bar{\bx})^\top\bPsi^{-1}(\bmu-\bar{\bx})\right).
\end{eqnarray*}
This leads to the algorithm derived under the Berger and Bernardo reference prior which is summarized in Algorithm 3. A similar approach can also be used when the Jeffreys prior is employed with the only change in step (2) of Algorithm 3, where $\bPsi^{(w)}|\bX \sim IW_{p}(n+p,(n-1)\bS)$ should be replaced by $\bPsi^{(w)}|\bX \sim IW_{p}(n+p+1,(n-1)\bS)$.

\begin{algorithm}
\caption{Metropolis-Hastings algorithm for drawing realizations from $\pi(\bmu, \bPsi|\bX)$ as in \eqref{post_joint} under the Berger and Bernardo reference prior \eqref{prior_R} in the normal multivariate random effects model}\label{algorithm_R_nor}
\begin{enumerate}[(1)]
\item \textbf{Initialization:} Choose the initial values $\bmu^{(0)}$ and $\bPsi^{(0)}$ for $\bmu$ and $\bPsi$ and set $b =0$.
\item \textbf{Generating new values of $\bmu^{(w)}$ and $\bPsi^{(w)}$ from the proposal:}
\begin{enumerate}[(i)]
\item For given data $\bX=(\bx_1,...,\bx_n)$, generate $\bPsi^{(w)}$ from $\bPsi^{(w)}|\bX \sim IW_{p}(n+p,(n-1)\bS)$;
\item using data $\bX$ and the drawn in step (i) $\bPsi^{(w)}$, generate $\bmu^{(w)}$ from $\bmu|\bPsi=\bPsi^{(w)}, \bX \sim N_p\left(\bar{\bx},\dfrac{\bPsi^{(w)}} {n}\right)$ with $\bar{\bx}$ and $\bS$ as in \eqref{barx_bS}.
\end{enumerate}
\item \textbf{Computation of the Metropolis-Hastings ratio:}
\[MH^{(b)}=\frac{\pi(\bmu^{(w)}, \bPsi^{(w)}|\bX)q_R(\bmu^{(b-1)},\bPsi^{(b-1)}| \bX)}{\pi(\bmu^{(b-1)}, \bPsi^{(b-1)}|\bX)q_R(\bmu^{(w)},\bPsi^{(w)}| \bX)}.\]
\item \textbf{Moving to the next state of the Markov chain:}
\begin{enumerate}[(i)]
\item Generate $U^{(b)}$ from the uniform distribution on $[0,1]$;
\item If $U^{b}<\min\left\{1,MH^{(b)}\right\}\pi(\bmu^{(b)}$, then set $\bmu^{(b)}=\bmu^{(w)}$ and $\bPsi^{(b)}=\bPsi^{(w)}$ (Markov chain moves to the new state). Otherwise, set $\bmu^{(b)}=\bmu^{(b-1)}$ and $\bPsi^{(b)}=\bPsi^{(b-1)}$ (Markov chain stays in the previous state).
\end{enumerate}
\item Return to step (2), increase $b$ by 1, and repeat until the sample of size $B$ is accumulated.
\end{enumerate}
\end{algorithm}

The performance of two algorithms for drawing samples from the posterior distribution is studied in Figures \ref{fig:nor-mu1} for the normal multivariate random effects model when the Berger and Bernardo reference prior and the Jeffreys prior are employed. The notation 'Algorithm A' corresponds to the case where $\bmu$ is drawn from the marginal distribution and $\bPsi$ is generated from the conditional distribution as in Algorithms \ref{algorithm_R_ell} and \ref{algorithm_J_ell} with density generator $f(.)$ as in \eqref{f_nor}, while the notation 'Algorithm B' corresponds to the case when $\bPsi$ is generated from the marginal distribution and $\bmu$ is obtained from the conditional distribution as in Algorithm \ref{algorithm_R_nor}.

\begin{figure}[h!t]
\centering
\begin{tabular}{cc}
\includegraphics[width=8cm]{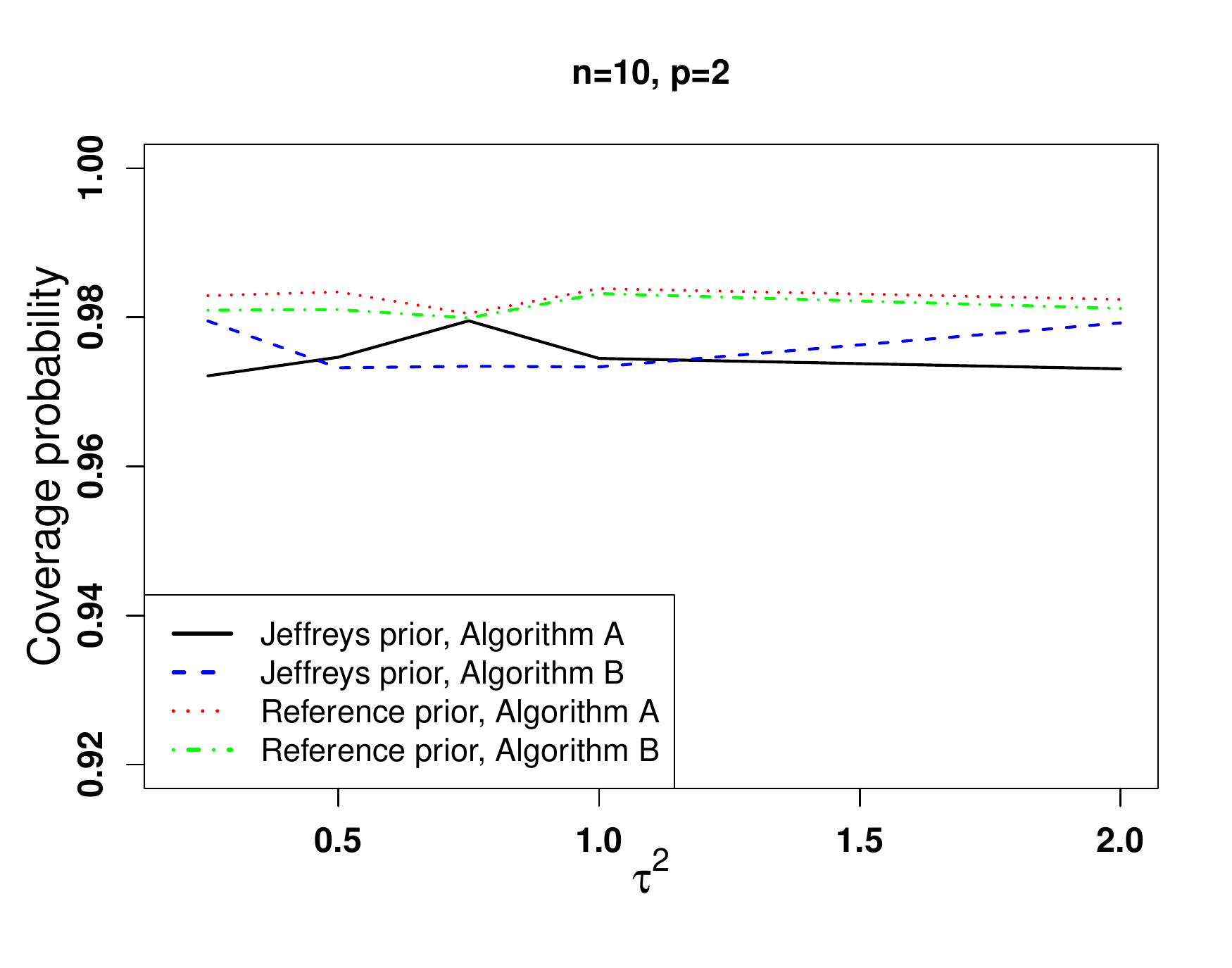}&\includegraphics[width=8cm]{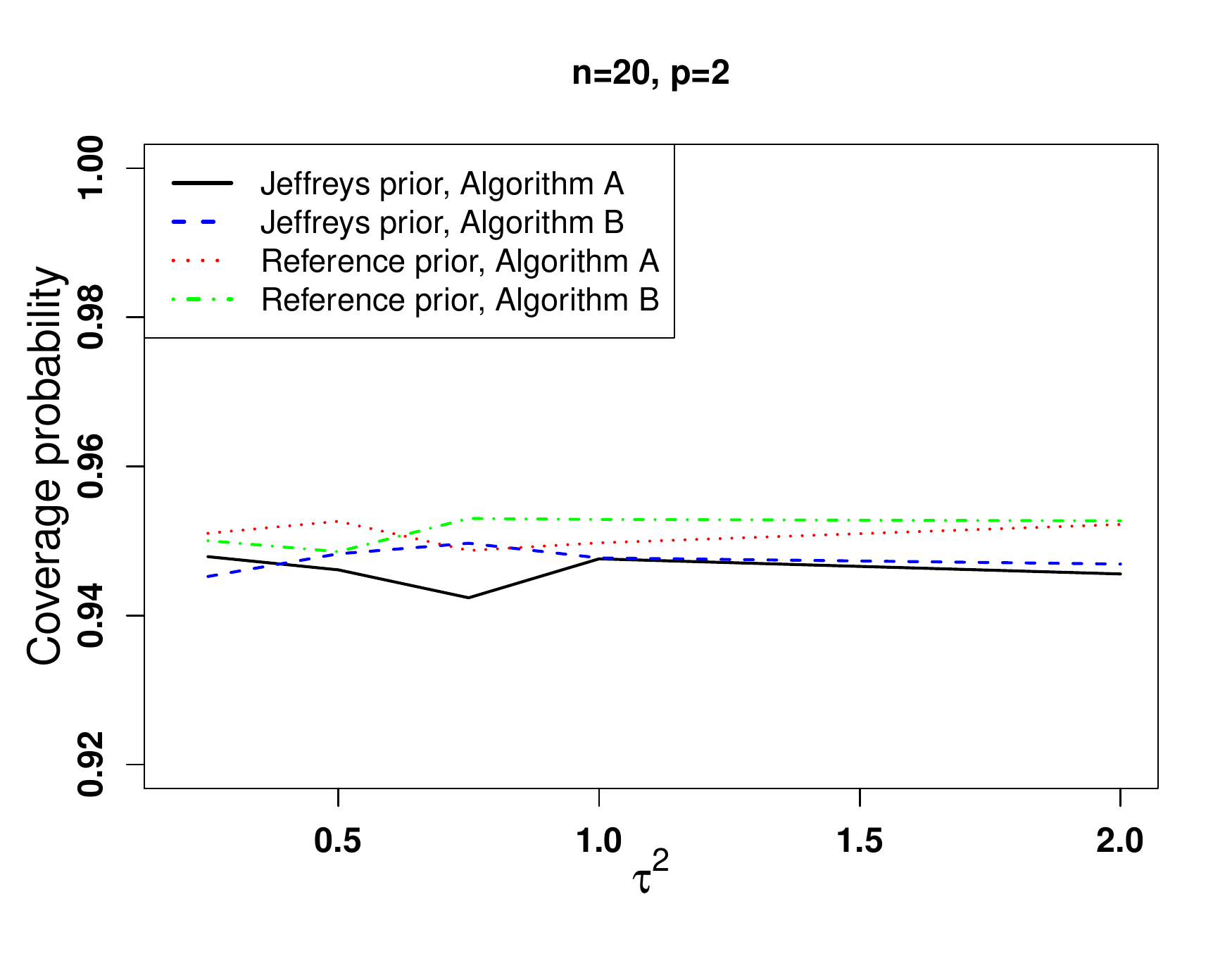}\\
\includegraphics[width=8cm]{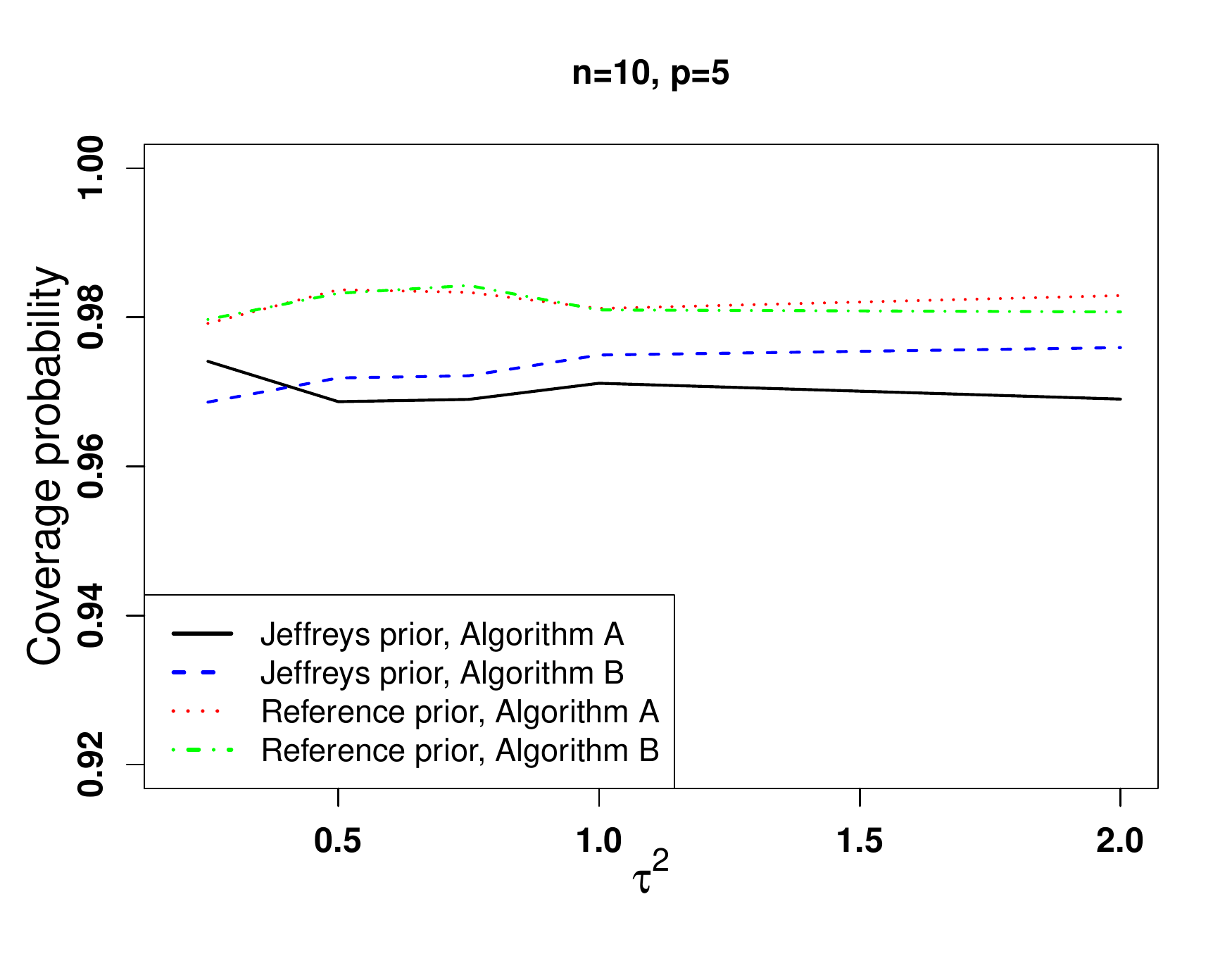}&\includegraphics[width=8cm]{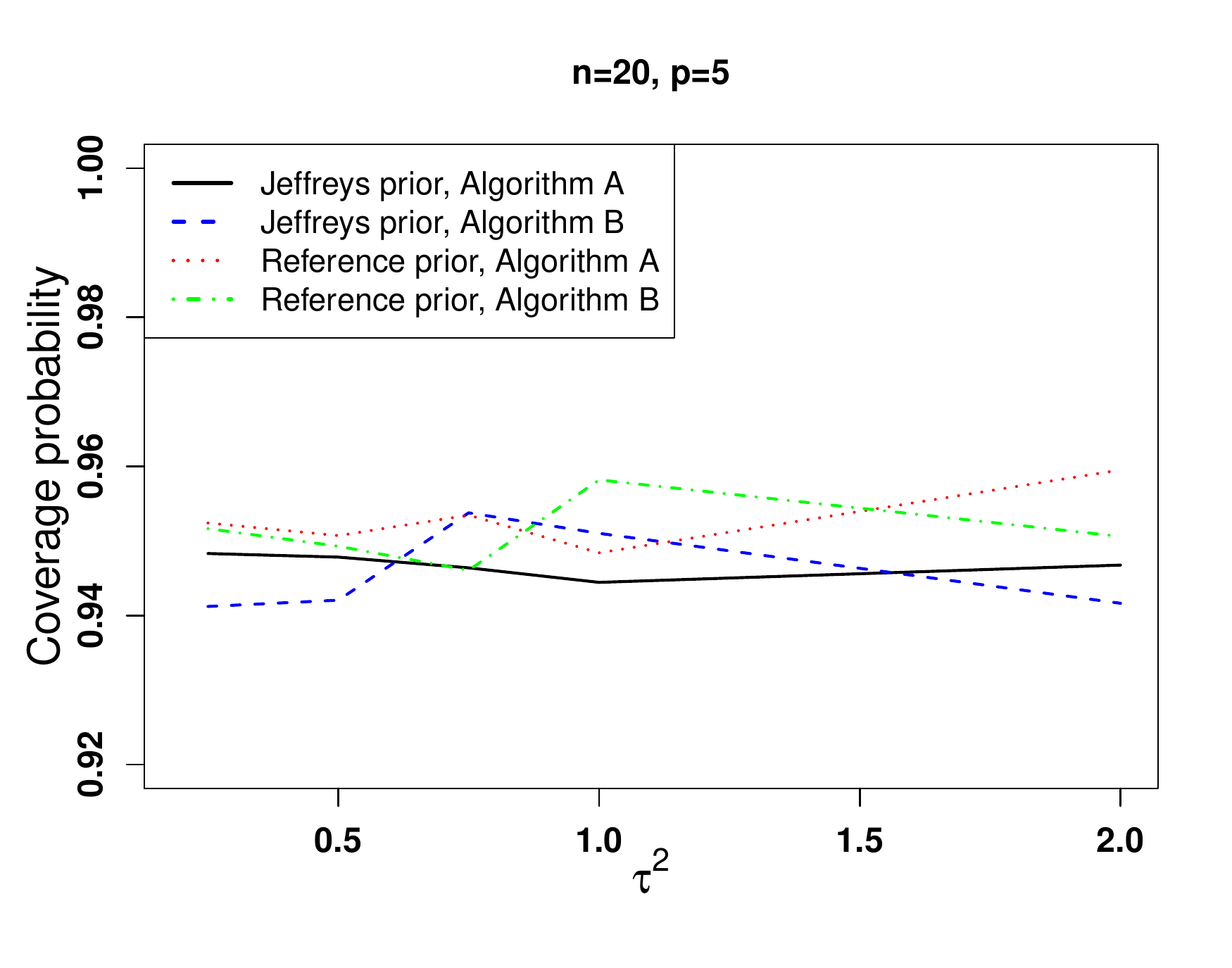}\\
\end{tabular}
 \caption{Coverage probabilities of the 95\% credible intervals for the first component of $\bmu$ as a function of $\tau^2$ under the assumption of the normal multivariate random effects model when the Berger and Bernardo reference prior and the Jeffreys prior are employed. We set $p\in\{2,5\}$ and $n\in\{10,20\}$. }
\label{fig:nor-mu1}
 \end{figure}


As a performance measure we use the empirical coverage probability of the credible interval constructed for $\mu_1$, which is computed based on 5000 independent repetitions. In each simulation run, the data matrix $\bX$ is drawn from the normal multivariate random effects model \eqref{mult-rem-nor} with the same $\bmu$, $\bPsi=\tau^2 \mathbf{\Xi}$, and $\bU=diag(\bU_1,...,\bU_p)$. The elements of $\bmu$ are generated from the uniform distribution on $[1,5]$. The eigenvalues of $\mathbf{\Xi}$, $\bU_1$, ... , $\bU_{p-1}$, and $\bU_p$ are generated from the uniform distribution on $[1,4]$, while the eigenvectors are simulated from the Haar distribution. The results in Figure \ref{fig:nor-mu1} are obtained for $p\in\{2,5\}$, $n\in\{10,20\}$, and $\tau^2 \in \{0.25, 0.5, 0.75, 1, 2\}$.

In Figure \ref{fig:nor-mu1} we observe that the credible intervals obtained by employing the Berger and Bernardo reference prior leads to wider credible intervals constructed for $\mu_1$, although the difference between the two non-informative priors is not large. The empirical coverage probabilities computed by using the two numerical procedures of drawing samples from the posterior distribution are above the chosen significance level of 95\% in almost all of the considered cases. Furthermore, when the sample size increases to $n=20$, then the constructed credible intervals possess almost perfect coverage probabilities of 95\% for both considered dimensions $p\in\{2,5\}$.

\subsection{$t$ multivariate random effects model}\label{sec:t}

In the case of the $t$ multivariate random effects model it holds that
\begin{equation}\label{f_t}
f(u)=K_{p,n,d}(1+u/d)^{-(pn+d)/2}~~ \text{with} ~~ K_{p,n,d}=(\pi d)^{-pn/2}\dfrac{\Gamma\left((d+pn)/2\right)}{\Gamma\left(d/2\right)}.
\end{equation}
Hence,
\begin{eqnarray*}
\pi(\bmu,\bPsi|\bX) &\propto& \frac{\pi(\bPsi)}{\sqrt{\prod_{i=1}^{n}\text{det}(\bPsi+\bU_i)}}\\
&\times& \Bigg(1+\frac{1}{d}(\bmu-\tilde{\bx}(\bPsi))^\top\left(\sum_{i=1}^{n}(\bPsi+ \bU_i)^{-1}\right)(\bmu-\tilde{\bx}(\bPsi))\\
&+& \frac{1}{d}\sum_{i=1}^{n} (\bx_i-\tilde{\bx}(\bPsi))^\top (\bPsi+ \bU_i)^{-1}(\bx_i-\tilde{\bx}(\bPsi))\Bigg)^{-(pn+d)/2}\\
&=&\frac{\pi(\bPsi)}{\sqrt{\prod_{i=1}^{n}\text{det}(\bPsi+\bU_i)}}
\left(1+\frac{1}{d}\sum_{i=1}^{n} (\bx_i-\tilde{\bx}(\bPsi))^\top (\bPsi+ \bU_i)^{-1}(\bx_i-\tilde{\bx}(\bPsi))\right)^{-(pn+d)/2}\\
&\times&
\Bigg(1+\frac{1}{pn+d-p}
\frac{pn+d-p}{d+\sum_{i=1}^{n} (\bx_i-\tilde{\bx}(\bPsi))^\top (\bPsi+ \bU_i)^{-1}(\bx_i-\tilde{\bx}(\bPsi))}\\
&\times&(\bmu-\tilde{\bx}(\bPsi))^\top\left(\sum_{i=1}^{n}(\bPsi+ \bU_i)^{-1}\right)(\bmu-\tilde{\bx}(\bPsi))\Bigg)^{-(pn+d)/2},
\end{eqnarray*}
which shows that the conditional posterior of $\bmu$ given $\bPsi$ is
\begin{eqnarray}\label{con_posterior_mu_t}
\pi(\bmu|\bPsi,\bX) &\propto& \Bigg(1+\frac{1}{pn+d-p}
\frac{pn+d-p}{d+\sum_{i=1}^{n} (\bx_i-\tilde{\bx}(\bPsi))^\top (\bPsi+ \bU_i)^{-1}(\bx_i-\tilde{\bx}(\bPsi))}\\
&\times&(\bmu-\tilde{\bx}(\bPsi))^\top\left(\sum_{i=1}^{n}(\bPsi+ \bU_i)^{-1}\right)(\bmu-\tilde{\bx}(\bPsi))\Bigg)^{-(pn+d)/2},
\end{eqnarray}
i.e., $\bmu$ conditionally on $\bPsi$ and $\bX$ has a $p$-dimensional $t$-distribution with $pn+d-p$ degrees of freedom, location parameter $\tilde{\bx}(\bPsi)$ and dispersion matrix
\[\frac{d+\sum_{i=1}^{n} (\bx_i-\tilde{\bx}(\bPsi))^\top (\bPsi+ \bU_i)^{-1}(\bx_i-\tilde{\bx}(\bPsi))}{pn+d-p}\left(\sum_{i=1}^{n}(\bPsi+ \bU_i)^{-1}\right)^{-1}.\]

Moreover, the marginal posterior for $\bPsi$ can also be deduced and it is expressed as
\begin{eqnarray}\label{marg_posterior_Psi_t}
\pi(\bPsi|\bX)&\propto& \frac{\pi(\bPsi)}{\sqrt{\text{det}(\sum_{i=1}^{n}(\bPsi + \bU_i)^{-1})}\prod_{i=1}^{n}\sqrt{\text{det}(\bPsi + \bU_i)}}\nonumber\\
&\times&\left(1+\frac{1}{d}\sum_{i=1}^{n} (\bx_i-\tilde{\bx}(\bPsi))^\top (\bPsi+ \bU_i)^{-1}(\bx_i-\tilde{\bx}(\bPsi))\right)^{-(pn+d)/2}\,.
\end{eqnarray}
To this end, the posterior mean vector and the covariance matrix of $\bmu$ are obtained as in \eqref{mu_post_mean} and \eqref{mu_post_var} with
\[C(\bPsi)=\frac{d+\sum_{i=1}^{n} (\bx_i-\tilde{\bx}(\bPsi))^\top (\bPsi+ \bU_i)^{-1}(\bx_i-\tilde{\bx}(\bPsi))}{pn+d-p-2}.\]

Finally, we note that the constant $J_2$ can analytically be computed in the case of the $t$ multivariate random effects model and it is expressed as (see, \citet[Section 3.2]{bodnar2019})
\[J_2=\frac{pn(pn+2)(pn+d)}{4(pn+2+d)}\]
The application of the last expression leads to the following formulas of the Berger and Bernardo reference prior
\begin{eqnarray}\label{prior_R_t}
\pi_R(\bPsi)&\propto& \Bigg(\text{det}\Bigg\{\mathbf{G}_p^\top\Bigg[ \frac{pn+d}{2(pn+2+d)}\sum_{i=1}^{n}\left((\bPsi + \bU_i)^{-1} \otimes (\bPsi + \bU_i)^{-1}\right)\\
&-&\frac{1}{2(pn+2+d)} \text{vec}\left(\sum_{i=1}^{n}(\bPsi + \bU_i)^{-1}\right)\text{vec}\left(\sum_{j=1}^{n}(\bPsi + \bU_j)^{-1}\right)^\top \Bigg]\mathbf{G}_p\Bigg\}
\Bigg)^{1/2},\nonumber
\end{eqnarray}
while the Jeffreys prior is given by \eqref{prior_J} with $\pi_R(\bPsi)$ as in \eqref{prior_R_t}. Furthermore, since $(1+u/d)^{-(pn+d)/2}$ is a decreasing function in $u$ and
\[\frac{J_2}{2pn+p^2n^2}-\frac{1}{4}=\frac{1}{4}\left(\frac{pn+d}{pn+d+2}-1\right)<0,\]
the posterior $\pi(\bmu,\bPsi|\bX)$ is proper for $n\ge p+1$ for the Berger and Bernardo reference prior and for $n \ge p$ for the Jeffreys prior due to Theorem~\ref{th4}.

Algorithm 1 and Algorithm 2 are used to draw samples from the posterior $\pi(\bmu,\bPsi|\bX)$ derived by employing the Berger and Bernardo reference prior and the Jeffreys prior, respectively. Under the special case of the $t$ multivariate random effects model, the step (ii) of the both algorithm is performed by computing $\bPsi^{(b)}=\frac{\xi^{(b)}}{d} \mathbf{\Omega}^{(b)}$ where $\xi^{(b)}$ and $\mathbf{\Omega}^{(b)}$ are simulated independently from $\chi^2_d$-distribution and the inverse Wishart distribution with parameter matrix $(n-1)\sum_{i=1}^n(\bx_i - \bmu^{(b)})(\bx_i - \bmu^{(b)})^\top$ and degrees of freedom equal to $n+p+1$ under the Berger and Bernardo reference prior and $n+p+2$ under the Jeffreys prior.

The modification of Algorithms 1 and 2 similar to the one derived for the normal multivariate random effects model can also be obtained under the assumption of the $t$-distribution. The application of the equality
\begin{eqnarray*}
&&\left(1+\frac{1}{d}\text{tr}\left(\bPsi^{-1}\sum_{i=1}^n(\bx_i - \bmu)(\bx_i - \bmu)^\top \right)\right)^{-(pn+d)/2}
=\left(1+\frac{n-1}{d}\text{tr}\left(\bPsi^{-1}\bS\right)\right)^{-(pn+d)/2}\\
&\times&\left(1+\frac{1}{np+d-p}\frac{n(np+d-p)}{d+(n-1)\text{tr}\left(\bPsi^{-1}\bS\right)} (\bmu-\bar{\bx})^\top\bPsi^{-1}(\bmu-\bar{\bx})\right)^{-(pn+d)/2}
\end{eqnarray*}
leads to another numerical procedure described in Algorithm 4 when the posterior $\pi(\bmu,\bPsi|\bX)$ is derived by applying the the Berger and Bernardo reference prior. Under the Jeffreys prior the step (2) of Algorithm 4 should be modified by generating $\mathbf{\Omega}^{(w)}$ from $IW_{p}(n+p+1,(n-1)\bS)$.

\begin{algorithm}
\caption{Metropolis-Hastings algorithm for drawing realizations from $\pi(\bmu, \bPsi|\bX)$ as in \eqref{post_joint} under the Berger and Bernardo reference prior \eqref{prior_R} in the $t$ multivariate random effects model}\label{algorithm_R_t}
\begin{enumerate}[(1)]
\item \textbf{Initialization:} Choose the initial values $\bmu^{(0)}$ and $\bPsi^{(0)}$ for $\bmu$ and $\bPsi$ and set $b =0$.
\item \textbf{Generating new values of $\bmu^{(w)}$ and $\bPsi^{(w)}$ from the proposal:}
\begin{enumerate}[(i)]
\item For given data $\bX=(\bx_1,...,\bx_n)$, generate $\bPsi^{(w)}=\frac{\xi^{(w)}}{d} \mathbf{\Omega}^{(w)}$ where $\xi^{(w)}$ and $\mathbf{\Omega}^{(w)}$ are simulated independently with from $\xi^{w}\sim\chi^2_d$ from $\mathbf{\Omega}^{(w)}|\bX \sim IW_{p}(n+p,(n-1)\bS)$;
\item Using data $\bX$ and the drawn in step (i) $\bPsi^{(w)}$, generate $\bmu^{(w)}$ from $\bmu|\bPsi=\bPsi^{(w)}, \bX \sim t_p\left(pn+d-p,\bar{\bx},\frac{d+(n-1)\text{tr}\left((\bPsi^{(w)})^{-1}\bS\right)}{n(np+d-p)}\bPsi^{(w)}\right)$ with $\bar{\bx}$ and $\bS$ as in \eqref{barx_bS} and the symbol $t_p(m,\mathbf{r},\mathbf{R})$ stands for the multivariate $p$-dimensional $t$-distribution with $m$ degrees of freedom, location vector $\mathbf{r}$, and scale matrix $\mathbf{R}$.
\end{enumerate}
\item \textbf{Computation of the Metropolis-Hastings ratio:}
\[MH^{(b)}=\frac{\pi(\bmu^{(w)}, \bPsi^{(w)}|\bX)q_R(\bmu^{(b-1)},\bPsi^{(b-1)}| \bX)}{\pi(\bmu^{(b-1)}, \bPsi^{(b-1)}|\bX)q_R(\bmu^{(w)},\bPsi^{(w)}| \bX)}.\]
\item \textbf{Moving to the next state of the Markov chain:}
\begin{enumerate}[(i)]
\item Generate $U^{(b)}$ from the uniform distribution on $[0,1]$;
\item If $U^{b}<\min\left\{1,MH^{(b)}\right\}\pi(\bmu^{(b)}$, then set $\bmu^{(b)}=\bmu^{(w)}$ and $\bPsi^{(b)}=\bPsi^{(w)}$ (Markov chain moves to the new state). Otherwise, set $\bmu^{(b)}=\bmu^{(b-1)}$ and $\bPsi^{(b)}=\bPsi^{(b-1)}$ (Markov chain stays in the previous state).
\end{enumerate}
\item Return to step (2), increase $b$ by 1, and repeat until the sample of size $B$ is accumulated.
\end{enumerate}
\end{algorithm}

In order to investigate the properties of the two proposed algorithms, we conduct a simulation study for the t multivariate random effects model designed similarly to the one presented in Section \ref{sec:normal} for the normal multivariate random effect models. Also, we use the same notations 'Algorithm A' and 'Algorithm B' to distinguish between the two procedures to draw the sample from the posterior distributions derived by employing the Berger and Bernardo reference prior and the Jeffreys prior.

 \begin{figure}[h!t]
 \centering
\begin{tabular}{cc}
\includegraphics[width=8cm]{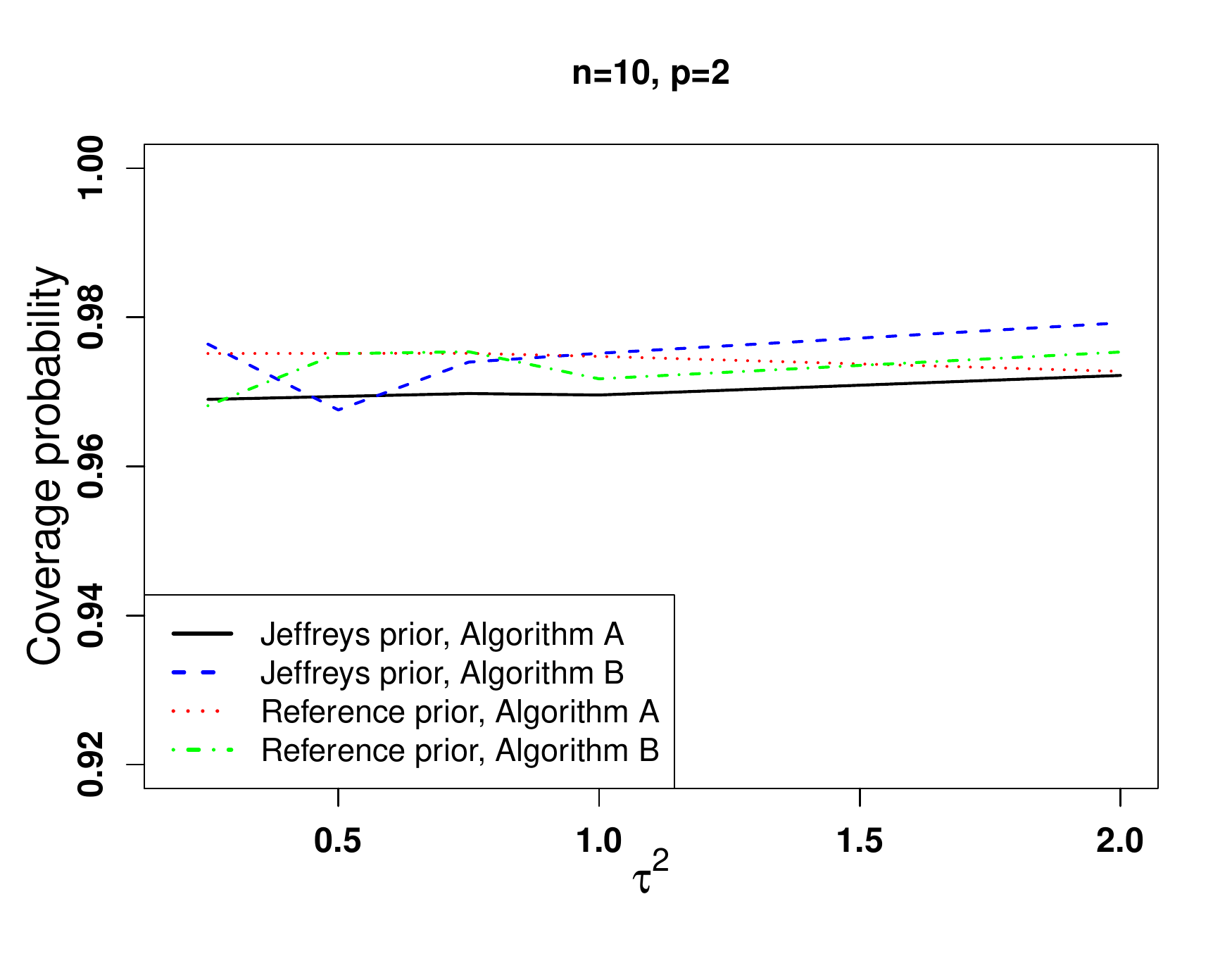}&\includegraphics[width=8cm]{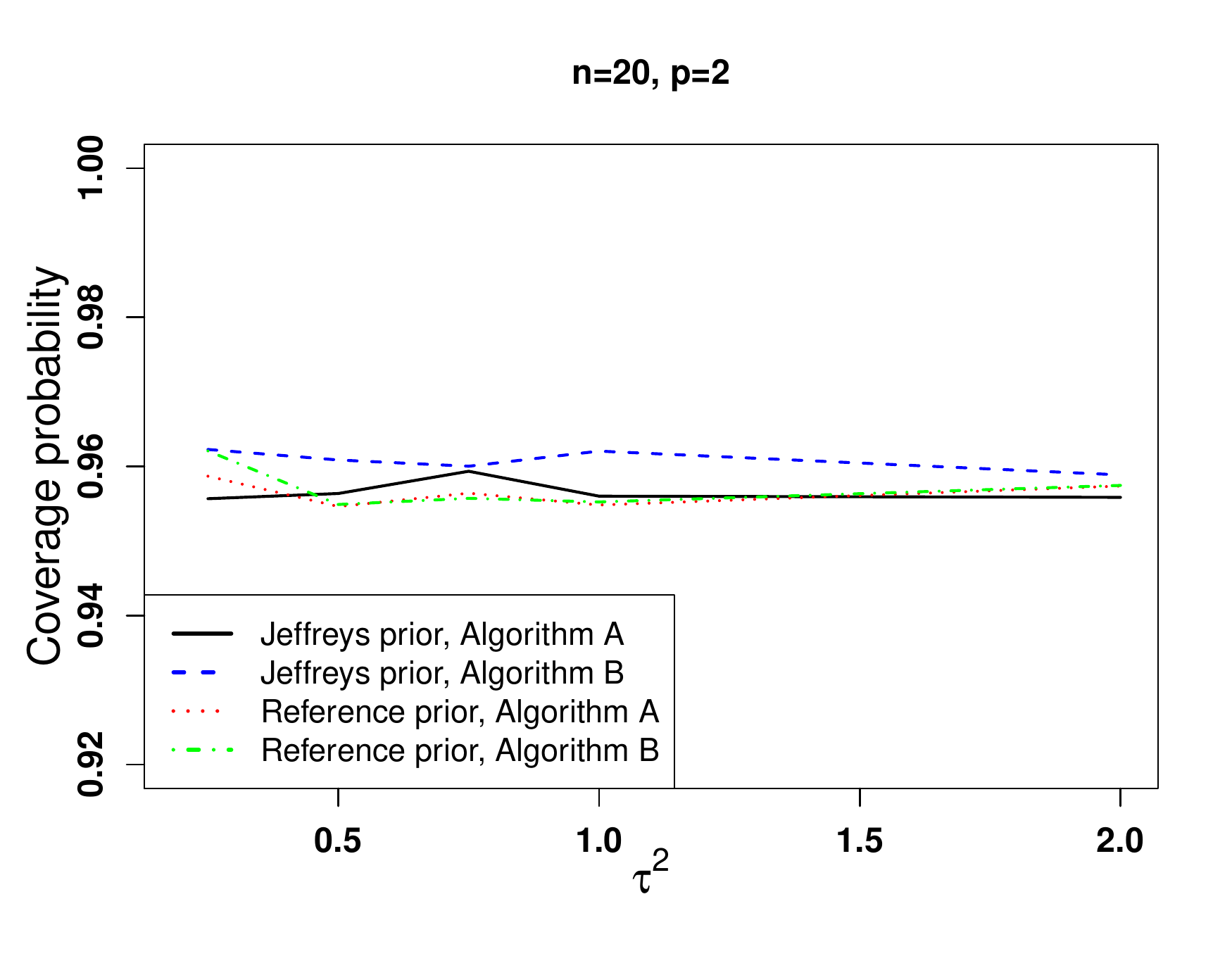}\\
\includegraphics[width=8cm]{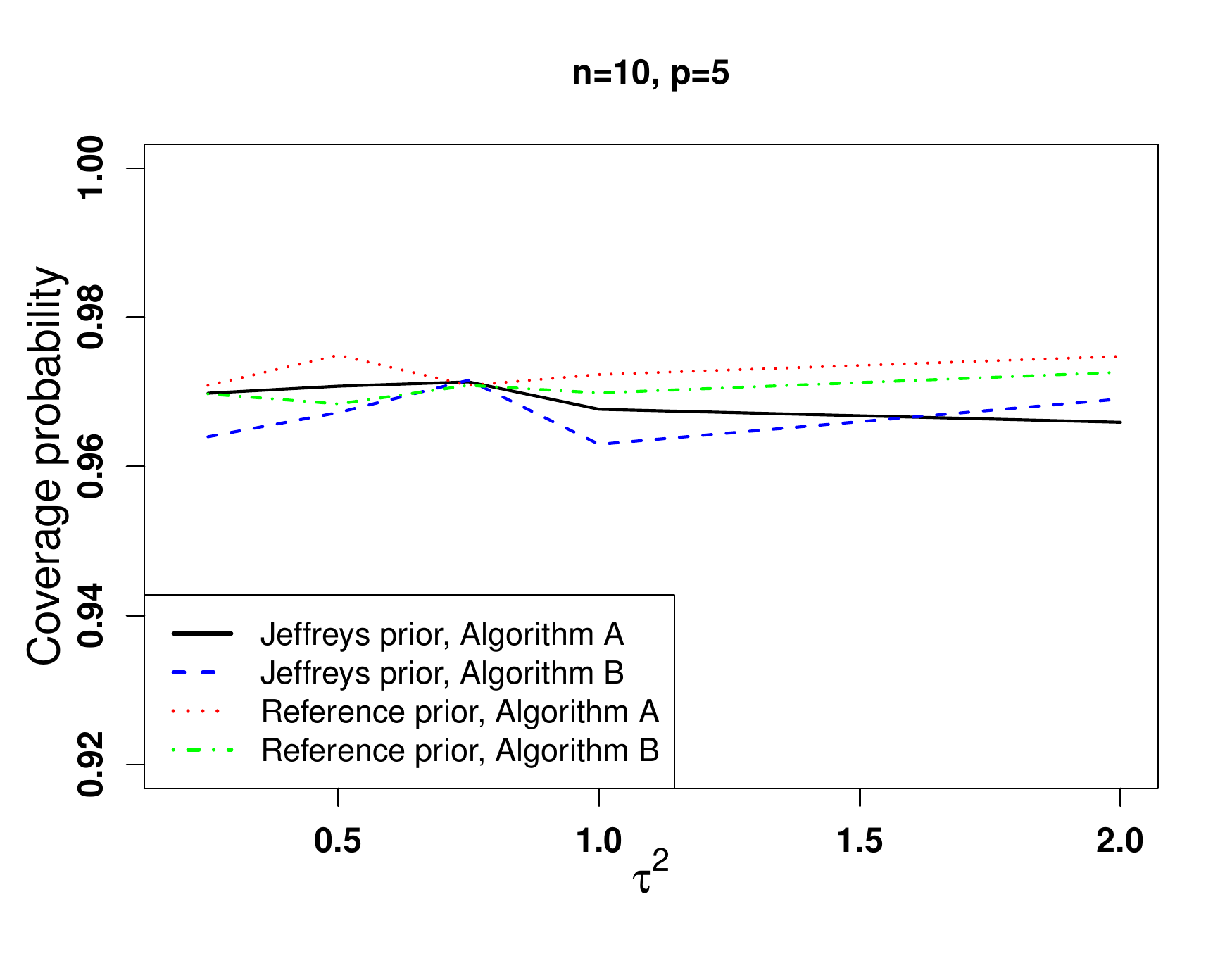}&\includegraphics[width=8cm]{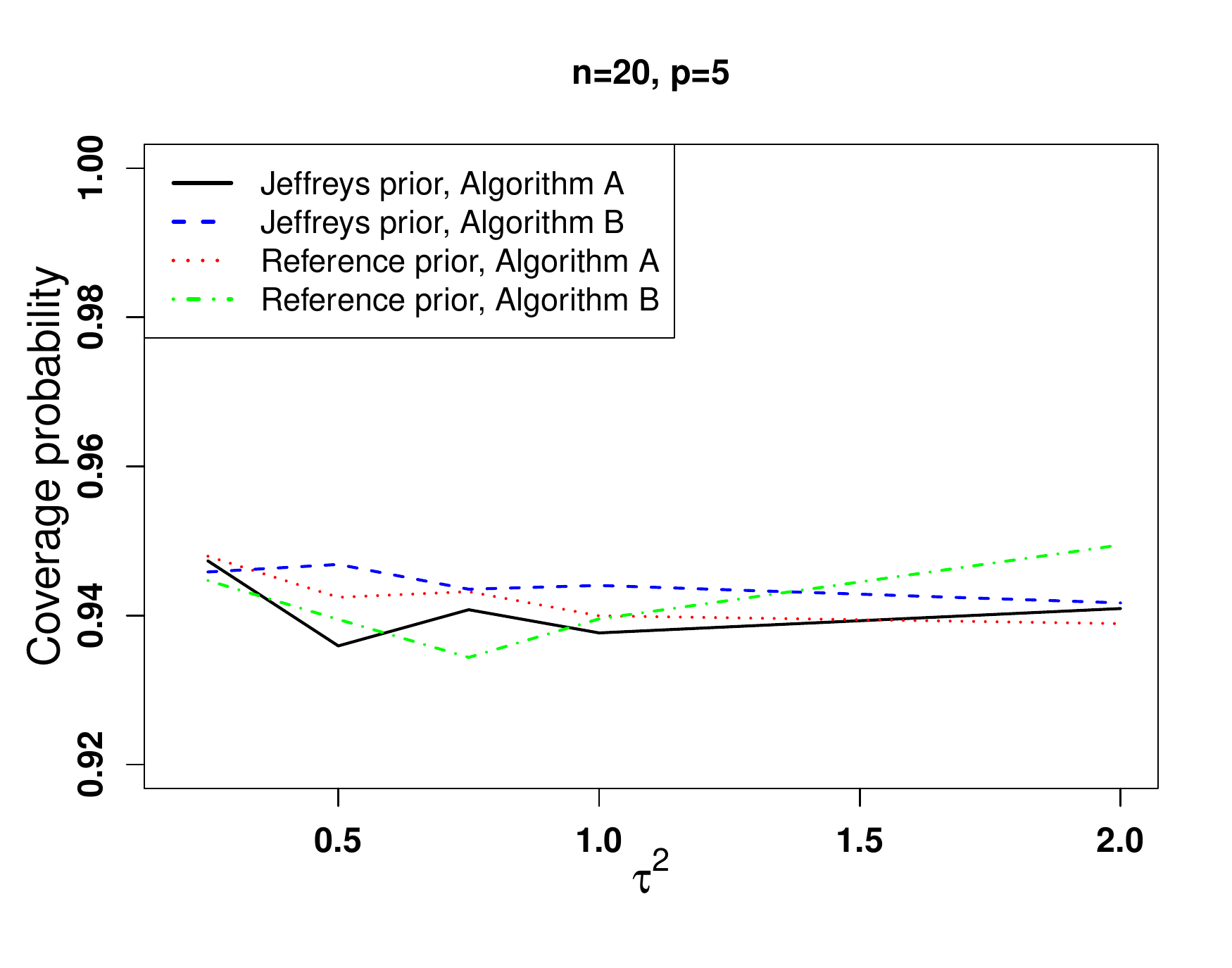}\\
\end{tabular}
 \caption{Coverage probabilities of the 95\% credible intervals for the first element of $\bmu$ as a function of $\tau^2$ under the assumption of the $t$ multivariate random effects model when the Berger and Bernardo reference prior and the Jeffreys prior are employed. We set $p \in \{2,5\}$ and $n\in \{10,20\}$.}
\label{fig:t-mu1}
 \end{figure}


We use the empirical coverage probability of the credible interval constructed for $\mu_1$ as a performance measure and compute it based on 5000 independent repetitions. In each simulation run, the data matrix $\bX$ is simulated from the t multivariate random effects model, i.e., from the model \eqref{mult-rem} with $f(.)$ as in \eqref{f_t}. The model parameters $\bmu$, $\bPsi=\tau^2 \mathbf{\Xi}$, and $\bU=diag(\bU_1,...,\bU_p)$ are chosen in the same way as the corresponding parameters of the normal multivariate random effects model in Section \ref{sec:normal}. Finally, we set $p\in\{2,5\}$, $n\in\{10,20\}$, and $\tau^2 \in \{0.25, 0.5, 0.75, 1, 2\}$.

The results of the simulation study are depicted in Figure \ref{fig:t-mu1}. Similarly to the findings obtained for the normal multivariate random effects model, the empirical coverage probabilities are larger than the chosen significance level of 95\% in almost all of the considered cases, independently whether the Markov chains are constructed following Algorithm A or Algorithm B. The application of the Berger and Bernardo reference prior leads to a slightly larger values of the empirical coverage probabilities. Finally, the coverage probabilities are close to 95\% when the sample size is $n=20$ independently of the chosen values of $p \in \{2,5\}$.

\section{Empirical illustration}\label{sec:emp}

In this section we illustrate the derived theoretical findings on real data consisting of results obtained in ten studies that assess the effectiveness of hypertension treatment for reducing blood pressure. The treatment effects on the systolic blood pressure and diastolic blood pressure are investigated in the studies where the negative values document positive effect of the treatment. The data are provided in Table \ref{tab:data} and are taken from \citet{jackson2013matrix} where the treatment effects in each study are provided together with the covariance matrices $\bU_i$ which are assumed to be known throughout this section.

\begin{table}[h!t]
\centering
\begin{tabular}{c|c|c|c|c|c}
  \hline \hline
 Study & $X_{i;1}$ (SBP) & $X_{i;2}$ (DBP) & $\sqrt{U_{i;11}}$ (SBP) & $\rho_{i;12}=\dfrac{U_{i;12}}{\sqrt{U_{i;11}U_{i;22}}}$ & $\sqrt{U_{i;22}}$ (DBP)\\
\hline
 1&  -6.66& -2.99& 0.72 & 0.78& 0.27 \\
 2& -14.17& -7.87& 4.73 & 0.45& 1.44 \\
 3& -12.88& -6.01&10.31 & 0.59& 1.77 \\
 4& -8.71 & -5.11& 0.30 & 0.77& 0.10 \\
 5& -8.70 & -4.64& 0.14 & 0.66& 0.05 \\
 6& -10.60& -5.56& 0.58 & 0.49& 0.18 \\
 7& -11.36& -3.98& 0.30 & 0.50& 0.27 \\
 8& -17.93& -6.54& 5.82 & 0.61& 1.31 \\
 9& -6.55 & -2.08& 0.41 & 0.45& 0.11 \\
10& -10.26& -3.49& 0.20 & 0.51& 0.04 \\
\hline\hline
\end{tabular}
\caption{Data collected in 10 studies about the effectiveness of hypertension treatment with the aim to reduce blood pressure. The variables $X_{i;1}$ and $X_{i;2}$ denote the treatment effects on the systolic blood pressure (SBP) and the diastolic blood pressures (DBP) from the $i$th study, while $\mathbf{U}_i=(U_{i;lj})_{lj=1,2}$ is the corresponding covariance matrix.}
\label{tab:data}
\end{table}

Multivariate meta-analysis is performed by using data from Table \ref{tab:data} under the assumption of the normal multivariate random effects model (Section \ref{sec:normal}) and the $t$ multivariate random effects model (Section \ref{sec:t}) when the Berger and Bernardo reference prior and the Jeffreys prior are employed. The samples from the joint posterior distribution $\pi(\bmu,\bPsi|\bX)$ are drawn by applying two versions of the Metropolis-Hastings algorithm which are described in Section \ref{sec:normal} for the normal multivariate random effects model and denoted by Algorithm A and Algorithm B, respectively. For each type of the Metropolis-Hastings algorithm, the distributional class of the multivariate random effects model, and the chosen prior, $10^5$ realizations from the posterior distribution $\pi(\bmu,\bPsi|\bX)$ are drawn with 10\% used as burn-in sample.

\begin{table}[h!t]
\centering
\begin{tabular}{c||c|c||c|c}
  \hline \hline
  & \multicolumn{2}{c||}{Normal random effects model}&\multicolumn{2}{c}{t random effects model}\\
  \cline{2-5}
& $\mu_{1}$ (SBP) & $\mu_{2}$ (DBP) & $\mu_{1}$ (SBP) & $\mu_{2}$ (DBP)\\
  \hline
  \multicolumn{5}{c}{Jeffreys prior, Algorithm A}\\
  \hline
  post. mean & -9.79 & -4.05 & -10.15 & -4.67  \\
  post. median & -9.60 & -4.27 & -10.10 & -4.66 \\
  post. sd. & 0.88 & 0.93 & 1.08 & 0.60  \\
  cred. inter. & [-11.73, -8.00] & [-5.61, -2.66] & [-12.44,-8.11] & [-5.87,-3.51]\\
  \hline
  \multicolumn{5}{c}{Jeffreys prior, Algorithm B}\\
  \hline
  post. mean & -9.78 & -4.37 & -10.03 & -4.66  \\
  post. median & -9.84 & -4.37 & -9.97 & -4.65  \\
  post. sd. & 0.74 & 0.50 & 1.13 & 0.61  \\
  cred. inter. & [-11.46, -8.39] & [-5.38, -3.38] & [-12.46, -7.99] & [-5.90, -3.51]  \\
  \hline
  \multicolumn{5}{c}{Berger and Bernardo reference prior, Algorithm A}\\
  \hline
  post. mean & -9.81 & -4.49 & -10.11 & -4.67  \\
  post. median & -9.87 & -4.44 & -10.06 & -4.66  \\
  post. sd. & 1.04 & 0.59 & 1.16 & 0.64  \\
  cred. inter. & [-12.06, -8.00] & [-5.78, -3.42] & [-12.58, -7.97] & [-5.96, -3.43] \\
  \hline
  \multicolumn{5}{c}{Berger and Bernardo reference prior, Algorithm B}\\
  \hline
  post. mean & -9.70 & -4.51 & -10.08 & -4.68 \\
  post. median & -9.72 & -4.53 & -10.03 & -4.65 \\
  post. sd. & 1.01 & 0.58 & 1.13 & 0.64 \\
  cred. inter. & [-11.88, -8.06] & [-5.67, -3.49] &[-12.50, -7.97] & [-5.94, -3.42] \\
  \hline
  \multicolumn{5}{c}{REML}\\
  \hline
  estimator & -9.50 & -4.43 & -- & -- \\
  stand. error & 0.77 & 0.48 & -- & -- \\
  \hline
  \multicolumn{5}{c}{Method of moments, \citet{jackson2010extending} }\\
  \hline
  estimator & -9.13 & -4.30 & -- & --  \\
  stand. error & 0.54 & 0.36 & -- & -- \\
  \hline
  \multicolumn{5}{c}{Method of moments, \citet{jackson2013matrix}}\\
  \hline
  estimator & -9.17 & -4.31 & -- & --  \\
  stand. error & 0.55 & 0.36 & -- & -- \\
\hline \hline
\end{tabular}
\caption{Results of Bayesian inference (posterior mean, posterior median, posterior standard deviation, 95\% credible interval) for the parameters of the normal multivariate random effects model obtained for the data from Table \ref{tab:data} by employing the Berger and Bernardo reference prior and the Jeffrey prior. The samples from the posterior distributions are drawn by Algorithm A and Algorithm B defined in Section \ref{sec:normal}. The last three panels of the table include the results of the restrictive maximum likelihood estimator of \citet{chen2012method}, and two methods of moment estimators from \citet{jackson2010extending} and \citet{jackson2013matrix}. }
\label{tab:normal-t}
\end{table}

The first two columns of Table \ref{tab:normal-t} present the results for the normal multivariate random effects model, while the results for the $t$ multivariate random effects model with $d=3$ degrees of freedom are shown in the third and the fourth columns of the table. For each chosen prior, random effects model, and numerical algorithm to draw a sample from the posterior distribution, we compute the posterior mean and posterior median as two Bayesian point estimators for the overall mean vector together with the posterior standard deviation and 95\% probability symmetric credible interval. In the case of the $t$ multivariate random effects model we multiply $\bU_i$ by $\frac{d-2}{d}$ in order to ensure that the within-study covariance matrix calculated under the assumption of the $t$ multivariate random effects model coincides with the one given in Table \ref{tab:data}. Finally, for comparison purposes, we also include the results obtained by three approaches of the frequentist statistics which are given in Table 3 of \citet{jackson2013matrix}.

 \begin{figure}[h!t]
\centering
\begin{tabular}{cc}
\includegraphics[width=8cm]{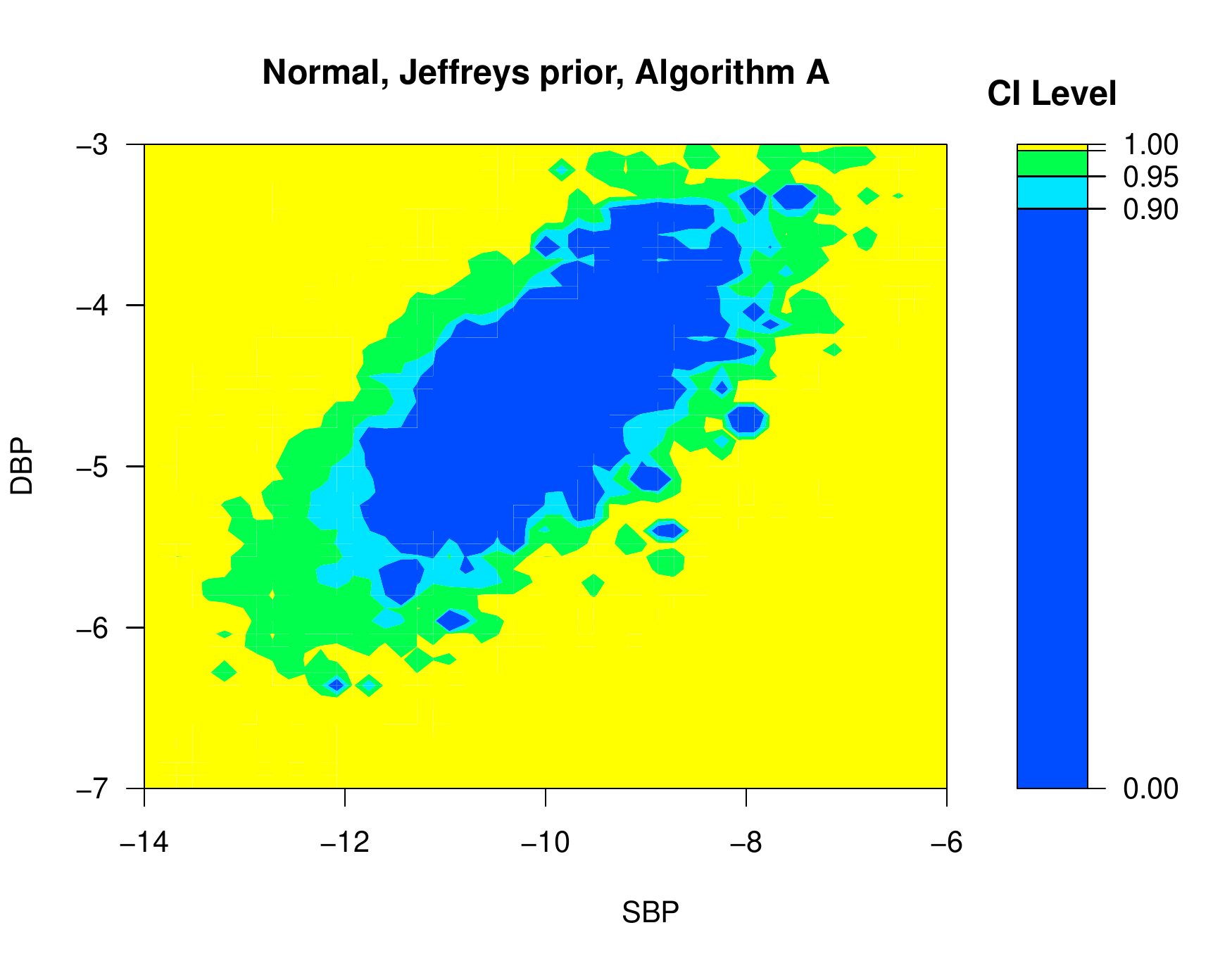}&\includegraphics[width=8cm]{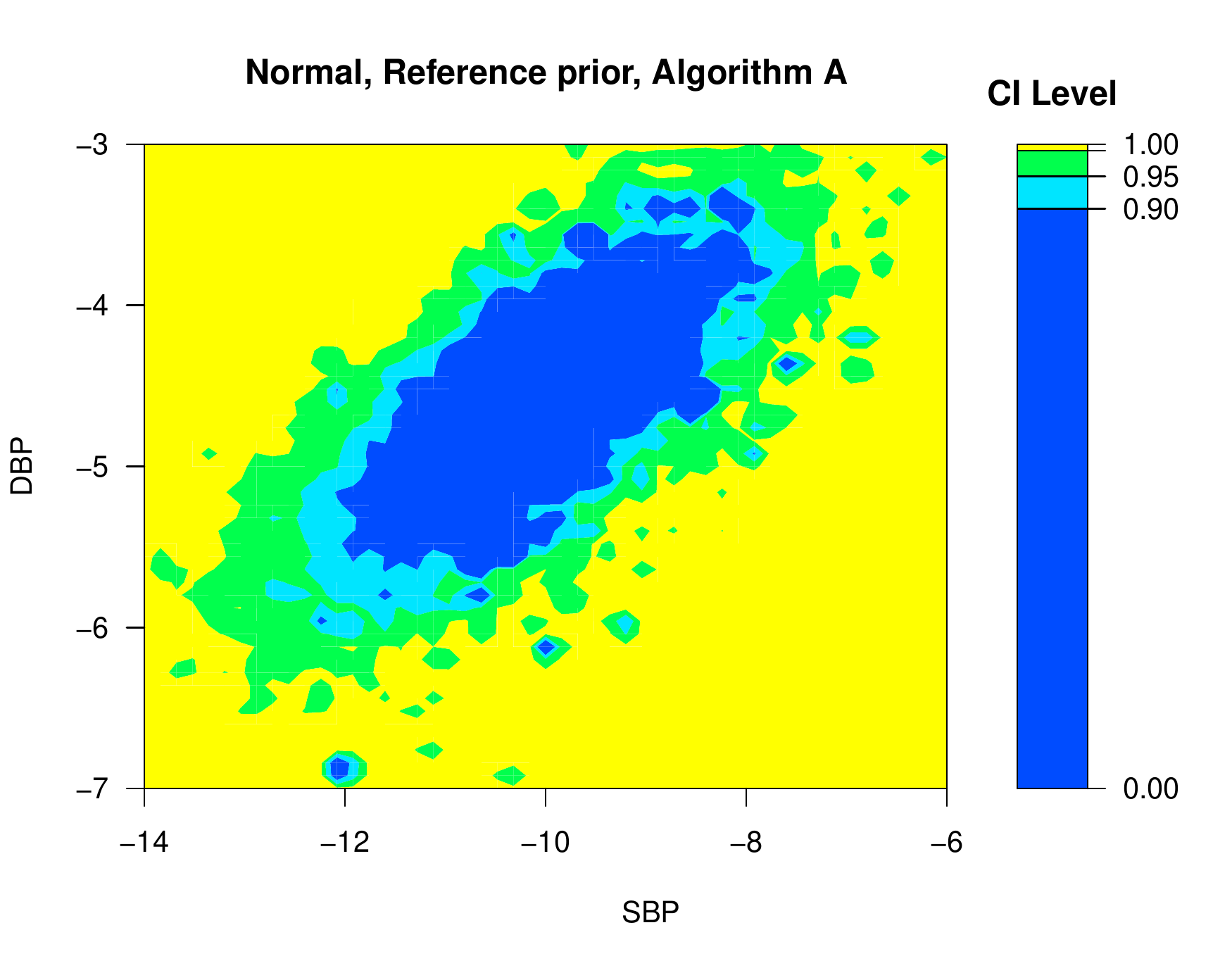}\\
\includegraphics[width=8cm]{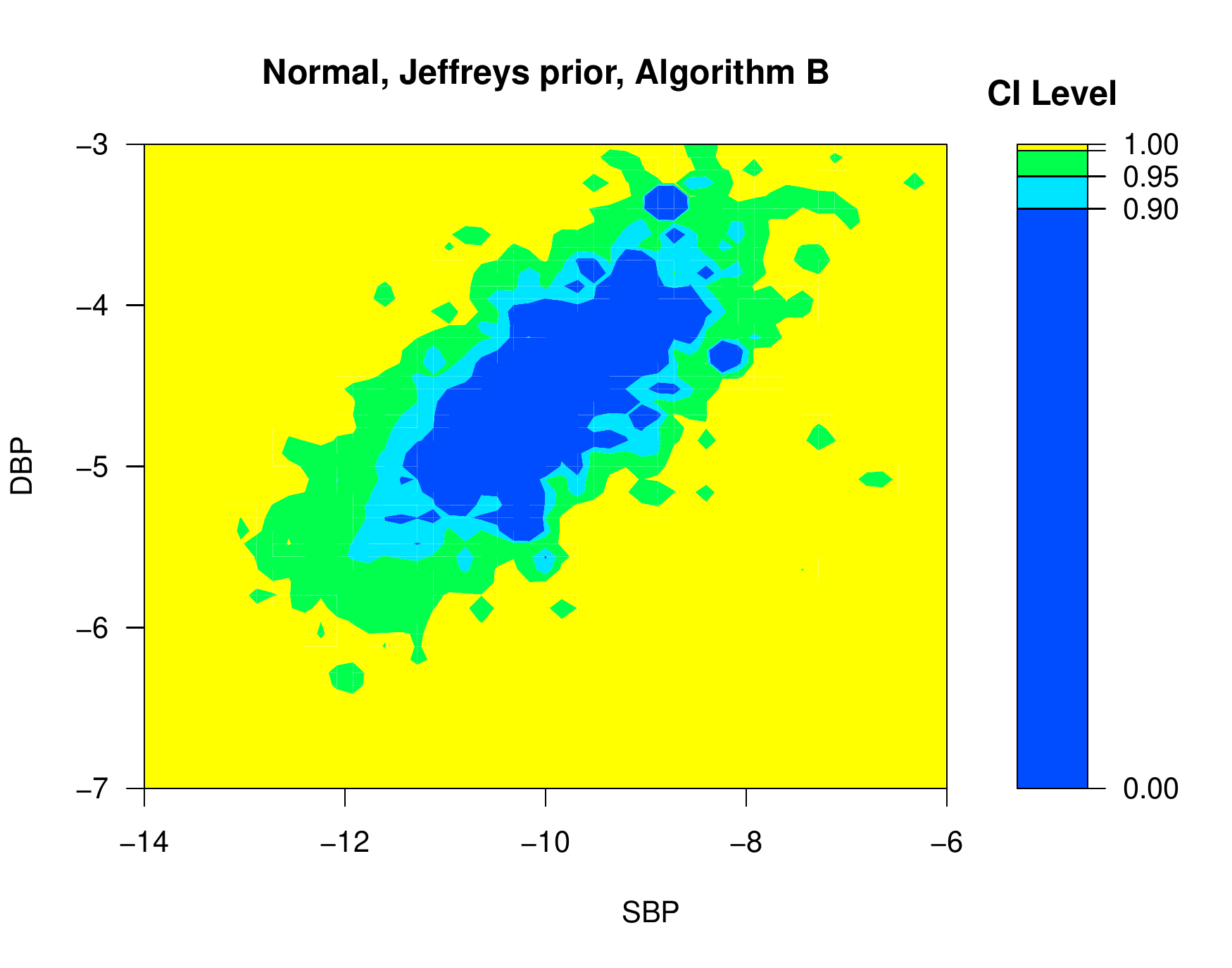}&\includegraphics[width=8cm]{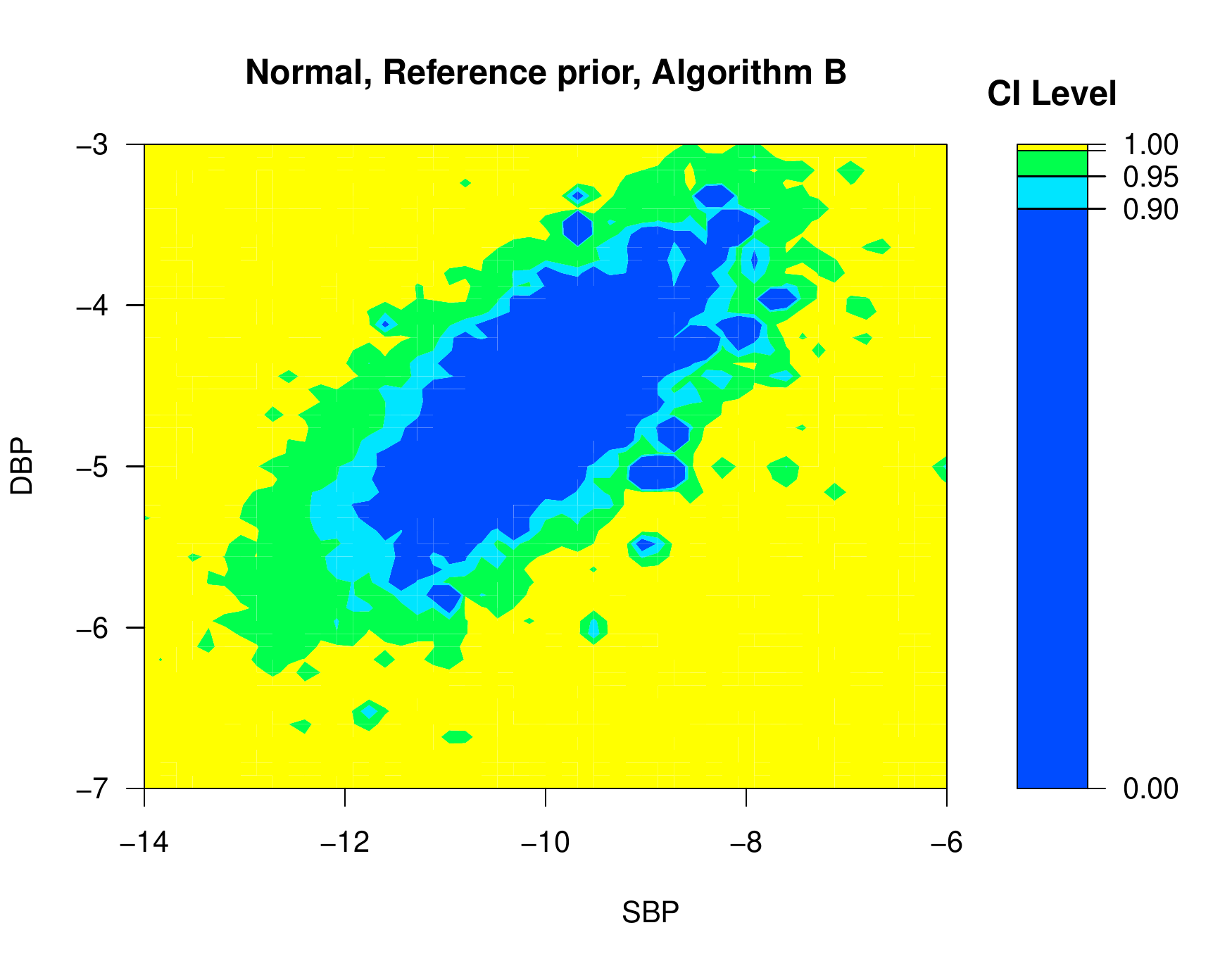}\\
\end{tabular}
 \caption{Credible sets for $\mu_1$ (SBP) and $\mu_2$ (DBP) obtained from the posterior distribution $\pi(\bmu|\bX)$ derived for the location parameters of the normal multivariate random effects model by employing the Berger and Bernardo reference prior and the Jeffrey prior and using data from Table \ref{tab:data}. The samples from the posterior distributions are drawn by Algorithm A and Algorithm B defined in Section \ref{sec:normal}.}
\label{fig:norm-emp-study}
 \end{figure}

All Bayesian point estimators derived under the assumption of the normal multivariate random effects model are very similar and they are almost always slightly smaller than those obtained by the frequentist approaches. In contrast, the computed Bayesian standard errors are larger than those computed by the frequentist approaches, especially when the two methods of moments are used in their computation. These results are in line with statistical theory and reflect the fact that the Bayesian methods in contrast to the frequentist approaches take automatically the uncertainty about the between-study covariance matrix $\bPsi$ into account, while the frequentist methods usually ignore that $\bPsi$ is an unknown nuisance parameter of the model which has to be estimated before the inferences for the overall mean vector are constructed. Finally, we have that the credible intervals obtained by employing the Berger and Bernardo reference prior are wider than those obtained by using the Jeffreys prior. Similar results are also obtained in the simulation study of Section \ref{sec:normal} (see, Figure \ref{fig:nor-mu1}), where the larger values of the coverage probabilities are documented for the Berger and Bernardo reference prior. Such a result was also documented in the univariate case in \citet{bodnar2019}. Finally, in the case of the $t$ multivariate random effects model, the estimated elements of the overall mean vector $\bmu$ become even smaller than those observed under the of the normal multivariate random effects model, while the corresponding Bayesian standard deviations increase reflecting the impact of heavy tails of the $t$-distribution.

\begin{figure}[h!t]
\centering
\begin{tabular}{cc}
\includegraphics[width=8cm]{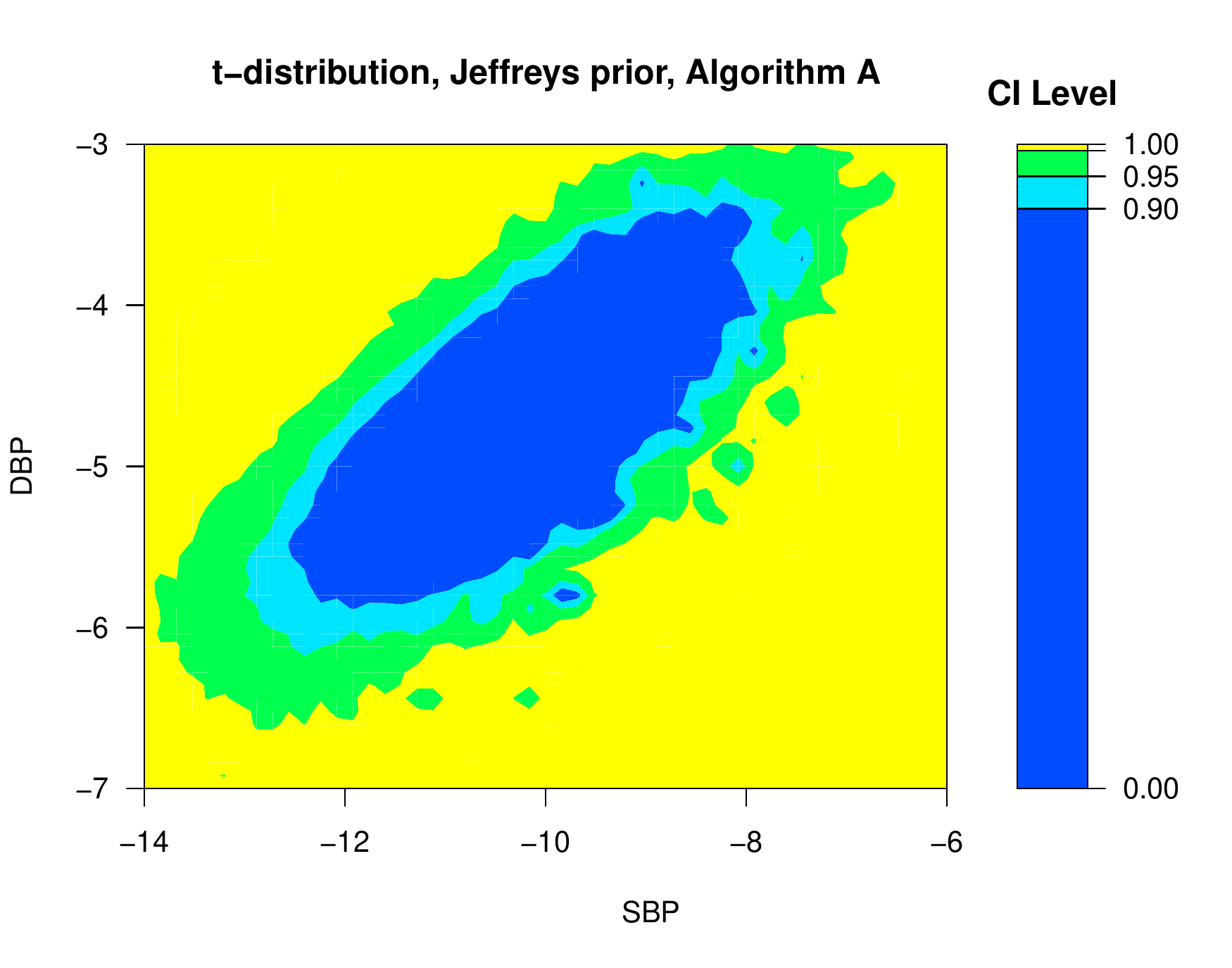}&\includegraphics[width=8cm]{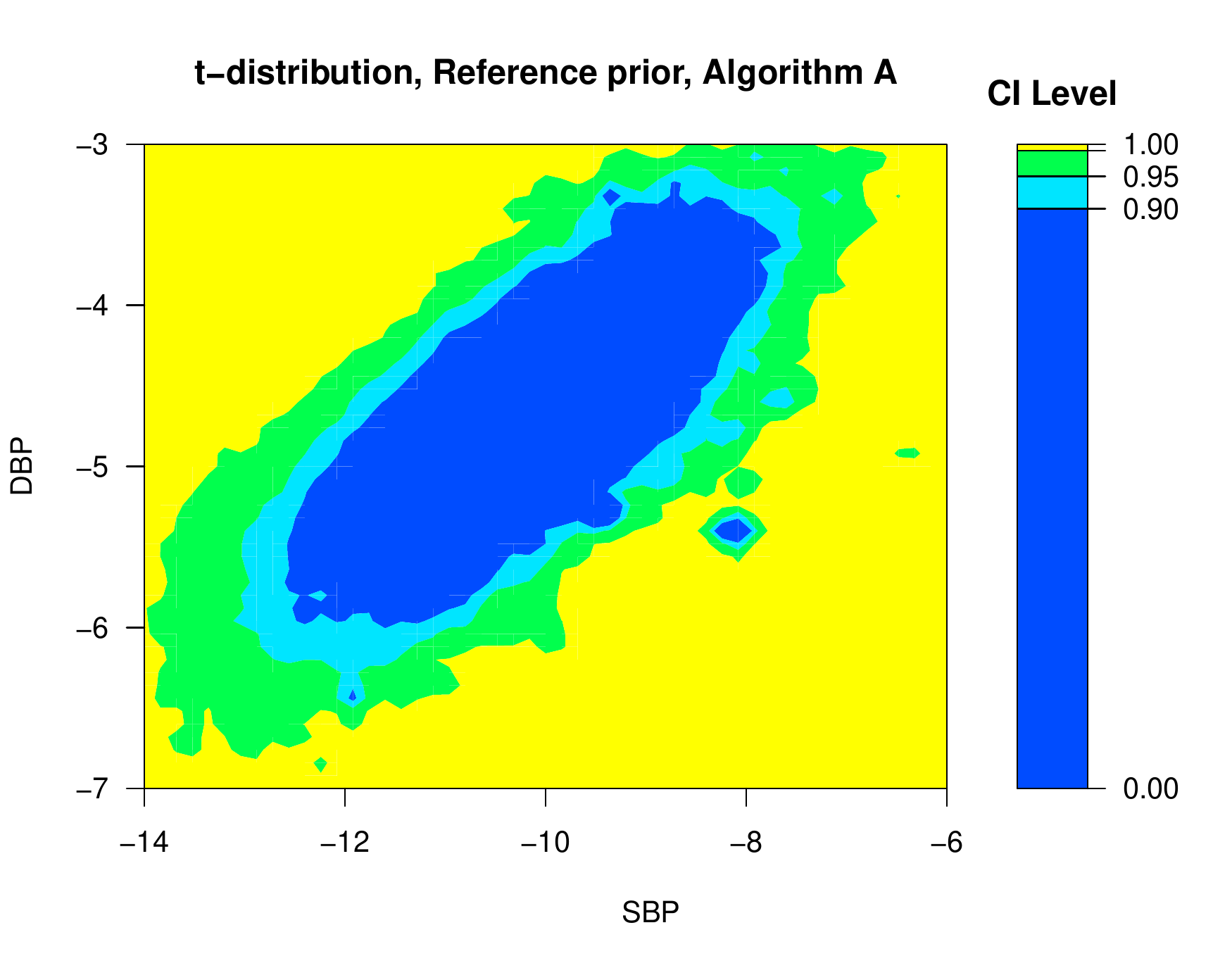}\\
\includegraphics[width=8cm]{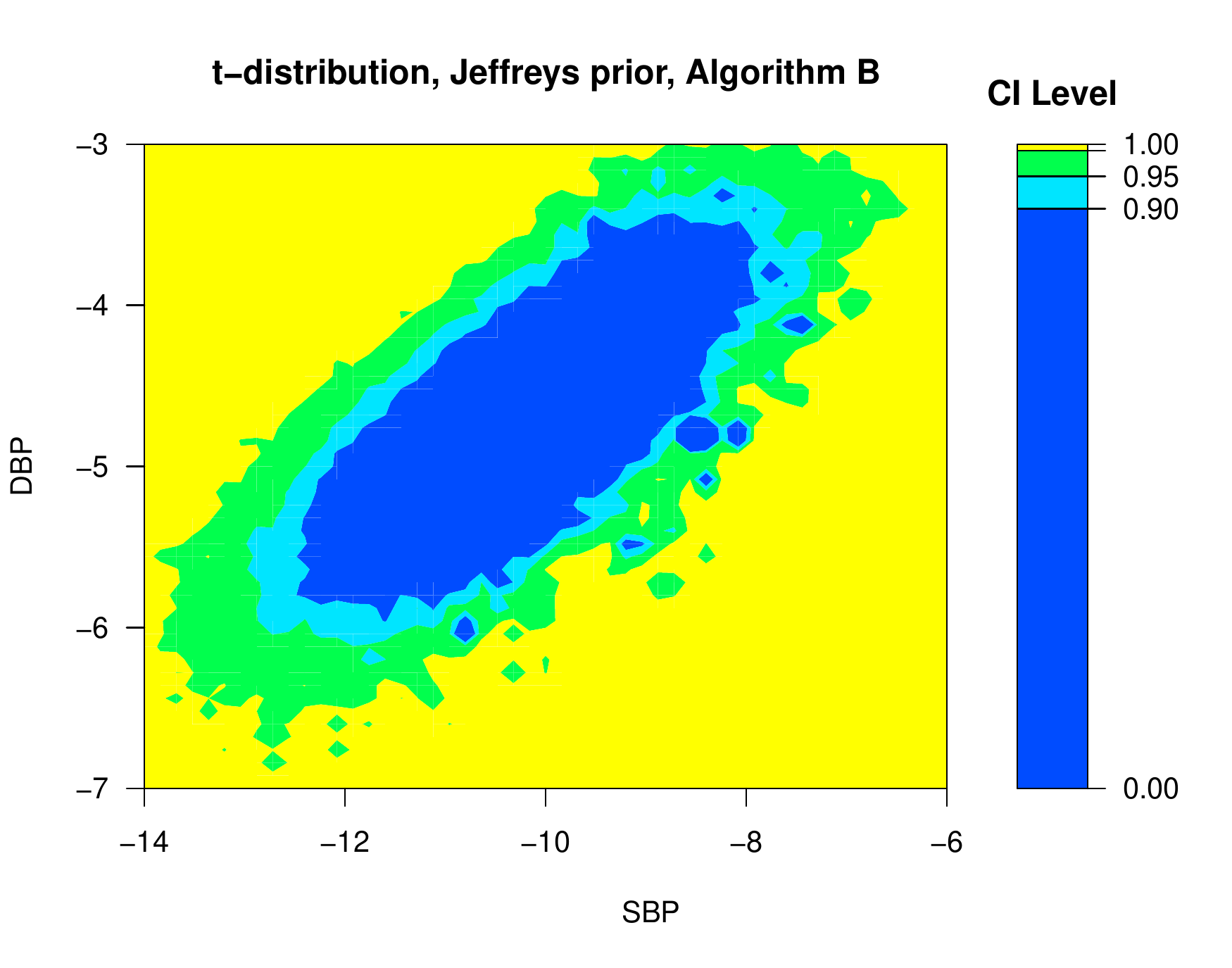}&\includegraphics[width=8cm]{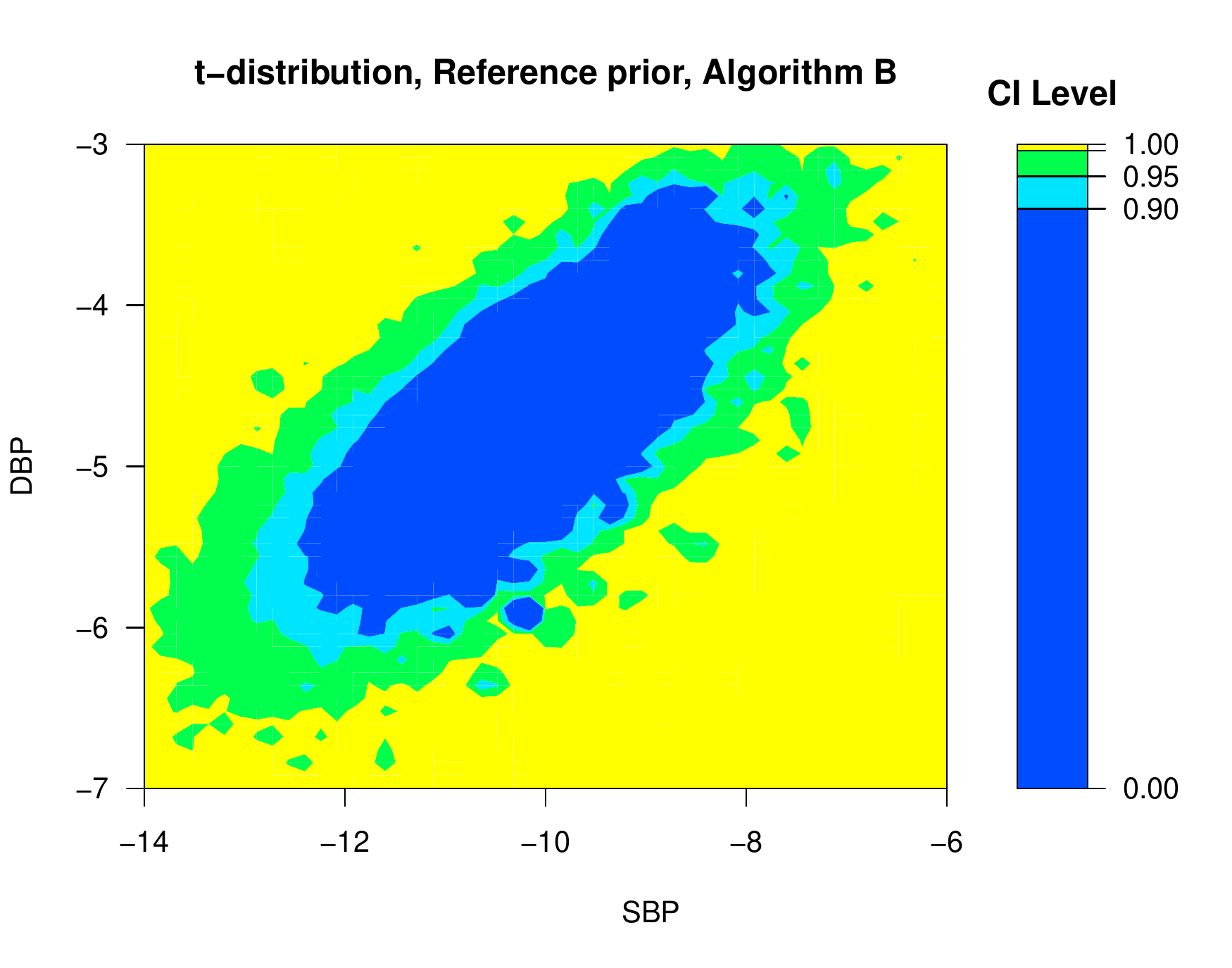}\\
\end{tabular}
 \caption{Credible sets for $\mu_1$ (SBP) and $\mu_2$ (DBP) obtained from the posterior distribution $\pi(\bmu|\bX)$ derived for the location parameters of the normal multivariate random effects model by employing the Berger and Bernardo reference prior and the Jeffrey prior and using data from Table \ref{tab:data}. The samples from the posterior distributions are drawn by Algorithm A and Algorithm B defined in Section \ref{sec:t}.}
\label{fig:t-emp-study}
 \end{figure}

The two-dimensional credible regions at significance levels 0.9 (dark blue), 0.95 (light blue), and 0.99 (green) for the elements of the mean vector $\bmu$ are depicted in Figure \ref{fig:norm-emp-study} for the normal multivariate random effects model and in Figure \ref{fig:t-emp-study} for the $t$ multivariate random effects model with $d=3$ degrees of freedom. The credible regions obtained under the assumption of the $t$-distribution are very similar, independently whether the Berger and Bernardo reference prior or the Jeffreys prior is employed, and the chosen algorithm to draw samples from the posterior distribution. That is not longer the case in Figure \ref{fig:norm-emp-study}, where the credible regions computed for the normal multivariate random effects model with Jeffreys prior and using Algorithm B appear to be slightly narrower. Finally, the credible intervals obtained under the assumption of the $t$-distribution are always wider reflecting the influence of heavy tails.

\section{Summary}\label{sec:sum}

Multivariate random effects model is one of the mostly used statistical tool in multivariate meta-analysis where the aim is to combine multiple values obtained in several studies into a single value. The parameters of the multivariate random effects model are usually estimated from the viewpoint of frequentist statistics, while several subjective Bayesian approaches based on the informative priors exist in the literature. Although both methods provide a good fit of the model to real data when the sample size is relatively large due to the asymptotic theorems of the frequentist statistics and the Bernstein-von-Mises theorem in Bayesian statistics, the results might be different when a sample of small size is present which is the case in the majority of meta-analyses. When the sample size is not large enough the asymptotic approximation might deviate considerable from the exact sample distribution of the estimated parameters or/and the influence of the chosen informative prior might have a significant impact on the posterior. Methods of the objective Bayesian statistics propose a solution to the challenges related to the insufficient sample size by endowing the models parameters with noninformative prior. In particular, the Berger and Bernardo reference prior is derived by maximizing the Shannon mutual information, i.e. by choosing the prior with the smallest impact on the posterior.

Flexible objective Bayesian procedures for the parameters of the multivariate random effects model are developed by employing two noninformative priors, the Berger and Bernardo reference prior and the Jeffreys prior. The analytical expressions of both the priors are obtained and the corresponding posteriors are derived. The results are established for a general class of multivariate random effects models which include the normal multivariate random effects model as a special case. Moreover, the propriety of the posteriors is proved under a weak condition, which requires that the sample size is larger than the dimension of the data generating-model only, independently of the specific class of the multivariate random effects model. Finally, the Metropolis-Hastings algorithm has been developed in the paper to draw samples from the posterior derived for the parameters of the model. Via simulations, it is shown that the considered numerical procedures lead to similar results in the case of the normal multivariate random effects model and the $t$ multivariate random effects model. In an empirical illustration based on data consisting of ten studies about the effectiveness of hypertension treatment for reducing blood pressure, a positive effect of the treatments on both the systolic blood pressure and diastolic blood pressure are found.

\section*{Acknowledgement}
This research was partially supported by National Institute of Standards and Technology (NIST) Exchange Visitor Program. The first author is grateful to the Statistical Engineering Division of National Institute of Standards and Technology (NIST) for providing an excellent and inspiring environment for research. This research is a part of the project \emph{Statistical Models and Data Reductions to Estimate Standard Atomic Weights and Isotopic Ratios for the Elements, and to Evaluate the Associated Uncertainties} (No. 2019-024-1-200), IUPAC (International Union of Pure and Applied Chemistry). Olha Bodnar also acknowledges valuable support from the internal grand (R\"{o}rlig resurs) of the \"{O}rebro University. Taras Bodnar was partially supported by the Swedish Research Council (VR) via the project \emph{Bayesian Analysis of Optimal Portfolios and Their Risk Measures}.

\section{Appendix}\label{sec:app}

In this section the proofs of theoretical results are given.

\begin{proof}[Proof of Theorem \ref{th1}:]
Under model (\ref{mult-rem}) the log-likelihood ignoring the constant term is given by
\begin{eqnarray}\label{likelihood_mult-rem}
L(\bmu,\bPsi; \bX )&=& -\frac{1}{2}\log(\text{det}(\bPsi\otimes\bI+\bU)) +\log \left(f\left(\text{vec}(\bX - \bmu \bi^\top)^\top (\bPsi \otimes \bI+\bU)^{-1}\text{vec}(\bX - \bmu \bi) \right)\right)\nonumber\\
&=&-\frac{1}{2}\sum_{i=1}^n\log(\text{det}(\bPsi+\bU_i)) + \log \left(f\left(\sum_{i=1}^{n}(\bx_i - \bmu)^\top (\bPsi + \bU_i)^{-1}(\bx_i - \bmu) \right)\right).
\end{eqnarray}
Hence,
\begin{equation*}\label{th1_pr_eq1}
\frac{\partial L(\bmu,\bPsi; \bX )}{\partial \bmu^\top}
=-2 \frac{f^\prime\left(\sum_{i=1}^{n}(\bx_i - \bmu)^\top (\bPsi + \bU_i)^{-1}(\bx_i - \bmu)\right)}
{f\left(\sum_{i=1}^{n}(\bx_i - \bmu)^\top (\bPsi + \bU_i)^{-1}(\bx_i - \bmu)\right)} \sum_{i=1}^{n}(\bx_i - \bmu)^\top(\bPsi + \bU_i)^{-1}.
\end{equation*}

Next, we compute the partial derivative of the log-likelihood function with respect to $\text{vech}(\bPsi)$, where $\text{vech}$ denote for the $vech$ operator with the following relation to $\text{vec}$ (see, \citet[p. 365]{Harville97})
\begin{eqnarray*}
\mathbf{vec}(\mathbf{\Psi})=\mathbf{G}_{p}\mathbf{vech}(\mathbf{\Psi}),
\end{eqnarray*}
where $\mathbf{G}_{p}$ is the duplication matrix (see, \citet[Section 3]{magnus2019matrix}).

From the properties of the differential of a determinant (see, \citet[Section 8]{magnus2019matrix}) we get
\begin{eqnarray*}
\mathbf{d} ~\text{det}(\bPsi + \bU_i)&=&\text{det}(\bPsi + \bU_i)(\text{vec}((\bPsi + \bU_i)^{-1}))^\top\mathbf{d}~ \text{vec}(\bPsi + \bU_i)\\
&=&\text{det}(\bPsi + \bU_i)(\text{vec}((\bPsi + \bU_i)^{-1}))^\top\mathbf{d}~ \text{vec}(\bPsi)\\
&=& \text{det}(\bPsi + \bU_i)(\text{vech}((\bPsi + \bU_i)^{-1}))^\top \mathbf{G}_{p}^\top\mathbf{G}_{p}\mathbf{d} ~\text{vech}(\bPsi).
\end{eqnarray*}
Thus,
\[\frac{\partial \text{det}(\bPsi + \bU_i)}{\partial \text{vech}(\bPsi)^\top}=\text{det}(\bPsi + \bU_i) \text{vech}((\bPsi + \bU_i)^{-1})^\top \mathbf{G}_{p}^\top\mathbf{G}_{p}\]
and
\begin{equation}\label{der_log_det}
\frac{\partial \log \text{det}(\bPsi + \bU_i)}{\partial \text{vech}(\bPsi)^\top}= \text{vech}((\bPsi + \bU_i)^{-1})^\top \mathbf{G}_{p}^\top\mathbf{G}_{p}.
\end{equation}

Similar, using the properties of the differential of inverse matrix (see, \citet[Section 8]{magnus2019matrix}), we obtain
\begin{eqnarray*}
\mathbf{d} ~(\bPsi + \bU_i)^{-1}&=&-(\bPsi + \bU_i)^{-1}\mathbf{d}~ (\bPsi + \bU_i)(\bPsi + \bU_i)^{-1}
=-(\bPsi + \bU_i)^{-1}\mathbf{d}~ \bPsi(\bPsi + \bU_i)^{-1}
\end{eqnarray*}
and, consequently,
\begin{eqnarray*}
\text{vec}(\mathbf{d} ~(\bPsi + \bU_i)^{-1})&=&\mathbf{d} ~\text{vec}((\bPsi + \bU_i)^{-1})\\
&=&-\text{vec} ((\bPsi + \bU_i)^{-1}\mathbf{d}~ \bPsi(\bPsi + \bU_i)^{-1})\\
&=&-\left((\bPsi + \bU_i)^{-1} \otimes (\bPsi + \bU_i)^{-1}\right)\mathbf{d}\text{vec}(\bPsi)\\
&=&-\left((\bPsi + \bU_i)^{-1} \otimes (\bPsi + \bU_i)^{-1}\right)\mathbf{G}_{p}\mathbf{d} ~\text{vech}(\bPsi).
\end{eqnarray*}
Hence,
\begin{equation}\label{der_inv}
\frac{\partial \text{vec}(\bPsi + \bU_i)^{-1}}{\partial \text{vech}(\bPsi)^\top}=- \left((\bPsi + \bU_i)^{-1} \otimes (\bPsi + \bU_i)^{-1}\right)\mathbf{G}_{p}.
\end{equation}

The application of \eqref{der_log_det} and \eqref{der_inv} leads to
\begin{eqnarray*}
&&\frac{\partial L(\bmu,\bPsi; \bX )}{\partial \text{vech}(\bPsi)^\top}
=-\frac{1}{2}\sum_{i=1}^n \frac{\partial\log(\text{det}(\bPsi+\bU_i))}{\partial \text{vech}(\bPsi)^\top}\\
&+& \frac{\partial}{\partial \text{vech}(\bPsi)^\top}\log \left(f\left(\sum_{i=1}^{n}(\bx_i - \bmu)^\top (\bPsi + \bU_i)^{-1}(\bx_i - \bmu) \right)\right)\\
&=& -\frac{1}{2}\sum_{i=1}^n \text{vech}((\bPsi + \bU_i)^{-1})^\top \mathbf{G}_{p}^\top\mathbf{G}_{p}
+ \frac{f^\prime\left(\sum_{i=1}^{n}(\bx_i - \bmu)^\top (\bPsi + \bU_i)^{-1}(\bx_i - \bmu)\right)}
{f\left(\sum_{i=1}^{n}(\bx_i - \bmu)^\top (\bPsi + \bU_i)^{-1}(\bx_i - \bmu)\right)} \\
&\times& \sum_{i=1}^{n}\left((\bx_i - \bmu)^\top \otimes (\bx_i - \bmu)^\top\right) \frac{\partial \text{vec}(\bPsi + \bU_i)^{-1}}{\partial \text{vech}(\bPsi)^\top}\\
&=& -\frac{1}{2}\sum_{i=1}^n\text{vec}((\bPsi + \bU_i)^{-1})^\top \mathbf{G}_{p}
- \frac{f^\prime\left(\sum_{i=1}^{n}(\bx_i - \bmu)^\top (\bPsi + \bU_i)^{-1}(\bx_i - \bmu)\right)}
{f\left(\sum_{i=1}^{n}(\bx_i - \bmu)^\top (\bPsi + \bU_i)^{-1}(\bx_i - \bmu)\right)} \\
&\times& \sum_{i=1}^{n}\left((\bx_i - \bmu)^\top \otimes (\bx_i - \bmu)^\top\right) \left((\bPsi + \bU_i)^{-1} \otimes (\bPsi + \bU_i)^{-1}\right)\mathbf{G}_{p}
\,,
\end{eqnarray*}

The first block of the Fisher information matrix is given by
\begin{eqnarray}\label{F11_app_eq1}
\bF_{11}&=&\E\left[\left(\frac{\partial L(\bmu,\bPsi; \bX )}{\partial \bmu^\top}\right)^\top\frac{\partial L(\bmu,\bPsi; \bX )}{\partial \bmu^\top}\right]
=4\sum_{i=1}^{n}\sum_{j=1}^{n}(\bPsi + \bU_i)^{-1} \bH_{ij} (\bPsi + \bU_j)^{-1}
\end{eqnarray}
with
\begin{eqnarray*}
\bH_{ij}&=& \left(\prod_{k=1}^{n}\sqrt{\text{det}(\bPsi + \bU_k)}^{-1/2}\right)\int_{\R^{np}} (\bx_i - \bmu)(\bx_j - \bmu)^\top \\
&\times&
\left(\frac{f^\prime\left(\sum_{k=1}^{n}(\bx_k - \bmu)^\top (\bPsi + \bU_k)^{-1}(\bx_k - \bmu)\right)}
{f\left(\sum_{k=1}^{n}(\bx_k - \bmu)^\top (\bPsi + \bU_k)^{-1}(\bx_k - \bmu)\right)} \right)^2 f\left(\sum_{k=1}^{n}(\bx_k - \bmu)^\top (\bPsi + \bU_k)^{-1}(\bx_k - \bmu)\right) \mathbf{d}\bX\\
&=& \int_{\R^{np}} (\bPsi + \bU_i)^{1/2}\bz_i \bz_j^\top(\bPsi + \bU_j)^{1/2}
\left(\frac{f^\prime\left(\sum_{k=1}^{n}\bz_k^\top\bz_k\right)}
{f\left(\sum_{k=1}^{n}\bz_k^\top\bz_k\right)} \right)^2 f\left(\sum_{k=1}^{n}\bz_k^\top\bz_k\right) \mathbf{d}\bZ,
\end{eqnarray*}
where the last equality follows from the transformation $\bx_k=\bmu+(\bPsi + \bU_k)^{1/2}\bz_k$ for $k=1,...,n$ with the Jacobian equal to $ \left(\prod_{k=1}^{n}\sqrt{\text{det}(\bPsi + \bU_k)}^{1/2}\right)$ and $\bZ=(\bz_1,\ldots,\bz_n)$. Moreover, we get that
\begin{equation*}
\text{vec}(\bZ)=(\bPsi \otimes \bI+\bU)^{-1/2}\text{vec}(\bX - \bmu \bi)
\end{equation*}
and, consequently, $\bZ\sim E_{p,n}(\mathbf{O}_{p,n},\bI_{p\times n},f)$ with $\mathbf{O}_{p,n}$ $p \times n$ zero matrix (see, Theorem 2.13 in \citet{GuptaVargaBodnar}).

Since
\[\bz_j^\top(\bPsi + \bU_j)^{1/2}
\left(\frac{f^\prime\left(\sum_{k=1}^{n}\bz_k^\top\bz_k\right)}
{f\left(\sum_{k=1}^{n}\bz_k^\top\bz_k\right)} \right)^2 f\left(\sum_{k=1}^{n}\bz_k^\top\bz_k\right)\]
is an odd function on a symmetric region, we get that $\bH_{ij}=\mathbf{O}_{p,p}$. Furthermore, using that $\bz_i/\sqrt{\text{vec}(\bZ)^\top\text{vec}(\bZ)}$ and $\text{vec}(\bZ)^\top\text{vec}(\bZ)$ with $\text{vec}(\bZ)^\top\text{vec}(\bZ)=\sum_{k=1}^{n}\bz_k^\top\bz_k$ are independent (see, \citet[Theorem 2.15]{GuptaVargaBodnar}), we obtain
\begin{eqnarray}\label{Hii}
\bH_{ii}&=& J_1 (\bPsi + \bU_i)^{1/2}\E\left(\frac{\bz_i\bz_i^\top}{\text{vec}(\bZ)^\top\text{vec}(\bZ)}\right)(\bPsi + \bU_j)^{1/2},
\end{eqnarray}
where
\begin{eqnarray}\label{J1_app}
J_1&=&\E\left(\text{vec}(\bZ)^\top\text{vec}(\bZ)\left(\frac{f^\prime\left(\text{vec}(\bZ)^\top\text{vec}(\bZ)\right)}
{f\left(\text{vec}(\bZ)^\top\text{vec}(\bZ)\right)} \right)^2 \right).
\end{eqnarray}

To this end, we note that the distribution of $\bz_i/\sqrt{\text{vec}(\bZ)^\top\text{vec}(\bZ)}$ does not depend on the type of elliptical distribution, i.e., on density generator $f(.)$, and it is the same as in case of the normal distribution. Let $\bZ_N \sim \mathcal{N}_{p,n}(\mathbf{O}_{p,n},\bI_{p\times n})$ (standard matrix-variate normal distribution) and let $\bz_{i,N}$ denote its $i$th column. Then, it holds that
\begin{eqnarray}\label{Ezz}
&&\E\left(\frac{\bz_i\bz_i^\top}{\text{vec}(\bZ)^\top\text{vec}(\bZ)}\right)=\E\left(\frac{\bz_{i,N}\bz_{i,N}^\top}{\text{vec}(\bZ_N)^\top\text{vec}(\bZ_N)}\right)
\frac{\E(\text{vec}(\bZ_N)^\top\text{vec}(\bZ_N))}{\E(\text{vec}(\bZ_N)^\top\text{vec}(\bZ_N))} \nonumber\\
&=&\frac{\E\left(\bz_{i,N}\bz_{i,N}^\top\right)}{\E(\text{vec}(\bZ_N)^\top\text{vec}(\bZ_N))}= \frac{1}{pn} \bI_p,
\end{eqnarray}
where we used that $\text{vec}(\bZ_N)^\top\text{vec}(\bZ_N)\sim \chi^2_{pn}$ ($\chi^2$-distribution with $pn$ degrees of freedom).

Summarizing \eqref{F11_app_eq1}, \eqref{Hii}, and \eqref{J1_app}, we get
\begin{equation*}
\bF_{11}=\frac{4J_1}{pn}\sum_{i=1}^{n}(\bPsi + \bU_i)^{-1}.
\end{equation*}

Next, we compute the nondiagonal block $\bF_{21}$ of the Fisher information matrix. The transformation $\bx_k=\bmu+(\bPsi + \bU_k)^{1/2}\bz_k$ for $k=1,...,n$ yields
\begin{eqnarray*}\label{F21_app_eq1}
\bF_{21}&=&\E\left[\left(\frac{\partial L(\bmu,\bPsi; \bX )}{\partial \bmu^\top}\right)^\top\frac{\partial L(\bmu,\bPsi; \bX )}{\partial \text{vech}(\bPsi)^\top}\right]
=\sum_{i=1}^{n}\sum_{j=1}^{n}(\bPsi + \bU_i)^{-1/2} \left(\int_{\R^{pn}} \mathbf{M}_{ij}(\bZ)\mathbf{d} \bZ\right) \mathbf{G}_p,
\end{eqnarray*}
where
\begin{eqnarray*}
\mathbf{M}_{ij}(\bZ)&=&\frac{f^\prime\left(\sum_{k=1}^{n}\bz_k^\top\bz_k\right)}{f\left(\sum_{k=1}^{n}\bz_k^\top\bz_k\right)}\bz_i
\Bigg[\text{vec}((\bPsi + \bU_j)^{-1})^\top \\
&+&2\frac{f^\prime\left(\sum_{k=1}^{n}\bz_k^\top\bz_k\right)}{f\left(\sum_{k=1}^{n}\bz_k^\top\bz_k\right)}
\left(\bz_j^\top \otimes \bz_j^\top\right) \left((\bPsi + \bU_j)^{-1/2} \otimes (\bPsi + \bU_j)^{-1/2}\right)
\Bigg]
\end{eqnarray*}
which is an odd function on a symmetric region. Hence,
\[\int_{\R^{pn}} \mathbf{M}_{ij}(\bZ)\mathbf{d} \bZ=\mathbf{O} \quad \text{for all} \quad i,j \in \{1,...,n\},\]
and, consequently, $\bF_{21}=\mathbf{O}$.

Similarly, using the transformation $\bx_k=\bmu+(\bPsi + \bU_k)^{1/2}\bz_k$ for $k=1,...,n$, we get
\begin{eqnarray*}
\bF_{22}&=&\E\left[\left(\frac{\partial L(\bmu,\bPsi; \bX )}{\partial \text{vech}(\bPsi)^\top}\right)^\top\frac{\partial L(\bmu,\bPsi; \bX )}{\partial \text{vech}(\bPsi)^\top}\right]\\
&=&\E\Bigg\{\mathbf{G}_p^\top\Bigg[\sum_{i=1}^{n}\sum_{j=1}^{n} \frac{1}{4} \text{vec}((\bPsi + \bU_i)^{-1})\text{vec}((\bPsi + \bU_j)^{-1})^\top \\
&+&\frac{1}{2}\frac{f^\prime\left(\sum_{k=1}^{n}\bz_k^\top\bz_k\right)}{f\left(\sum_{k=1}^{n}\bz_k^\top\bz_k\right)}
\left((\bPsi + \bU_i)^{-1/2}\otimes (\bPsi + \bU_i)^{-1/2}\right)\left(\bz_i \otimes \bz_i\right)\text{vec}((\bPsi + \bU_j)^{-1})^\top \\
&+&\frac{1}{2}\frac{f^\prime\left(\sum_{k=1}^{n}\bz_k^\top\bz_k\right)}{f\left(\sum_{k=1}^{n}\bz_k^\top\bz_k\right)}
 \text{vec}((\bPsi + \bU_i)^{-1})\left(\bz_j^\top \otimes \bz_j^\top\right) \left((\bPsi + \bU_j)^{-1/2} \otimes (\bPsi + \bU_j)^{-1/2}\right)
 \\
&+&\left(\frac{f^\prime\left(\sum_{k=1}^{n}\bz_k^\top\bz_k\right)}{f\left(\sum_{k=1}^{n}\bz_k^\top\bz_k\right)}\right)^2
\left((\bPsi + \bU_i)^{-1/2} \otimes (\bPsi + \bU_i)^{-1/2}\right)\\
&\times&
\left(\bz_i \otimes \bz_i\right) \left(\bz_j^\top \otimes \bz_j^\top\right) \left((\bPsi + \bU_j)^{-1/2} \otimes (\bPsi + \bU_j)^{-1/2}\right)
\mathbf{G}_p\Bigg]\Bigg\}\\
&=&\mathbf{G}_p^\top\Bigg[\sum_{i=1}^{n}\sum_{j=1}^{n} \frac{1}{4} \text{vec}((\bPsi + \bU_i)^{-1})\text{vec}((\bPsi + \bU_j)^{-1})^\top \\
&+&\frac{1}{2}\left((\bPsi + \bU_i)^{-1/2}\otimes (\bPsi + \bU_i)^{-1/2}\right)\mathbf{D}_{i}\text{vec}((\bPsi + \bU_j)^{-1})^\top \\
&+&\frac{1}{2} \text{vec}((\bPsi + \bU_i)^{-1})\mathbf{D}_{j}^\top \left((\bPsi + \bU_j)^{-1/2} \otimes (\bPsi + \bU_j)^{-1/2}\right) \\
&+&\left((\bPsi + \bU_i)^{-1/2} \otimes (\bPsi + \bU_i)^{-1/2}\right)\mathbf{D}_{ij} \left((\bPsi + \bU_j)^{-1/2} \otimes (\bPsi + \bU_j)^{-1/2}\right)
\mathbf{G}_p\Bigg]
\end{eqnarray*}
with
\begin{equation}\label{Di}
\mathbf{D}_{i}=\E\left(\frac{f^\prime\left(\text{vec}(\bZ)^\top\text{vec}(\bZ)\right)}{f\left(\text{vec}(\bZ)^\top\text{vec}(\bZ)\right)}\left(\bz_i \otimes \bz_i\right)\right)
=J\E\left(\frac{\left(\bz_i \otimes \bz_i\right)}{\text{vec}(\bZ)^\top\text{vec}(\bZ)}\right)
\end{equation}
and
\begin{equation}\label{Dij}
\mathbf{D}_{ij}=\E\left(\left(\frac{f^\prime\left(\text{vec}(\bZ)^\top\text{vec}(\bZ)\right)}{f\left(\text{vec}(\bZ)^\top\text{vec}(\bZ)\right)}\right)^2 \left(\bz_i\otimes \bz_i\right) \left(\bz_j^\top \otimes \bz_j^\top\right) \right)
=J_2\E\left(\frac{\left(\bz_i \otimes \bz_i\right) \left(\bz_j^\top \otimes \bz_j^\top\right)}{(\text{vec}(\bZ)^\top\text{vec}(\bZ))^2} \right)
\end{equation}
where we use that $\bZ/\sqrt{\text{vec}(\bZ)^\top\text{vec}(\bZ)}$ and $\text{vec}(\bZ)^\top\text{vec}(\bZ)$ are independent (see, \citet[Theorem 2.15]{GuptaVargaBodnar}), $\sum_{k=1}^{n}\bz_k^\top\bz_k=\text{vec}(\bZ)^\top\text{vec}(\bZ)$, and define
\begin{eqnarray}\label{J2_app}
J_2&=&\E\left((\text{vec}(\bZ)^\top\text{vec}(\bZ))^2\left(\frac{f^\prime\left(\text{vec}(\bZ)^\top\text{vec}(\bZ)\right)}
{f\left(\text{vec}(\bZ)^\top\text{vec}(\bZ)\right)} \right)^2 \right).
\end{eqnarray}
and
\begin{eqnarray*}
J&=&\E\left(\text{vec}(\bZ)^\top\text{vec}(\bZ)\frac{f^\prime\left(\text{vec}(\bZ)^\top\text{vec}(\bZ)\right)}
{f\left(\text{vec}(\bZ)^\top\text{vec}(\bZ)\right)} \right).
\end{eqnarray*}

Let $R^2=\text{vec}(\bZ)^\top\text{vec}(\bZ)$. Then the density of $R^2$ is given by $f_{R^2}(r)=r^{pn/2-1}f(r)$ (cf., Theorem 2.16 in \cite{GuptaVargaBodnar})
 and
\begin{eqnarray*}
J&=&E\left(R^2 \frac{f^\prime(R^2)}{f(R^2)}\right)=\int_0^\infty r \frac{f^\prime(r)}{f(r)} r^{pn/2-1}f(r) \mbox{d}r =\int_0^\infty r^{pn/2} f^\prime(r) \mbox{d}r\\
&=&r^{pn/2} f(r)\Bigg|_0^\infty-\frac{np}{2}\int_0^\infty r^{pn/2-1} f(r) \mbox{d}r=-\frac{pn}{2} \,,
\end{eqnarray*}
where we use that $\int_0^\infty r^{n/2-1} f(r) \mbox{d}r=1$ because $r^{n/2-1} f(r)$ is the density of $R^2$ which also implies that $r^{n/2-1} f(r)=o(r^{-1})$ as $r \rightarrow \infty$, i.e. $r^{n/2} f(r)=o(1)$ as $r \rightarrow \infty$.

Moreover, since the distribution of $\bz_i/\sqrt{\text{vec}(\bZ)^\top\text{vec}(\bZ)}$ does not depend on the type of elliptical distribution, we get with $\bZ_N \sim \mathcal{N}_{p,n}(\mathbf{O}_{p,n},\bI_{p\times n})$ that
\begin{eqnarray*}
&&\E\left(\frac{\left(\bz_i \otimes \bz_i\right)}{\text{vec}(\bZ)^\top\text{vec}(\bZ)}\right)
=\E\left(\frac{\left(\bz_{i,N} \otimes \bz_{i,N}\right)}{\text{vec}(\bZ_N)^\top\text{vec}(\bZ_N)}\right)\\
&=&\E\left(\frac{\left(\bz_{i,N} \otimes \bz_{i,N}\right)}{\text{vec}(\bZ_N)^\top\text{vec}(\bZ_N)}\right)
\frac{\E\left(\text{vec}(\bZ_N)^\top\text{vec}(\bZ_N)\right)}{\E\left(\text{vec}(\bZ_N)^\top\text{vec}(\bZ_N)\right)}
=\E\left(\bz_{i,N} \otimes \bz_{i,N}\right) \frac{1}{pn}= \frac{1}{pn}\text{vec}(\bI_p),
\end{eqnarray*}
where the last equality follows from the results in \citet[p. 67]{ghazal2000second}

Similarly for $i \neq j \in\{1,...,n\}$ we get
\begin{eqnarray*}
&&\E\left(\frac{\left(\bz_i \otimes \bz_i\right) \left(\bz_j^\top \otimes \bz_j^\top\right)}{(\text{vec}(\bZ)^\top\text{vec}(\bZ))^2} \right)\\
&=&\E\left(\left(\bz_{i,N} \otimes \bz_{i,N}\right) \left(\bz_{j,N}^\top \otimes \bz_{j,N}^\top\right)\right)\frac{1}{\E\left((\text{vec}(\bZ_N)^\top\text{vec}(\bZ_N))^2\right)}\\
&=&\E\left(\bz_{i,N} \otimes \bz_{i,N}\right) \E\left(\bz_{j,N}^\top \otimes \bz_{j,N}^\top\right)\frac{1}{2pn+p^2n^2}\\
&=&\frac{1}{2pn+p^2n^2} \text{vec}(\bI_p)\text{vec}(\bI_p)^\top.
\end{eqnarray*}

Finally, for $i=j \in \{1,...,n\}$ we obtain
\begin{eqnarray*}
&&\E\left(\frac{\left(\bz_i \otimes \bz_i\right) \left(\bz_i^\top \otimes \bz_k^\top\right)}{(\text{vec}(\bZ)^\top\text{vec}(\bZ))^2} \right)\\
&=&\E\left(\left(\bz_{i,N} \otimes \bz_{i,N}\right) \left(\bz_{i,N}^\top \otimes \bz_{i,N}^\top\right)\right)\frac{1}{\E\left((\text{vec}(\bZ_N)^\top\text{vec}(\bZ_N))^2\right)}\\
&=& (\bI_{p^2}+\mathbf{K}_p+\text{vec}(\bI_p)\text{vec}(\bI_p)^\top) \frac{1}{2pn+p^2n^2},
\end{eqnarray*}
where the last equality follows from Theorem 4.1 in \citet{magnus1979commutation} and $\mathbf{K}_p$ is the commutation matrix.

The properties of the $vec$ operator and the properties of the duplication and commutation matrices, namely (see, \citet[Theorem 3.12 and Theorem 3.13]{magnus2019matrix})
\[\frac{1}{2}((\bI_{p^2}+\mathbf{K}_p)(\mathbf{A} \otimes \mathbf{A}) \mathbf{G}_p=(\mathbf{A} \otimes \mathbf{A}) \mathbf{G}_p \]
 yield
\begin{eqnarray*}
&&\left((\bPsi + \bU_i)^{-1/2}\otimes (\bPsi + \bU_i)^{-1/2}\right)\text{vec}(\bI_p)\\
&=&\text{vec}\left((\bPsi + \bU_i)^{-1/2}(\bPsi + \bU_i)^{-1/2}\right)=\text{vec}\left((\bPsi + \bU_i)^{-1}\right)
\end{eqnarray*}
and
\begin{eqnarray*}
&&\left((\bPsi + \bU_i)^{-1/2} \otimes (\bPsi + \bU_i)^{-1/2}\right)(\bI_{p^2}+K_p)\left((\bPsi + \bU_i)^{-1/2} \otimes (\bPsi + \bU_i)^{-1/2}\right)\mathbf{G}_p\\
&=&2\left((\bPsi + \bU_i)^{-1} \otimes (\bPsi + \bU_i)^{-1}\right)\mathbf{G}_p.
\end{eqnarray*}

Hence,
\begin{eqnarray*}
\bF_{22}
&=&\mathbf{G}_p^\top\Bigg[ \left( \frac{J_2}{2pn+p^2n^2} -\frac{1}{4}\right) \text{vec}\left(\sum_{i=1}^{n}(\bPsi + \bU_i)^{-1}\right)\text{vec}\left(\sum_{j=1}^{n}(\bPsi + \bU_j)^{-1}\right)^\top \\
&+& \frac{2J_2}{2pn+p^2n^2}\sum_{i=1}^{n}\left((\bPsi + \bU_i)^{-1} \otimes (\bPsi + \bU_i)^{-1}\right)
\Bigg]\mathbf{G}_p,
\end{eqnarray*}
which complete the proof of the theorem.
\end{proof}

\begin{proof}[Proof of Lemma~\ref{lem1}:]
First, we show that $(\bA+\bB)\otimes (\bA +\bB)-\bA\otimes \bA \ge \mathbf{0}$. For any vector $\bc=\text{vec}(\bC)$ we get
\begin{eqnarray*}
&&\bc^\top \left( (\bA+\bB)\otimes (\bA +\bB)-\bA\otimes \bA \right) \bc\\
 &=&\text{vec}(\bC)^\top \left( (\bA+\bB)\otimes (\bA +\bB)-\bA\otimes \bA \right) \text{vec}(\bC)\\
&=& \text{tr}(\bC^\top (\bA+\bB) \bC (\bA+\bB))- tr(\bC^\top \bA \bC \bA)\\
&=& \text{tr}(\bC^\top (\bA+\bB) \bC (\bA+\bB))- tr(\bC^\top (\bA+\bB) \bC \bA)+tr(\bC^\top (\bA+\bB) \bC \bA) tr(\bC^\top \bA \bC \bA)\\
&=& \text{tr}(\bC^\top (\bA+\bB) \bC \bB)+ tr(\bC^\top \bB \bC \bA)\\
&=& \text{tr}( (\bA+\bB)^{1/2} \bC \bB\bC^\top(\bA+\bB)^{1/2})+ tr(\bA^{1/2}\bC^\top \bB \bC \bA^{1/2})\ge 0.
\end{eqnarray*}
Since $\bA$ and $\bB$ are symmetric, we get that $(\bA+\bB)\otimes (\bA +\bB)$ and $\bA\otimes \bA$ are symmetric. Hence, the application of Theorem 18.3.4 in \citet{Harville97} leads to the second statement of the lemma.

\end{proof}
{\footnotesize
\bibliography{Bibfile2019-08-19}
}
\end{document}